\preprint{MIT-CTP/5495}
\title{Maximal Entangling Rates from Holography}
\author[1,2]{\AA{}smund Folkestad}
\author[3,4]{and Aditya Dhumuntarao}
\emailAdd{afolkest@mit.edu}
\emailAdd{aditya@brandeis.edu}
\affiliation[1]{Center for Theoretical Physics, Massachusetts Institute of Technology, Cambridge, MA 02139, USA}
\affiliation[2]{Kavli Institute for Theoretical Physics, University of California, Santa Barbara, CA 93106, USA}
\affiliation[3]{School of Physics and Astronomy, University of Minnesota, Minneapolis, MN, 55455, USA}
\affiliation[4]{Martin A. Fisher School of Physics, Brandeis University, Waltham, MA, 02453, USA}
\abstract{
    We prove novel speed limits on the growth of entanglement, equal-time
    correlators, and spacelike Wilson loops in spatially uniform time-evolving
    states in strongly coupled CFTs with holographic duals.  These bounds can
    also be viewed as quantum weak energy conditions.  Several of the speed
    limits are valid for regions of arbitrary size and with multiple connected
    components, and our findings imply new bounds on the effective entanglement
    velocity of small subregions.  
    In 2d CFT,  our results prove a conjecture by Liu and Suh for a large class of states. 
    We also bound spatial derivatives of entanglement and correlators. 
    Key to our findings is a momentum-entanglement correspondence, showing that entanglement growth
    is computed by the momentum crossing the HRT surface.
    In our setup, we prove a
    number of general features of boundary-anchored extremal surfaces, such as a
    sharp bound on the smallest radius that a surface can probe, and that the
    tips of extremal surfaces cannot lie in trapped regions. Our methods
    rely on novel global GR techniques, including a delicate interplay between
    Lorentzian and Riemannian Hawking masses. While our proofs assume the
    dominant energy condition in the bulk, we provide numerical evidence that
    our bounds are true under less restrictive assumptions. 
}
\begin{document}

\maketitle

\section{\label{sec:intro}Introduction}
Entanglement is one of the key features unique to quantum mechanics, and its
effects are ubiquitous in modern physics. It is now clear that 
entanglement and entanglement entropy is a central quantity across a diverse range of fields, 
such as quantum many-body physics \cite{KitPre05,LevWen05,Ham05}, quantum
information theory \cite{BenWie92,BenBra93,BenBra14}, quantum gravity
\cite{BomKou86,Bek74,
Pag93,RyuTak06a,HubRan07,Van10,Wal11b,Wal12,AlmMar12,MalSus13,FauLew13,EngWal14,BouFis15,AlmEng19,Pen19},
and  quantum field theories and their RG flows
\cite{CasHue04,CalCar05,CasHue06,KleKut07,Bui08,MyeSin10,MyeSin10b,CasHue12,Sol13}.

A central question relevant to all of the above subjects is how entanglement behaves dynamically. 
In this paper, we address the following questions:
are there general bounds on the entanglement entropy in time-dependent states?
Does there exist speed limits on how fast it can grow? The latter question is
relevant to understanding how rapidly quantum information can propagate, how
long it takes a many-body system to thermalize, or, in quantum gravity,
for constraining the dynamics of spacetime itself. 

While calculating entanglement entropies is notoriously hard, many lessons
have been learned over the last two decades. Quantum quenches in particular have received
considerable interest. In a quantum quench, the Hamiltonian is abruptly
changed, or a source is turned on over a small time interval $\delta t$. 
In either case, there is an abrupt injection of energy into the system, kicking the state out of equilibrium. 
The subsequent approach to equilibrium can then be
computed in various setups.  In the seminal paper by Calabrese and Cardy
\cite{CalCar05}, the entanglement entropy $S_R$ of an interval $R$ 
of length $\ell$ in a $(1+1)$-dimensional conformal field theory (CFT) after a uniform quench was
computed, and for large times and interval lengths, it was found to behave as
\begin{equation}\label{eq:introCardy}
\begin{aligned}
    S_R(t)-S_R(t=0) = \begin{cases}
    2 s_{\rm th} t & t<\ell/2 \\
    s_{\rm th}\ell  & t\geq \ell/2
\end{cases},
\end{aligned}
\end{equation}
where $s_{\rm th}$ the thermal entropy density of the final state.  
Linear growth of entanglement for large regions $R$ after uniform quenches has also been found in higher dimensional
holographic CFTs \cite{BalBer10, BalBer10b,LiuSuh13a, LiuSuh13b}. In particular,
after local equilibration and before late time saturation, the entanglement
entropy of a region $R$ after a quench was found to behave as \cite{LiuSuh13a, LiuSuh13b}
\begin{equation}
    \begin{aligned}\label{eq:introve}
    S_R(t)-S_R(t=0) = v_E s_{\rm th} \text{Area}[\partial R] t + \ldots
\end{aligned}
\end{equation}
with $v_E$ the so-called entanglement velocity, which satisfies $v_E\leq 1$. 

While quenches provide useful insights on entanglement dynamics, they do not cover all
kinds of states, and it would be useful to have more general constraints. Some such
results do exist. Consider two quantum systems $A\cup a$ and $B \cup b$ coupled by an
interaction Hamiltonian $H$ acting only on $A$ and $B$. In \cite{VanAco13}
(building on \cite{Bra07}) it was proven that
\begin{equation}\label{eq:SIEC}
\begin{aligned}
    \Big|\frac{ \dd S_{A\cup a} }{ \dd t }\Big| \leq \eta \norm{H} \log d,
\end{aligned}
\end{equation}
where $d=\min\{\dim A, \dim B\}$, and where $\eta$ is an order $1$ constant.
While this bound has broad generality for finite-dimensional systems, it is not
useful in QFT, where $d$ is infinite. 
Even if we UV-regulate to make $d$ finite, $\norm{H}$ is infeasible to compute.
Furthermore, the bound is state-independent, and
it is natural to suspect there exists stronger bounds that depend on the conserved
charges of the state.

A bound more useful in QFT was conjectured \cite{LiuSuh13a,
LiuSuh13b}, based on the findings in holographic quenches. It was proposed that a normalized instantaneous entanglement growth 
$\mathfrak{R}$ in relativistic QFT satisfies the bound
\begin{equation}\label{eq:Rbound}
\begin{aligned}
    \mathfrak{R} \equiv \frac{ 1 }{ \text{Area}[\partial R]s_{\rm th} }\Big|\frac{ \dd
    S_R}{ \dd t } \Big| \leq 1.
\end{aligned}
\end{equation}
In \cite{HarJed15} relativistic QFT was 
used to prove that $\mathfrak{R} \leq 1$ for large
convex regions $R$ in spatially uniform states, neglecting contributions to
$S_R$ not scaling with volume.\footnote{A proof was also given in \cite{CasLiu15} for
half-planes, taking linear growth of entanglement as an assumption. For
quenches, it was proven for large regions holographically in \cite{Mez16},
together with many other properties of quenches.}
However, it was found in \cite{LiuSuh13b} that the largest values for $\mathfrak{R}$ were
obtained for intermediate sized regions, where it could exceed $v_E$
(for $d>2$), and where existing proofs of $\mathfrak{R}\leq 1$ do not apply.
Thus, the validity of \eqref{eq:Rbound} 
for general regions is still an open question.\footnote{In fact, since generic QFTs can
have state-dependent divergences in $S_R$ \cite{MarWal16}, $\partial_t S_R$ can be divergent in
some theories, and so \eqref{eq:Rbound} cannot be true in all relativistic QFTs. This means that a generalization of the proofs of
\cite{HarJed15, CasLiu15} to include contributions not scaling with volume
impossible without more input on the theories under consideration.}

In this work, for holographic CFTs with large coupling and large$-N$ (large effective
central charge), we prove novel bounds that imply $\mathfrak{R}\leq 1$ for a large class of situations not covered by
\cite{HarJed15, CasLiu15}. We also prove several bounds that to our knowledge
have not been previously discussed, including growth bounds on correlators and Wilson
loops.  We will see that our growth bounds can be seen as new types of quantum energy
conditions, valid for uniform states.
We also derive absolute bounds on entanglement entropy and equal-time
correlators.

Let us now summarize our results. Consider first a 2d CFT on $S^1 \times
\mathbb{R}$ or Minkowski space in a homogeneous and isotropic state
undergoing time-evolution. Let $t$ label the timeslices on which the state
is uniform. Let $R$ be a union of $n$ finite intervals of any size.
Assuming an energy condition and certain falloff conditions on the matter fields
in the bulk, which we assume for all bounds presented in the following, we prove that
\begin{equation}\label{eq:introthm1}
\begin{aligned}
    \Big|\frac{ \dd S_R }{ \dd t } \Big| \leq n \sqrt{\frac{ 8\pi c }{ 3
    }\big(\left<T_{tt}\right>-\left<T_{tt}\right>_{\rm vacuum}\big)},
\end{aligned}
\end{equation}
where $c$ is the central charge and $\left<T_{tt}\right>$ the CFT energy density
one-point function, which is the same everywhere in a uniform state. 
If we work with uncharged states,
\eqref{eq:introthm1} implies that $\mathfrak{R}\leq
1$. 
Thus, for the $2d$ theories under
consideration, we have given a proof of $\mathfrak{R}\leq 1$ to regions
of arbitrary finite size and with any number of connected components.\footnote{
For holographic CFTs, \eqref{eq:introthm1}
also improves on a bound proven for single intervals of any size in all $2d$ CFTs by
\cite{HarJed15}, which can be written as 
    $\mathfrak{R} \leq \coth\left(\pi \ell \sqrt{\frac{ \pi c }{ 6
    \left<T_{tt}\right> }} \right)$.
} 

Next, consider $d\geq 2$-dimensional holographic CFTs on Minkowski space, again in
a time-evolving uniform state.
Taking $R$ to be either a single ball or strip of characteristic size $\ell$, we
prove that
\begin{equation}\label{eq:introthm3}
\begin{aligned}
    \Big|\frac{ \dd S_R }{ \dd t } \Big| \leq
    \kappa\text{Vol}[R]\left<T_{tt}\right>\left[1 + \mathcal{O}\left(\frac{
        \ell^{d} \left<T_{tt}\right> }{ c_{\rm eff}  } \right)\right],
\end{aligned}
\end{equation}
where $\kappa$ is an $O(1)$ numerical constant given in \eqref{eq:kappaddef}, and which depends on $d$ and the shape of $R$.
$c_{\rm eff}$ is the effective central charge, to be defined in the following. 
For small regions this bound is much stronger than $\mathfrak{R}\leq 1$.\footnote{See \cite{KunPed16} for a discussion of a different definition of
$\mathfrak{R}$, where $s_{\rm th}$ is replaced by the
vacuum-subtracted entanglement entropy per volume in the final state. With this
definition, $\mathfrak{R}$ is $O(1)$ for small subregions, but it can exceed
$1$.} If 
$\beta$ is the effective inverse temperature at which the thermal energy density
equals $\left<T_{tt}\right>$, we get
\begin{equation}
\begin{aligned}
    \mathfrak{R} \leq \mathcal{O}(\ell/\beta) \ll 1.
\end{aligned}
\end{equation}

We also prove a higher-dimensional analogue of \eqref{eq:introthm1}, although the proof is more limited.
We prove for states that are somewhat more general than quench states that
\begin{equation}\label{eq:introthm2}
\begin{aligned}
    \Big|\frac{ \dd S_R }{ \dd t } \Big| \leq 
    \frac{1}{4}\text{Area}[\partial R] c_{\rm eff}\left[\frac{ 16\pi }{
        (d-1)c_{\rm eff} }\left<T_{tt}\right> \right]^{\frac{ d-1 }{ d }},
\end{aligned}
\end{equation}
where $R$ either is a single ball, or the union of any number of strips. 
Considering a neutral state, \eqref{eq:introthm2} translates into $\mathfrak{R}\leq 1$.
While our proof of \eqref{eq:introthm2} applies to a smaller class of
states, we give substantial numerical evidence that \eqref{eq:introthm2}
holds more generally for all uniform states. 

For strips, we also prove bounds on the entanglement entropy itself. For
$R_{\ell}$ a strip of width
$\ell$ at fixed time $t$, we prove that the vacuum subtracted entropy $\Delta
S(\ell)$ satisfies
\begin{equation}\label{eq:introthmdx1}
\begin{aligned}
    \partial_{\ell} \Delta
    S(\ell) \geq 0,
\end{aligned}
\end{equation}
which in particular implies $\Delta S\geq 0$.

In the special dimensions of $d=2,3,4$, we prove additional bounds.
Assuming the geodesic approximation for correlators \cite{BalRos99}, we prove
that the equal-time two-point function of a scalar operator
$O$ of large scaling dimension $\Delta$ in $d=2$ satisfies
\begin{equation}\label{eq:introthm4}
\begin{aligned}
\Big|\frac{ \dd }{ \dd t }\log \left<O(x)
    O(0)\right>_{\rho(t)}\Big| &\leq \sqrt{\frac{96 \pi \Delta^2 }{ c
    }\big(\left<T_{tt}\right>-\left<T_{tt}\right>_{\rm vac}\big)},
\end{aligned}
\end{equation}
where $\rho(t)$ is the state under consideration. This bound is saturated in
the global CFT$_2$ quenches studied in \cite{CalCar06,CalCar07} (for any
$\Delta$ and $c$). We also prove a tighter bound
when $x$ is small:
\begin{equation}\label{eq:introthm5}
\begin{aligned}
\Big|\frac{ \dd }{ \dd t }\log \left<O(x)
    O(0)\right>_{\rho(t)}\Big| &\leq \frac{ 12\pi \Delta |x|}{ c
    } \big(\left<T_{tt}\right>-\left<T_{tt}\right>_{\rm vac}\big)\left[1 +
    \ldots \right],
\end{aligned}
\end{equation}
where dots indicate $\mathcal{O}\left(x^2\left<T_{tt}\right>/c\right)$
corrections. We furthermore prove a bound on the correlator itself. Letting
$x>0$, we have
\begin{equation}\label{eq:introthmdx2}
\begin{aligned}
    \frac{ \dd }{ \dd x }\ln\left<O(x)O(0) \right>_{\rho(t)}
    \leq  \frac{ \dd }{ \dd x }\ln\left<O(x)O(0)
    \right>_{\text{vacuum}} = -\frac{ 2\Delta }{ x }, 
\end{aligned}
\end{equation}
which shows that for the states covered by our assumptions, correlations between
heavy scalars must die off faster than in the vacuum. 

When $d=3, 4$, we prove bounds on Wilson loops $\mathcal{W}(C)$ of spacelike
circles $C$, assuming we 
can compute these using classical worldsheets in the bulk. Assuming $\mathcal{N}=4$ SYM with gauge group $SU(N)$
and 't Hooft coupling $\lambda$ on the boundary, we show that\footnote{For other
potential $d=4$ holographic CFTs, our result can be written in terms of the effective central
charge and effective coupling -- see main text.}
\begin{equation}\label{eq:Wsym}
\begin{aligned}
    \Big|\frac{ \dd }{ \dd t  }\log\left<\mathcal{W}(C)\right>_{\rho(t)}\Big| &\leq
    \text{Length}[C]\sqrt{\frac{ 2\lambda }{ 3N^2
    }\left< T_{tt} \right>}, \qquad d=4.
\end{aligned}
\end{equation}
In $d=3$, we prove a similar result, but 
for the more restricted set of states which includes global quenches (see \eqref{eq:d3Wilson}).
For small Wilson loops, we also have stricter bounds, which we give in the main
text (see \eqref{eq:Wsmallr}). 

How are these bounds proven? Let us give the broad picture, restricting
to the time-derivative of the
entanglement entropy of a strip for concreteness. For CFTs dual to classical Einstein gravity, the von Neumann entropy of the reduced state $\rho_R$ on a
subregion $R$ is given by the HRT formula \cite{RyuTak06a, RyuTak06b,HubRan07}, which says that
\begin{equation}
\begin{aligned}
    S_R = \frac{ \text{Area}[X] }{ 4G_N },
\end{aligned}
\end{equation}
where $X$ is the HRT surface in the gravitational bulk, which roughly means a codimension 2 spacelike
surface that has stationary area under perturbations of $X$ in the bulk interior.
Bounding $\partial_t S_R$  in uniform states now corresponds to bounding
$\partial_t \text{Area}[X_t]$, where $X_t$ is a one-parameter family of HRT surfaces
 living in general time-dependent spacetimes with
planar symmetry. Key to our proofs then is carrying out the analysis locally on a
planar symmetric spatial slice $\Sigma$ that contains $X_{t}$. 
We then show that the change in entanglement entropy is given by
\begin{equation}\label{eq:dSdtintro}
\begin{aligned}
    \frac{ \dd S_R}{ \dd t } = \int_{X} G P ,
\end{aligned}
\end{equation}
where $P$ is the matter momentum density in a direction
orthogonal to HRT surface, and $G$ essentially a propagator that only depends on the smallest radius probed
by $X$, and not any other details of the spacetime. We thus see that the flux of matter
falling out of the entanglement wedge is directly responsible for the increase of entanglement entropy. The formula
\eqref{eq:dSdtintro} can be seen as momentum-entanglement correspondence, 
analogue to the momentum-complexity correspondence proposed in \cite{Sus18} and
given a precise form in \cite{BarJos19,BarMar20,BarMar21}. 
To further leverage this formula to get our proofs, we study two quasilocal
masses and find in certain dimensions the integral in \eqref{eq:dSdtintro}
is exactly encoded in the difference between these two quasilocal masses at infinity.
A detailed analysis of the monotonicity properties of these masses
under various flows then lets us prove our final bounds, essentially
using a combination of Lorentzian and Riemannian inverse mean curvature flows.
We emphasize that beyond our assumed symmetries, we do not need to assume a particular form of the spacetimes
we are considering, and we are certainly not restricted to quenches for our most
general bounds.

Along the way we derive various general properties of the HRT surfaces of strips
and spheres in planar symmetric spacetimes. 
For example, for $d=2$ we prove that the radius $r_0$ of the tip of the HRT
surface of a strip of width $\ell$ satisfies
\begin{equation}
\begin{aligned}
    r_0 \geq \frac{ 2L^2 }{ \ell },
\end{aligned}
\end{equation}
where $L$ the AdS radius.
We prove similar bounds in higher dimensions. We also prove that the
tip of an HRT surface of a sphere or a strip can never lie in a trapped region
in spacetime. The same is shown for boundary anchored extremal surfaces of
dimension $q+1$ anchored at $q$-spheres.

This paper is organized as follows. In Sec.~\ref{sec:strip} we set up our
assumptions and prove all our entanglement growth bounds for strip subregions
$R$. In Sec.~\ref{sec:generaldim} we prove the entanglement growth bounds for ball
shaped regions $R$ and furthermore derive general properties $(q+1)$-dimensional extremal surfaces
anchored at $q$-dimensional spheres on the boundary, leading to our results
for correlators and Wilson loops. In Sec.~\ref{sec:spatialbnds} we prove bounds
on spatial derivatives of the entanglement entropy of strips and equal-time
two-point correlators in $d=2$. In Sec.~\ref{sec:numerics}, for a subset of
our bounds, we give significant numerical evidence that the dominant energy condition, which
was assumed for our proofs, can be replaced by less restrictive
assumptions. Finally, in Sec.~\ref{sec:discussion}, we conclude with a discussion of the
implications of our findings, together with future directions.
For a reader only wanting to understand the results without getting into the details
of the proofs, it is possible to only read sections
\ref{sec:stripsummary}, \ref{sec:qsummary}, 
\ref{sec:spatialbnds}, \ref{sec:numerics}, and \ref{sec:discussion}.

\textbf{Note added in v3:} In previous versions of this paper, stronger growth
bounds were given for strips in the case when $S - S_{\rm vac}\leq 0$. However, these
bounds were vacuous, since for strips, we now have a proof that $S - S_{\rm vac}\geq0$ under our assumptions. 
Furthermore, the bounds \eqref{eq:introthmdx1},
and \eqref{eq:introthmdx2} were added in v3.

\section{Maximal Entanglement Rates for Strips}\label{sec:strip}
\subsection{Setup and summary of results}\label{sec:stripsummary}
Consider a $d$-dimensional holographic CFT in Minkowski space dual to classical Einstein
gravity. Consider now some general time-evolving state $\rho(t)$ possessing
a geometric dual, and having spatially homogeneous and isotropic one-point functions for
local operators dual to bulk fields, such as the CFT stress tensor $T_{ij}$. 
Homogeneity and isotropy ensures that the dual asymptotically AdS$_{d+1}$ spacetime
$(\mathcal{M}, g)$ has planar symmetry. We allow
$\rho(t)$ to live on either one or two copies of Minkowski space, so that
the dual spacetime can have either one or two asymptotic
boundaries. For a single system, we allow $\rho(t)$ to be mixed.\footnote{Allowing
two-sided spacetimes means that automatically allow
mixed states on a single CFT, since we can always find a purification dual to a
wormhole, simply by gluing a second CPT-conjugate copy of the spacetime to itself along the
HRT surface \cite{EngWal19}.}

Our goal in this section is to use the HRT entropy formula in this setup to
derive a speed limit on the growth of the entanglement for a strip, and in some
cases the union of any number of strips, provided they all live on the same
connected component of the conformal boundary. In Sec.~\ref{sec:generaldim} we
will generalize to spherical subregions, and to Wilson loops and two-point
correlators. However, we will present the results on entanglement growth for
spherical regions in this section, since they naturally are presented together
with the results for strips.

Before presenting our results, let us set up our assumptions.
We will assume that our spacetimes are AdS-hyperbolic, meaning that we can
foliate $(\mathcal{M}, g_{ab})$ with spacelike hypersurfaces $\Sigma_t$ that all have the same topology and are geodesically complete as Riemannian
manifolds. These represent moments of time. Next, letting $L$ be the asymptotic
AdS radius, we assume that $(\mathcal{M}, g_{ab})$ satisfies the Einstein
equations 
\begin{equation}
\begin{aligned}
    R_{ab} - \frac{ 1 }{ 2 }g_{ab}R - \frac{ d(d-1) }{ 2L^2 }g_{ab} = 8\pi G_N
    \mathcal{T}_{ab},
\end{aligned}
\end{equation}
and that the dominant energy condition (DEC) holds for the bulk stress tensor
$\mathcal{T}_{ab}$, meaning that
\begin{equation}
\begin{aligned}
    \mathcal{T}_{ab}u^a v^b \geq 0 \qquad \text{ for all timelike } u^a, v^b.
\end{aligned}
\end{equation}

Next, we assume that the 
Balasubramanian-Kraus \cite{BalKra99} boundary stress tensor $\expval{T_{ij}}$ is
finite. 
When it is finite, it corresponds to the one-point function of the CFT stress tensor. 
To specify falloff assumptions more explicitly, let $\Omega$ be any defining
function, meaning any function on the conformal compactification of
$\mathcal{M}$ such that the pullback of $\Omega^2 g_{ab}|_{\partial \mathcal{M}}$ to the conformal boundary
is a Lorentzian metric. We then require that the bulk stress tensor satisfies
\begin{equation}\label{eq:falloffs}
\begin{aligned}
    \mathcal{T}_{ab}u^a v^b \sim o\left(\Omega^{d}\right), \qquad \forall \text{ unit vectors }
    v^a, u^a,
\end{aligned}
\end{equation}
near the conformal boundary $\partial \mathcal{M}$. In the radial coordinate $r$
introduced below, this means the
stress tensor in an orthonormal basis falls off as $o(r^{-d})$.
Matter fields with falloffs sufficiently slow to require modifications of the
definition of the spacetime mass are not covered by our results.\footnote{In this case, depending
on how slow the falloffs are, subleading divergences in the entropy might become
state dependent \cite{MarWal16}, in which case $\mathfrak{R}\leq 1$ cannot
remain true. See discussion in Sec.~\ref{sec:discussion}.} 
To avoid having to repeat the same assumptions in every theorem, let us define
the following:
\begin{defn}
    We say that an AAdS$_{d+1}$ spacetime $(\mathcal{M}, g_{ab})$ is regular if it is
    AdS-hyperbolic, has falloffs \eqref{eq:falloffs}, and $g_{ab}$ is $C^2$.
\end{defn}

For index conventions, we will take $a,b,\ldots$ to be abstract spacetime indices,
and $\alpha,\beta,\ldots$ to be abstract indices on spacelike hypersurfaces
$\Sigma$. We take $\mu, \nu, \ldots$ to be coordinate indices on $\Sigma$. Other
indices should be clear in the context. Furthermore, whenever intrinsic tensors on
submanifolds are written with spacetime indices, we mean the pushforward/pullback to
spacetime using the embedding map.

To describe the boundary regions covered by our results, we select a Minkowski conformal frame on the conformal boundary with coordinates
\begin{equation}
\begin{aligned}
    \dd s^2|_{\partial \mathcal{M}} = -\dd t^2 + L^2(\dd \phi^2 + \dd \bm{x}^2), \qquad
    \phi \in \mathbb{R},\quad \bm{x} \in \mathbb{R}^{d-2},\label{eq:frame1}
\end{aligned}
\end{equation}
where the constant $t$-slices are the ones on which we have uniform one-point
functions for local operators. For $d=2$ we can allow $\phi$ to be periodically identified, in which
case we say that $\mathcal{M}$ has spherical symmetry.
If $\partial \mathcal{M}$ has two connected components, we focus on a particular one.
We define $R_{t'}$ to be the one-parameter family of boundary regions given by 
\begin{equation}\label{eq:stripregiondef}
\begin{aligned}
    -\frac{ \ell }{ 2L } \leq \phi \leq  \frac{ \ell }{ 2L }, \qquad t=t',
\end{aligned}
\end{equation}
which just corresponds to a strip or interval of length $\ell$ at time $t'$.
In this section, when we talk about strips or refer to a one-parameter family, we always mean the
family \eqref{eq:stripregiondef}. 
We will abbreviate $R_{t=0}\equiv R$, and define 
\begin{equation}
\begin{aligned}
    \text{Area}[\partial R_t]  = \text{Area}[\partial R]  =
    L^{d-2}\int_{\mathbb{R}^{d-2}}
    \dd^{d-2}\bm{x},\qquad d>2,
\end{aligned}
\end{equation}
while for $d=2$, we have $\text{Area}[\partial R]=2$. For $d>2$ this is of course divergent, 
but since it always appears as an overall prefactor it causes no difficulties.

Next, the HRT formula \cite{RyuTak06a, RyuTak06b,HubRan07} states that the von Neumann entropy of the
reduced CFT state on $R_t$, $\rho_R(t) \equiv \tr_{R^{\text{c}}}\rho(t)$, is given by
\begin{equation}\label{eq:HRTdef}
\begin{aligned}
    S_R(t) = -\tr\left[\rho_R(t) \ln \rho_R(t)\right] = \frac{ \text{Area}[X_t] }{ 4G_N },
\end{aligned}
\end{equation}
where $X_t$ is the minimal codimension-2 spacelike surface in $(\mathcal{M},
g_{ab})$ that is
(1) a stationary point of the area functional (i.e. extremal), (2) anchored at $\partial R_t$ on the
conformal boundary ($\partial X_t = \partial R_t$), and (3) homologous to $R_t$. The latter means that there
exists spacelike hypersurface $\Sigma$ with $\partial \Sigma = X_t \cup
R_t$, where we here mean the boundary in the conformal completion.
We will use the gravitational description to derive an upper bound on
\begin{equation}\label{eq:whatwebound}
\begin{aligned}
    \qty|\frac{ \dd }{ \dd t }\left(\frac{ \text{Area}[X_t] }{ 4G_N
    }\right)|
\end{aligned}
\end{equation}
purely in terms of quantities that have a known interpretation in the CFT. 
While $\text{Area}[X_t]$ is formally divergent, since we (1) work with
spacetimes with falloffs \eqref{eq:falloffs} and (2) $\text{Area}[\partial
R_t]$ is time-independent, \eqref{eq:whatwebound} is in fact finite up to
the $\text{Area}[\partial R]$ prefactor.

Let us now summarize our main results, which are broadly divided into two
categories. The first class of bounds scales like $\text{Area}[\partial R]$, and
they are strongest when $R$ is large. The second class of bounds scales like
$\text{Vol}[R]$, and they are consequently the strongest for small subregions.
For intermediate sized regions, where the entanglement entropy is about the
enter the volume-scaling regime, we expect the two types of upper bounds to be
roughly comparable. 

First, for a three-dimensional bulk, we obtain the following
\begin{thm}\label{thm:mainthm1}
    Let $(\mathcal{M},g_{ab})$ be a regular asymptotically AdS$_{3}$ spacetime with planar
    or spherical symmetry satisfying the DEC. Assume that $X_t$ is the HRT surface of
    a finite interval $R_t$. Then
    \begin{equation}\label{eq:mainthm1}
    \begin{aligned}
        \qty|\frac{ \dd }{ \dd t }\left(\frac{{\rm Area}[X_t] }{ 4G_N
        }\right)| \leq \sqrt{\frac{ 8\pi c }{ 3 }\big(\left< T_{tt}
        \right>-\left< T_{tt} \right>_{\rm vac}\big)},
    \end{aligned}
    \end{equation}
    where $c=\frac{ 3L }{ 2G_N }$.
\end{thm}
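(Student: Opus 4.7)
The plan is to work entirely in the bulk via the HRT formula and reduce to a Riemannian analysis on a single planar- (or spherical-) symmetric spacelike slice $\Sigma_t$ containing the HRT geodesic $X_t$. In this setup the areal radius $r$ is a good geometric coordinate, and the spatial metric on $\Sigma_t$ takes the form $dr^{2}/f(r) + r^{2}\, d\phi^{2}$ for some function $f$ encoding a Riemannian mass aspect. All the time-dependence of the problem is then packaged into the extrinsic curvature of $\Sigma_t$ in spacetime, which by the Einstein momentum constraint is tied directly to matter momentum density. The first technical step is to verify that such a planar-symmetric slice through $X_t$ always exists under the regularity assumptions; this should follow from the foliation provided by AdS-hyperbolicity together with symmetry.

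Next I would establish the momentum-entanglement formula advertised in the introduction for this $d=2$ setting. Differentiating $\mathrm{Length}[X_t]$ in $t$, using extremality of $X_t$ to kill the bulk variation, and invoking the momentum constraint to trade geometric data for matter data, I expect an identity of the schematic form
\begin{equation*}
    \frac{d}{dt}\frac{\mathrm{Length}[X_t]}{4 G_N}
    \;=\; \int_{X_t} G(r; r_0)\, \mathcal{T}_{ab}\, u^{a} n^{b} \, ds ,
\end{equation*}
where $u^{a}$ is the unit timelike normal to $\Sigma_t$, $n^{a}$ is the unit spacelike normal to $X_t$ inside $\Sigma_t$, $r_0$ is the tip radius of $X_t$, and $G(r; r_0)$ is a geometric weight depending only on planar-symmetric data. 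The DEC ensures that $\mathcal{T}_{ab} u^{a} n^{b}$ is controlled by the bulk energy density, so the right-hand side becomes an estimable momentum flux.

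The heart of the proof is the ``delicate interplay'' between two quasi-local masses at the Riemannian and the full Lorentzian levels. I would introduce a Lorentzian Hawking-type mass $m_{L}(r)$ using the full spacetime gradient of $r$, and a Riemannian counterpart $m_{R}(r)$ using only the gradient intrinsic to $\Sigma_t$. Both approach the ADM-like mass at the conformal boundary, which by Balasubramanian--Kraus is proportional to $\langle T_{tt}\rangle - \langle T_{tt}\rangle_{\rm vac}$; at the tip of $X_t$, $m_{R}$ evaluates in closed form in terms of $r_0$. Monotonicity of $m_{L}$ and $m_{R}$ along suitable radial (inverse-mean-curvature-type) flows, provable under DEC via the appropriate Raychaudhuri identities, bounds their difference by boundary data. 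Since $m_{L} - m_{R}$ is, up to local factors, precisely the momentum flux integral displayed above, a Cauchy--Schwarz-type step then converts the linear-in-$\mathcal{T}$ expression into the square root of $\langle T_{tt}\rangle - \langle T_{tt}\rangle_{\rm vac}$ appearing in \eqref{eq:mainthm1}, with the universal prefactor $\sqrt{8\pi c/3}$ set by the $d=2$ geometry and $c = 3L/(2G_N)$.

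The main obstacle I anticipate is twofold. First, proving the required monotonicity of both $m_L$ and $m_R$ using only the DEC, rather than a stronger pointwise NEC, will require careful handling of the sign structure in the Raychaudhuri equation along a non-null flow. Second, controlling the geometric weight $G(r; r_0)$ uniformly, so that the final bound has no residual $r_0$-dependence and hence no interval-length dependence, is nontrivial; handling the limit near the tip, where $X_t$ is tangent to the symmetry orbit and the weight could diverge, will be the most delicate step, along with matching asymptotic normalizations to the Balasubramanian--Kraus form of $\langle T_{tt}\rangle$.
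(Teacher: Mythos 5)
Your setup (planar-symmetric slice $\Sigma_t$ through $X_t$, first variation reducing to a boundary term, momentum constraint converting it to a matter-flux integral, and the two Hawking masses) matches the paper's architecture, but the mechanism you propose for extracting the square-root bound has a genuine error that would derail the proof. You assert that both the Lorentzian mass $m_L$ and the Riemannian mass $m_R$ "approach the ADM-like mass at the conformal boundary" and that their difference is, up to local factors, "precisely the momentum flux integral," to be followed by a Cauchy--Schwarz step. Neither claim is right in $d=2$, and together they are inconsistent: if both masses converged to the same boundary value, their difference would vanish. The actual relation (Eq.~\eqref{eq:muVSomega} of the paper) is $\mu(r) = \omega(r) + r^{d-4}K_{\phi\phi}(r)^2$, so in $d=2$ the two masses differ at infinity by exactly the \emph{square} of $W \equiv \lim_{r\to\infty} r^{-1}K_{\phi\phi}$, and $W$ is precisely the quantity that computes $\partial_t S$ via \eqref{eq:dAdt}. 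Only $\mu(\infty)$ equals the Balasubramanian--Kraus energy; $\omega(\infty)$ is strictly smaller whenever $\partial_t S \neq 0$. The square root in \eqref{eq:mainthm1} therefore comes for free from this exact quadratic identity plus the single inequality $\omega(\infty)\geq 0$ — no Cauchy--Schwarz on the flux integral is needed, and it is unclear such a step could reproduce the sharp constant.

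Relatedly, the obstacles you flag are not the real ones for this theorem. You do not need uniform control of the weight $G(r;r_0)$ or of the tip radius $r_0$: those enter only the volume-scaling bounds (Theorem~\ref{thm:mainthm3}), not Theorem~\ref{thm:mainthm1}. The genuinely delicate step you are missing is establishing $\omega(\infty)\geq 0$. This requires two ingredients: (i) monotonicity $\omega'(r)\geq 0$ along $\Sigma$, which follows from the Hamiltonian constraint once extremality of $X$ in the timelike direction is used to eliminate $K_{rr}$ and the DEC gives $\mathcal{E}\geq 0$; and (ii) positivity of $\omega$ at the tip, $\omega(r_0)\geq 0$, which is proved by noting $K_{\phi\phi}(r_0)=0$ (so $\mu=\omega$ there and the tip is untrapped), embedding the tip plane in a complete slice, and flowing inward using monotonicity of the \emph{Lorentzian} Hawking mass in the untrapped region down to either a marginally trapped surface or $r=0$, where $\mu\geq 0$. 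That inward-flow argument is where the "interplay" between the two masses actually lives.
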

\noindent Since the HRT surface of a union of strips is just the union of HRT
surfaces of a collection of individual strips, this bound immediately implies that if $R$ is a union of $n$
intervals contained in a single moment of time on one of the connected
components of $\partial \mathcal{M}$, then
\begin{equation}
\begin{aligned}
    \qty|\frac{ \dd }{ \dd t }S_R| \leq n\sqrt{\frac{ 8\pi c }{ 3
    }\big(\left< T_{tt}
        \right>-\left< T_{tt} \right>_{\rm vac}\big)}.
\end{aligned}
\end{equation}
While we are not able to give a general proof of the analogue of Theorem~\ref{thm:mainthm1} in
higher dimensions, we prove a generalization in thin-shell spacetimes: 
\begin{thm}\label{thm:mainthm2}
    Let $(\mathcal{M},g_{ab})$ be an asymptotically AdS$_{d+1 \geq 3}$ spacetime with planar
    symmetry satisfying the DEC. Assume that $X_t$ is the HRT surface of a
    region $R_t$ corresponding to either a finite width strip or a ball.
    Assume that the  bulk matter consists of $U(1)$ gauge fields and a thin shell of matter:
    \begin{equation}
    \begin{aligned}
    \mathcal{T}_{ab} = \mathcal{T}_{ab}^{\rm shell} +
        \mathcal{T}_{ab}^{\rm Maxwell},
    \end{aligned}
    \end{equation}
    where $\mathcal{T}_{ab}^{\rm shell}$ has delta function
    support on a codimension$-1$ worldvolume that is timelike or null, and with
    $\mathcal{T}_{ab}^{\rm shell}$ separately
    satisfying the DEC. Assume $(\mathcal{M},g_{ab})$ is regular, except we do
    not require $g_{ab}$ to be $C^2$ at the shell.
    Then
    \begin{equation}\label{eq:shellnonlinearT}
    \begin{aligned}
    \qty|\frac{ \dd }{ \dd t }\left(\frac{ {\rm Area}[X_t] }{ 4G_N
        }\right)|  \leq \frac{ 1 }{ 4 }{\rm Area}[\partial 
    R]c_{\rm eff} \left[\frac{ 16\pi }{ (d-1)c_{\rm eff}
    }\left<T_{tt}\right> \right]^{\frac{ d-1 }{ d }},
    \end{aligned}
    \end{equation}
    where $c_{\rm eff} = L^{d-1}/G_N$.
\end{thm}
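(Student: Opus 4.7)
The plan is to exploit the special structure of the thin-shell plus Maxwell matter in order to apply the momentum-entanglement correspondence $\dot{S}_R = \int_X G\, P$ sketched in the introduction (and developed in the preceding subsections for strips and balls). The key simplification in the thin-shell setup is that, in a planar symmetric geometry, a $U(1)$ gauge field contributes only a static radial electric field to the stress tensor, so the momentum flux $P$ across a constant-$t$ slice is supported entirely on the shell's worldvolume. Consequently the integral $\int_X GP$ localizes to the intersection of $X_t$ with the shell.

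First, I would fix an area-radius coordinate $r$ on a planar-symmetric spatial slice $\Sigma_t$ containing $X_t$ and write the bulk as piecewise AdS-Reissner-Nordström joined across the shell by the Israel junction conditions. Outside the shell the metric is determined by a mass parameter that, together with the Maxwell charge, encodes $\langle T_{tt}\rangle$; inside, the metric is the corresponding chargeless vacuum (or pure AdS-RN, if charge exists in the core). This reduction makes the propagator $G$ an explicit function only of the tip radius $r_0$ of $X_t$ and of $r_s$, the shell radius at which $X_t$ crosses the worldvolume.

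Next, I would use the DEC assumed separately for $\mathcal{T}^{\rm shell}_{ab}$ to bound its momentum density by its energy density, and then use the Israel junction conditions to express the shell energy density in terms of the jump of a suitable quasilocal (Hawking) mass across the shell. The Maxwell piece of the stress tensor drops out of this jump since the Maxwell mass function is continuous, so the jump is controlled by the shell's contribution to the boundary energy, which in a neutral-shell decomposition is bounded by $\langle T_{tt}\rangle$ itself. Combined with the explicit form of $G(r_0, r_s)$ and the lower bound on $r_0$ established for strips and balls earlier in the paper (the analogues of the $r_0 \geq 2L^2/\ell$ statement), the estimate $|\dot S| \lesssim \text{Area}[\partial R]\cdot G(r_0, r_s)\cdot m_{\rm shell}$ translates into the stated bound.

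To obtain the precise $(d-1)/d$ exponent, I would maximize over the allowed location $r_s$ of the shell, using that the optimum occurs at the ``thermal'' radius where $r_s^d$ is proportional to $\langle T_{tt}\rangle L^{d+1}/c_{\rm eff}$; at this radius the factor of $r_s^{d-1}/L^{d-1}$ from $G$ combines with the linear dependence on $m_{\rm shell} \propto \langle T_{tt}\rangle$ to yield exactly $\langle T_{tt}\rangle^{(d-1)/d}$, with the precise numerical prefactor emerging from the explicit AdS-Schwarzschild form of $G$. The main obstacle is the second step: carefully separating the Maxwell contribution from the shell contribution inside the Hawking mass, so that no Maxwell energy is double-counted when bounding the shell momentum by $\langle T_{tt}\rangle$, and verifying that the Riemannian versus Lorentzian Hawking masses behave compatibly at the shell where $g_{ab}$ is only $C^0$. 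The remaining extremization and the application of the minimum radius bound are essentially geometric calculations that follow the lines already established for Theorem~\ref{thm:mainthm1}.
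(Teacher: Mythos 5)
Your overall strategy tracks the paper's quite closely: in planar symmetry the Maxwell field contributes nothing to the radial momentum density $\mathcal{J}$, so the flux integral through $X$ localizes on the shell; the DEC imposed separately on $\mathcal{T}^{\rm shell}_{ab}$ bounds its momentum by its energy; the constraint (junction) equations convert the shell energy into a jump of the Hawking mass across $\hat{r}$; and the $(d-1)/d$ exponent does emerge from extremizing over the shell radius, with the extremal configuration sitting at the thermal scale where $\hat{r}^{d}\sim L^2\mu(\infty)$. Your worry about double-counting the Maxwell energy is unfounded: the Maxwell piece enters $\mathcal{E}$ with a definite sign and is simply discarded from the lower bound on the mass, only strengthening the inequality.

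However, one step as written would fail, and one essential ingredient is missing. First, the tip-radius bound $r_0\gtrsim L^2/\ell$ of Theorem~\ref{thm:r0bound} plays no role in this theorem and cannot produce it: invoking it injects a factor of $\ell$ and leads to the volume-scaling bound of Theorem~\ref{thm:mainthm4}, whereas \eqref{eq:shellnonlinearT} is independent of $\ell$ (the paper uses the scaling freedom in $r$ to set $r_0=L=1$, after which no trace of $r_0$ survives). What actually caps the shell momentum is the requirement that $B(r)>0$ on both sides of the shell, i.e.\ $\omega(r)\leq r^{d}/L^2$, which combined with the jump condition forces $W\leq\sqrt{\hat{r}^{2d-2}-1}\,\sqrt{1-\omega_-/\hat{r}^{d}}$ in units $r_0=L=1$ (cf.\ \eqref{eq:dom2}); maximizing $W^{d/(d-1)}/\mu(\infty)$ subject to this constraint and $0\leq\omega_-\leq\hat{r}^{d}$ is what yields \eqref{eq:thinshell1bnd} and hence the theorem. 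Second, $\Sigma$ is not a static slice of the piecewise geometry: its extrinsic curvature feeds back into the mass through the $r^{d-5}h_1(r)K_{\phi\phi}^2$ term in \eqref{eq:omegasol}. This $Q(\infty)$ contribution exactly cancels the negative $-W^2/(\hat{r}^{2d-2}-1)$ piece of the mass jump, and without it you would not obtain the clean prefactor $\tfrac14$ (nor the $d=2$ case, where the extra $W^2$ in $\mu(\infty)=\omega(\infty)+W^2$ is needed). Your sketch, which treats the problem as static AdS--RN patches glued by Israel conditions, has no place for this term.
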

\noindent 
This theorem applies to thin-shell Vaidya spacetimes and charged
generalizations. These spacetimes (and related setups) have been studied
extensively
\cite{AbaApa10,BalBer10,BalBer10b,BarGal12,CacKun12,GalSch12,SheSta13,SheSta13b,LiuSuh13a,LiuSuh13b,LiWu13,AreAnd13,CapMan13,Uga13,AliMoh14,FonFra14,KerNis2014,AliFar14,CacKun14,BucMye14,StaSus14,BabMoh20,LeiMoo15,EckGru15,Zio15,Soh15,CacSan15,RanRoz15,KunPed16,Roy16,Eck16,Mez16,ObaPro16,LokOli17,MyeRoz17,AreKhr17,Jah17,FloErd17,Avi18,GhaYar18,FisJah18,Mez18,KudMac19,CouEcc19,LinLiu19,MezWil20,GotNoz21} as holographic models CFT quenches.  However, more general cases than Vaidya are allowed, where the shell might correspond to some brane in the
bulk, propagating in a timelike direction. Using the duality between radius and
scale in the CFT, thin shell spacetimes correspond to CFT states
where all dynamics is happening at a single scale (that evolves with time).
We also should note that
 \eqref{eq:shellnonlinearT} holds if $R$ is a union of any number of strips on
 the same conformal boundary,
 due to the fact that the HRT surface of $n$ strips is just equal to $n$
 HRT surfaces of $n$ (generally different) strips.
One can hope that this might also be true for multiple spheres, but this does not follow from our current analysis.
Also, while we do not have a proof, we conjecture that
\eqref{eq:shellnonlinearT} is valid in all DEC respecting regular planar
symmetric AAdS$_{d+1}$ spacetimes, and we provide strong numerical evidence for this in Sec.~\ref{sec:numerics}.

Also, note that $c_{\rm eff}$ can be defined purely in CFT in terms of a
universal prefactor of the sphere vacuum entanglement entropy \cite{RyuTak06a,RyuTak06b}, or in terms of the
renormalized entanglement entropy \cite{LiuMez12,LiuMez13}. So our final bounds
on $|\partial_t S|$ make no reference to the bulk.

The previous two results give upper bounds scaling like $\text{Area}[\partial
R]$.  Now let us turn to bounds scaling like $\text{Vol}[R]$. We prove the
following bound on small regions, valid for all $d\geq2$: 
\begin{thm}\label{thm:mainthm3}
    Let $(\mathcal{M},g_{ab})$ be a regular asymptotically AdS$_{d+1 \geq 3}$ spacetime with planar
    symmetry satisfying the DEC. Assume that $X_t$ is the HRT surface of a
    region $R_t$ corresponding to either a strip or a ball.
    Let $\ell$ be either the strip width or ball radius, and assume that
    \begin{equation}
    \begin{aligned}
        \frac{ \ell^{d} \left<T_{tt}\right> }{ c_{\rm eff} } \ll 1.
    \end{aligned}
    \end{equation}
    Then
    \begin{equation}
    \begin{aligned}\label{eq:entgrowthbound}
        \qty|\frac{ \dd }{ \dd t }\left(\frac{ {\rm Area}[X_t] }{ 4G_N
        }\right)| \le
        \kappa_d  
        {\rm Vol}[R]\left<T_{tt} \right>\left[1 +
        \mathcal{O}\left(\frac{ \left<T_{tt}\right>\ell^{d} }{ c_{\rm eff}
        }\right)\right].
    \end{aligned}
    \end{equation}
    where
    \begin{equation}\label{eq:kappaddef}
    \begin{aligned}
    \kappa_d = \frac{\Gamma\qty(\frac{1}{2(d-1)})}{\Gamma\qty(\frac{d}{2(d-1)})} 
    \begin{cases}
        2\pi & R {\rm \ is\ an\ interval\ and\ } d=2, \\
        \frac{\sqrt{\pi}}{d-1} & R {\rm\ is\ a\ strip\ and\ } d>2,  \\
        2\sqrt{\pi} &R {\rm\ is\ a\ ball\ and\ } d>2. \\
    \end{cases}
    \end{aligned}
    \end{equation}
\end{thm}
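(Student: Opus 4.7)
The plan is to exploit the momentum-entanglement correspondence \eqref{eq:dSdtintro}, which (having presumably been established earlier in Sec.~\ref{sec:strip}) reduces the problem to estimating $\int_X G\,P$, where $P$ is the flux of matter momentum across the HRT surface $X_t$ and $G$ is a propagator whose only nonlocal dependence on the geometry is through the smallest radius $r_0$ probed by $X$. The small-region assumption $\ell^{d}\langle T_{tt}\rangle/c_{\rm eff}\ll 1$ means backreaction is small throughout the region probed by $X$, so the whole calculation can be done perturbatively around pure AdS.

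The concrete steps I would carry out are as follows. First, I would use the DEC applied to the timelike unit normal to the boundary $t$-slice and to the timelike vector defining $P$ (or equivalently a null rotation argument), to bound $|P|$ pointwise by the local energy density component along the bulk foliation. Translating to the boundary via the falloff conditions \eqref{eq:falloffs} and the definition of the Balasubramanian--Kraus stress tensor, one gets $|P|\leq\langle T_{tt}\rangle$ to leading order in the small quantity $\ell^d\langle T_{tt}\rangle/c_{\rm eff}$, with explicitly controlled subleading corrections. Second, I would replace $X_t$ and $G$ by their pure-AdS values; for a strip of width $\ell$ or a ball of radius $\ell$ the tip radius $r_0$ and the induced geometry are completely explicit. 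Third, I would compute the purely geometric integral $\int_X G$ on the pure-AdS HRT surface: using the standard parametrization $r=r_0/\sqrt{1-u^{2(d-1)}}$, this integral reduces to a Beta function, producing precisely the ratio $\Gamma\!\bigl(\tfrac{1}{2(d-1)}\bigr)/\Gamma\!\bigl(\tfrac{d}{2(d-1)}\bigr)$ together with a shape-dependent prefactor ($2\pi$, $\sqrt{\pi}/(d-1)$, or $2\sqrt{\pi}$ according as $R$ is an interval, higher-dimensional strip, or ball), matching \eqref{eq:kappaddef}. Multiplying by the bound on $P$ and recognizing the area along the transverse directions as $\mathrm{Vol}[R]$ yields \eqref{eq:entgrowthbound}.

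The last step is to track corrections. Deviations of $X_t$ from the pure-AdS surface enter at order $\langle T_{tt}\rangle\ell^d/c_{\rm eff}$ because that is the natural dimensionless measure of the metric perturbation at depth $\sim\ell$ in Fefferman--Graham gauge; since $X_t$ is extremal, its area and the induced integrand $G$ feel these perturbations only quadratically in the position perturbation but linearly in the metric perturbation, so the propagator and the domain of integration both contribute at the advertised order. Similarly, the DEC bound on $P$ in terms of $\langle T_{tt}\rangle$ picks up corrections from the difference between the bulk radial coordinate frame near the HRT tip and the asymptotic frame in which $\langle T_{tt}\rangle$ is defined, again $O(\langle T_{tt}\rangle\ell^d/c_{\rm eff})$.

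The main obstacle I anticipate is the bookkeeping in step three: even though the leading calculation is the familiar Ryu--Takayanagi computation in pure AdS, one has to show the precise coefficients $\kappa_d$ emerge from the product $G\cdot(\text{induced area element})$ rather than from the area functional alone, and one has to make sure the $\mathrm{Vol}[R]$ that appears is the boundary volume of $R$ and not some bulk volume. A secondary subtlety is proving the correction estimate uniformly: one needs that, for all matter configurations consistent with a given $\langle T_{tt}\rangle$, the actual HRT surface cannot wander far from the pure-AdS one. This is the kind of control that is most cleanly obtained by combining the explicit lower bound on $r_0$ already proved in this paper with a monotonicity argument on the Hawking masses; once that is in place, the remaining manipulations are routine.
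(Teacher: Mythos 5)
There is a genuine gap at the heart of your step two. You propose to bound the momentum density $P=\mathcal{T}_{ab}n^a t^b$ \emph{pointwise} by $\left<T_{tt}\right>$ and then multiply by the purely geometric integral $\int_X G$ evaluated on the pure-AdS surface. Neither half of this works. The boundary energy density does not control the local bulk energy or momentum density at a point: a profile $\mathcal{E}(r)$ sharply concentrated near the tip (or a thin shell) can have arbitrarily large local values at fixed $\left<T_{tt}\right>$, because the Hamiltonian constraint \eqref{eq:omegasol} only ties the radially weighted \emph{integral} $\int_{r_0}^{\infty}r^{d-1}\mathcal{E}\,\dd r$ to $\omega(\infty)\propto\left<T_{tt}\right>$. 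Moreover, the companion factor $\int_X G$ diverges: with $G\sim r^{d}$ and the induced area element on $X$ growing like $r^{d-3}\dd r$ near the boundary, $\int_X G\sim\int r^{2d-3}\dd r=\infty$, so a H\"older-type estimate $|\int_X GP|\leq\sup|P|\cdot\int_X G$ cannot close. The finiteness of $\partial_t S$ relies on the decay of $\mathcal{J}$ itself, not on any constant bound for it.

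The mechanism the paper actually uses (Lemma~\ref{lem:smallLmainlem}) is a ratio-of-integrals argument in which the \emph{same} measure $|\mathcal{J}(r)|\,\dd r$ appears upstairs and downstairs: the entanglement growth is $\propto\int\mathcal{J}\sqrt{r^{2d-2}-r_0^{2d-2}}\,\dd r\leq\int r^{d-1}|\mathcal{J}|\,\dd r$, while the DEC in the form \eqref{eq:Ecallower}, $\mathcal{E}\geq|\mathcal{J}|/\sqrt{B}$ with $1/\sqrt{B}\approx r/L$, converts the mass into a lower bound $\omega(\infty)\gtrsim\frac{2}{L}\int r^{d}|\mathcal{J}|\,\dd r$. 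The quotient of these two weighted integrals is bounded by $L/(2r_0)$ simply because $r\geq r_0$ on the domain, with no assumption about where the matter sits. The small parameter $\ell^{d}\left<T_{tt}\right>/c_{\rm eff}$ enters only in expanding $\sqrt{1-\omega L^2/r^{d}}$, not in replacing $X$ by its vacuum shape. Finally, the Gamma-function ratio in $\kappa_d$ does not come from evaluating $\int_X G$ in pure AdS; it comes from the sharp lower bound on $r_0$ in terms of $\ell$ (Theorem~\ref{thm:r0bound} and its spherical analogue), which in turn rests on $\omega(r)\geq0$ via the Lorentzian Hawking-mass argument of Lemma~\ref{lem:omega0} and the resulting inequality $B(r)\geq L^2/r^2$. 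You mention the $r_0$ bound and Hawking-mass monotonicity only as tools for the error terms, but they are in fact the engine of the leading-order estimate; without them your proposal has no route from the bulk integral to $\mathrm{Vol}[R]\left<T_{tt}\right>$.
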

\noindent Next, for thin shell spacetimes, volume-type bounds can be proven exactly for subregions
of any size, at the cost of a slightly larger prefactor:
\begin{thm}\label{thm:mainthm4}
    Consider the same setup as in Theorem~\ref{thm:mainthm2}.
    Then
    \begin{equation}\label{eq:entgrowthbound2}
    \begin{aligned}
        \qty|\frac{ \dd }{ \dd t }\left(\frac{ {\rm Area}[X_t] }{ 4G_N
        }\right)| \le \kappa_{d}' {\rm Vol}[R]\left<T_{tt} \right>,
    \end{aligned}
    \end{equation}
    with
    \begin{equation}
    \begin{aligned}
    \kappa'_{d} = d^{-\frac{ d }{ 2(d-1) }}\frac{\Gamma\qty(\frac{1}{2(d-1)})}{\Gamma\qty(\frac{d}{2(d-1)})}
     \begin{cases}
         2\pi & R {\rm \ is\ an\ interval\ and\ } d=2, \\
         \sqrt{\frac{4\pi}{(d-1)}} & R {\rm\ is\ a\ strip\ and\ } d>2,  \\
         \sqrt{16\pi(d-1) } & R {\rm\ is\ a\ ball\ and\ }  d>2.
    \end{cases}
    \end{aligned}
    \end{equation}
\end{thm}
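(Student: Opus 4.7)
The plan is to parallel the proof of Theorem~\ref{thm:mainthm2} — using the momentum-entanglement identity \eqref{eq:dSdtintro} in the thin-shell setting — but re-optimize the concluding step so as to trade the area scaling of that theorem for a volume scaling valid for all region sizes. The thin-shell hypothesis plays two crucial roles. First, outside the shell planar symmetry together with the Einstein--Maxwell equations forces the geometry to be planar AdS-Reissner-Nordström with mass parameter $m \propto \langle T_{tt} \rangle$, while inside it is pure AdS, so the HRT surface has an explicit piecewise form joined at the shell radius $r_s$. Second, the matter momentum $P$ in \eqref{eq:dSdtintro} is supported entirely on the shell, so the flux integral over $X_t$ collapses to an integral over the codimension-two locus $X_t \cap \mathrm{shell}$.

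Concretely, I would first apply \eqref{eq:dSdtintro} together with the localization to obtain an estimate of the schematic form
\begin{equation*}
|\partial_t S_R| \;\le\; G(r_0)\,\sigma(r_s)\,\mathrm{Area}[X_t \cap \mathrm{shell}],
\end{equation*}
where $r_0$ is the tip radius of $X_t$, $r_s$ is the shell radius ($r_s \geq r_0$), and $\sigma(r_s)$ is the shell's surface momentum density. The Israel junction conditions together with the DEC on $\mathcal{T}^{\rm shell}_{ab}$ bound $\sigma(r_s)$ by the jump in the planar Hawking mass across the shell, which is in turn bounded by $m$ and hence by $\langle T_{tt} \rangle$ with the coefficient $\frac{16\pi}{(d-1) c_{\rm eff}}$ that already appears in Theorems~\ref{thm:mainthm2} and \ref{thm:mainthm3}. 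The second step is to express the remaining geometric factors in terms of $\ell$: inside the shell the surface is a pure-AdS extremal strip or ball, whose length-to-tip-radius relation is a Beta-function integral that produces exactly the $\Gamma$-function ratio appearing in $\kappa'_d$. An AM-GM-type inequality applied to the product of $r_s^{d-1}$ and the propagator factor $G(r_0)$, optimized over the allowed range $r_s \geq r_0$, yields the additional $d^{-d/(2(d-1))}$ factor that distinguishes $\kappa'_d$ from the $\kappa_d$ of Theorem~\ref{thm:mainthm3}, and the result is a bound of the advertised form valid uniformly in $\ell$.

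The principal obstacle is the ball case in $d>2$, where the extremal surface is not translation-invariant and the length-versus-tip-radius relation becomes a genuinely higher-dimensional integral; matching its value to the stated $\Gamma$-function ratio relies on the explicit profile of extremal balls and, in particular, the sharp lower bound on $r_0$ in terms of $\ell$ established in Sec.~\ref{sec:generaldim}. A milder subtlety is the $U(1)$ charge contribution: the Maxwell stress tensor enters both the exterior geometry (promoting AdS-Schwarzschild to AdS-Reissner-Nordström, so that $m$ absorbs a charge-dependent piece) and the flux crossing $X_t$. Because Maxwell stress-energy satisfies the DEC, the charge contributions enter with a consistent sign and strengthen rather than spoil the inequality, so the $Q=0$ analysis carries through without modification.
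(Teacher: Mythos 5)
Your overall strategy coincides with the paper's: localize the momentum flux of \eqref{eq:dSonHRT} onto the shell, use the DEC on $\mathcal{T}^{\rm shell}_{ab}$ together with the junction conditions to bound the shell momentum by the Hawking-mass data, convert $L^2/r_0$ into $\ell$ via the tip-radius bound (the source of the $\Gamma$-function ratio), and optimize over the shell radius $\hat r\geq r_0$ to produce the $d^{-d/(2(d-1))}$ factor. However, there are two concrete gaps. First, your claim that the interior of the shell is pure AdS is not implied by the hypotheses of Theorem~\ref{thm:mainthm2}: the shell may sit outside a pre-existing (possibly charged) black brane, so the interior Riemannian Hawking mass $\omega_-$ at the shell is an additional free parameter. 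The paper only knows $\omega_-\in[0,\hat r^d]$ (the lower bound from Lemma~\ref{lem:omega0}, the upper from spacelikeness of $\Sigma$), and it must maximize the ratio $W^n/\mu(\infty)$ over the full triple $(W,\hat r,\omega_-)$; the convexity argument showing the maximum sits on the boundary of the $\omega_-$ domain, and the separate check of the $\omega_-$-maximal endpoint, are steps your sketch omits and that do not follow from the $\omega_-=0$ case alone.

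Second, and more seriously, the chain ``$|\partial_t S|\le G(r_0)\,\sigma(\hat r)\,\mathrm{Area}$, with $\sigma$ bounded by the mass jump'' does not close by itself. The DEC junction inequality \eqref{eq:disconteq} is nonlinear: it gives $\omega_+-\omega_-\gtrsim \#\,\eta-\#'\,\eta^2$, so the jump in the Hawking mass \emph{underestimates} the momentum precisely when the shell approaches the null (Vaidya) limit, where $W$ saturates \eqref{eq:dom2} and the right-hand side can vanish. The ratio $W/\omega_+$ is therefore not bounded by the jump alone. What rescues the argument in the paper is the positive $K_{\phi\phi}^2 h_1$ contribution to the Hamiltonian constraint \eqref{eq:omegasol} integrated from $\hat r$ to infinity (the $Q(\infty)$ term), which exactly cancels the quadratic-in-$W$ deficit and yields the clean linear bound $W\le\alpha_d (L/r_0)\,\omega(\infty)$. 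Since the Vaidya limit is the physically central case (quenches), this is not an optional refinement but the step that makes the bound finite; your proposal as written would fail there. Minor additional points: the explicit AdS-RN exterior and the piecewise-explicit HRT profile are unnecessary (only $B\ge L^2/r^2$, i.e.\ $\omega\ge 0$, enters Theorem~\ref{thm:r0bound}), and your treatment of the ball case and of the Maxwell contribution (which feeds only $\mathcal{E}$, never $\mathcal{J}$, in planar symmetry) is consistent with the paper.
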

\noindent For a strip, a few values of the prefactors are
\begin{equation}
\begin{aligned}
\kappa_d = \begin{cases}
    2\pi & d=2 \\
    2.62\ldots & d=3 \\
    2.43\ldots & d=4 \\
    2 & d=\infty \\
\end{cases}, \qquad
\kappa_d' = \begin{cases}
    2\pi & d=2 \\
    3.25\ldots & d=3 \\
    3.34\ldots & d=4 \\
    4 & d=\infty. \\
\end{cases}
\end{aligned}
\end{equation}

We will now outline the strategy used to obtain these bounds. First, we observe
that there exists exactly one homology hypersurface $\Sigma_t$ that both
contains $X_t$, and which respects the planar symmetry of $(\mathcal{M}, g_{ab})$.
Then we show that the location of $X_t$ on $\Sigma_t$ can be solved for exactly
in terms of the intrinsic geometry on $\Sigma_t$. Together with the DEC, this
fact allows us to lower bound the radius of the tip
of the HRT surface. Next, we use the fact that since $X_t$ is extremal,
the first order variation of its area is a pure boundary term located at $\partial
X_t$ \cite{BaoCao19}, and we show that this boundary term is simply given by a particular
component of the extrinsic curvature of $\Sigma_t$ as $r\rightarrow \infty$.
Then we work out the form of Einstein constraint equations on $\Sigma_t$, and
show that the relevant extrinsic curvature component can be written as an
integral of the matter flux over the HRT surface. Finally, essentially relying
on inverse mean curvature flow of Lorentzian and Riemannian Hawking
masses, and their monotonicity properties under these flows, we bound the integrated matter 
flux across the HRT surface from above in terms of the mass of the spacetime.

Now, before we dive in, we should clarify the meaning of radii in planar symmetric spacetimes. Since we
have planar symmetry, spacetime has a two-parameter foliation where each leaf is
a codimension$-2$ spacelike plane that has the usual flat intrinsic metric. When we talk
about a plane, we always mean one of these leafs. These planes can
all be assigned an ``area radius'' $r$, and it is possible to view $r$ as a
scalar function on spacetime which is not tied to any coordinate. 
Nevertheless, unlike in spherical symmetry, there is an overall scaling ambiguity 
in this function, since the non-compactness of the planes means we cannot
normalize $r$ to some area -- there is no ``unit plane''.
However, if we choose some Minkowski conformal frame on the boundary, we can
fix the overall normalization of $r$ by demanding that the defining function
$\Omega$ that takes us to the chosen conformal frame is $\Omega = r/L$. We will
implicitly assume such a choice, and refer to \textit{the} radius of a
plane.

\subsection{An explicit solution for the HRT surface location}\label{sec:Xlocsol}

\begin{figure}
    \centering
    \includegraphics[width=.5\textwidth]{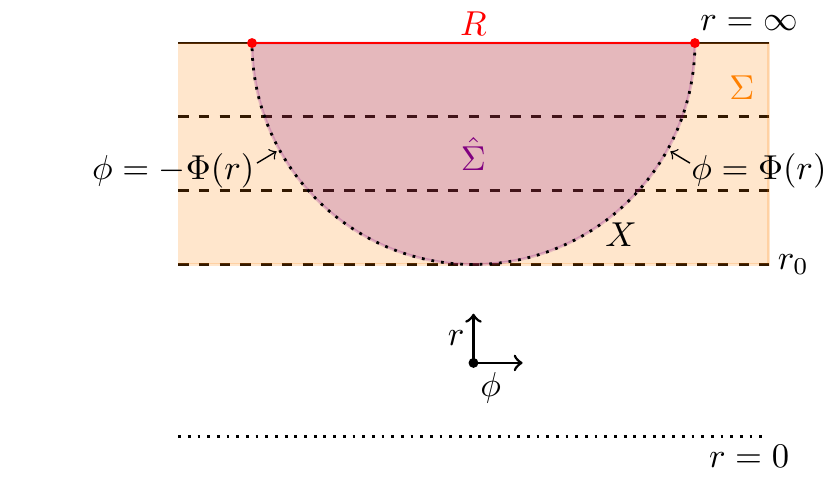}
    \includegraphics[width=.32\textwidth]{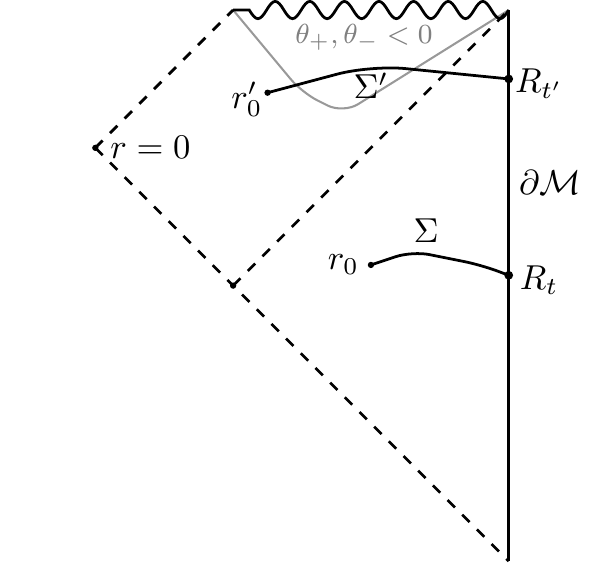}
    \caption{Left: the planar symmetric homology hypersurface $\hat{\Sigma}$
    with respect to the HRT surface $X$. $\Sigma$ is the
    extended homology hypersurface, whose boundary is the plane at $r=r_0$.
    Dashed lines are planes -- i.e. constant $r$ surfaces. 
    Right: example conformal diagram indicating
    possible embeddings of two extended homology hypersurfaces $\Sigma$ and
    $\Sigma'$. The grey line is an apparent horizon, with vanishing outwards
    null expansion, $\theta_{+}=0$.}
    \label{fig:planarHRTs}
\end{figure}

Without loss of generality, we will bound the time-derivative at $t=0$ and use
the shorthands $X_{t=0}=X$ and $R_{t=0}=R$. 
Since $R$ is a strip contained in a canonical time slice of
Minkowski, and since the ambient spacetime has planar symmetry,
there exists a homology hypersurface $\hat{\Sigma}$ of $X$ respecting the planar symmetry
-- see Figure \ref{fig:planarHRTs}. We can pick coordinates on $\hat{\Sigma}$ so that its
induced metric
$H_{\alpha\beta}$ reads
\begin{equation}
\begin{aligned}
   H_{\mu\nu}\dd y^{\mu}\dd y^{\nu} = B(r) \dd r^2 + r^2(\dd \phi^2 + \dd \bm{x}^2), \quad
    r \in [r_0, \infty),\quad \phi \in [-\Phi(r), \Phi(r)], \label{eq:cancoords}
\end{aligned}
\end{equation}
where $\phi=\Phi(r)$ is the coordinate embedding function of (half of) the HRT surface in
$\hat{\Sigma}$, as illustrated in Figure \ref{fig:planarHRTs}. $r_0$ is the smallest value of $r$ probed
by the HRT surface, corresponding to its tip. $\hat{\Sigma}$ can naturally
be extended to include all $\phi \in \mathbb{R}$ by planar symmetry, and this choice turns out to
be convenient for us. We denote the corresponding hypersurface as $\Sigma$,
and refer to it as the extended homology hypersurface.\footnote{
$\hat{\Sigma}$ and $\Sigma$ are unique. 
Planar symmetry means that $(\mathcal{M}, g_{ab})$ can be foliated by planes, and every point $p \in X$ lies in
some plane in this foliation. Demanding planar symmetry of $\Sigma$ requires that the
full leaf intersected by $p$ is included in $\Sigma$, and so we have a one-parameter
family of codimension$-2$ surfaces picked out by $X$, which thus fully specifies
$\Sigma$, and similarly for $\hat{\Sigma}$.}  
See Figure ~\ref{fig:planarHRTs}. The boundary of $\Sigma$ (in the bulk proper) is a plane of radius $r_0$.

Relying on the formulas derived in the remainder of this section, we prove the
following Lemma in appendix~\ref{app:nothroat}:
\begin{lem}\label{lem:cancoords}
    Let $\Sigma$ be the extended homology hypersurface of an HRT surface $X$
    anchored at a strip region given by \eqref{eq:stripregiondef}. Then a single coordinate system of the form
    \begin{equation}
    \begin{aligned}
        \dd s^2 = B(r)\dd r^2 + r^2 \dd \bm{x}^2\label{eq:cancoords2}
    \end{aligned}
    \end{equation}
    is enough to cover all of $\Sigma$. Furthermore, $X$ has only
    one turning point, meaning that embedding function $r(\phi)$ is monotonically increasing for $\phi \geq 0$.
\end{lem}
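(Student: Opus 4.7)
The plan is to prove the two claims of the lemma separately: (i) that the embedding function $r(\phi)$ of $X$ is monotonically increasing for $\phi \geq 0$, so that $\phi = 0$ is its unique turning point; and (ii) that the area radius $r$ is a good global coordinate on all of the extended homology hypersurface $\Sigma$, so that the single chart \eqref{eq:cancoords2} covers $\Sigma$. Claim (i) is a local statement about the shape of the extremal surface $X$, while claim (ii) amounts to ruling out throats (interior minimal planes) on $\Sigma$.

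For (i), I would exploit planar symmetry to reduce extremality of $X$ to a first-order ODE. After integrating out the transverse $\bm{x}$-directions, the area functional of $X$ on $\hat{\Sigma}$ becomes a one-dimensional action whose Lagrangian depends only on $r$ and $r'(\phi)$. Translation invariance in $\phi$ then yields a conserved quantity, whose value is fixed by evaluating at the tip ($r = r_0$, $r' = 0$). This produces a first-order equation of the schematic form
\begin{equation*}
(r')^2 \;\propto\; \frac{r^2}{B(r)}\left(\frac{r^{2(d-1)}}{r_0^{2(d-1)}} - 1\right).
\end{equation*}
Since $B(r) > 0$ by Riemannian signature of $\Sigma$, and since the right-hand side vanishes only at $r = r_0$ and is strictly positive for $r > r_0$, the function $r(\phi)$ has no turning point other than the tip and is monotonically increasing for $\phi > 0$. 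This reasoning uses only the formulas derived in the remainder of Sec.~\ref{sec:strip} for the extremality condition adapted to planar symmetry.

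For (ii), I would argue by contradiction: if \eqref{eq:cancoords2} failed as a global chart, there would have to exist a plane in the interior of $\Sigma$ at some radius $r_* > r_0$ on which $\nabla r = 0$, i.e., a throat. The natural tool for ruling this out is the monotonicity of a Riemannian Hawking-type quasi-local mass along the foliation of $\Sigma$ by constant-$r$ planes, which follows from the DEC together with the Hamiltonian constraint adapted to planar symmetry (both derived in the subsequent subsections of Sec.~\ref{sec:strip}). At a throat $|\nabla r|^2 = 0$ pins the Hawking mass to a specific value, while its values at the inner boundary plane $r = r_0$ (fixed by the extremality of $X$) and asymptotically (controlled by $\langle T_{tt}\rangle$) leave no room for such an intermediate extremum, producing the contradiction.

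The main obstacle is claim (ii): whereas (i) follows cleanly from first-order Euler--Lagrange analysis plus positivity of $B(r)$, the no-throat statement is a global claim about $\Sigma$ that requires the Hawking-mass monotonicity machinery developed later in the section, together with a careful accounting of boundary data at the tip $r_0$ and at conformal infinity. Once these ingredients are assembled, the contradiction essentially reduces to a sign-bookkeeping exercise along Riemannian inverse mean curvature flow restricted to $\Sigma$, which is why it is natural to relegate the argument to an appendix that invokes results proven in the main text.
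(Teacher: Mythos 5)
Your part (i) is close in spirit to the paper's argument, but your part (ii) takes a genuinely different route that, as sketched, does not actually close.

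On (i): the first integral you write down is exactly the one the paper uses (it is \eqref{eq:phisol} in reciprocal form), but your conclusion ``the right-hand side vanishes only at $r=r_0$'' is not true on its own: $(r')^2 \propto \frac{r^2}{B(r)}\bigl((r/r_0)^{2d-2}-1\bigr)$ also vanishes wherever $B(r)\to\infty$, i.e.\ at a throat. So your monotonicity argument silently presupposes the no-throat statement (ii), and the two claims must be proven in the opposite order from how you present them. The paper does this: it first rules out throats, and only then argues (via the second-order ODE near a putative interior local maximum of $r(\phi)$, where boundedness of $B$ forces $\Phi''$ and $\Phi'$ to have opposite signs) that there is a single turning point.

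On (ii): your proposed mechanism --- DEC plus monotonicity of the Riemannian Hawking mass $\omega$ leaving ``no room'' for a throat --- does not produce a contradiction. A throat at $r_T$ means $\omega(r_T)=r_T^{d}/L^2$, and this is perfectly compatible with $\omega(r_0)\ge 0$, $\omega'\ge 0$, and a finite $\omega(\infty)\propto\langle T_{tt}\rangle$; indeed, complete planar-symmetric slices of black-brane spacetimes have throats at the bifurcation surface while satisfying all of these. What actually forbids a throat on $\Sigma$ is not an energy condition but the presence of $X$ itself: the same first integral shows $|\Phi'(r)|^2\propto B(r)$, so at any plane where $B$ diverges the extremal surface must be \emph{tangent} to that plane; yet $X$ runs from its tip at $r_0$ (inside any interior throat of $\Sigma$) out to the conformal boundary, so it must cross the outermost throat transversally somewhere. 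That is the paper's contradiction, and notably it requires no energy condition at all, whereas your route would weaken the lemma by importing the DEC (which is not among the lemma's hypotheses). There is also a latent circularity in your plan: the constraint-equation form of $\omega'$ that you want to invoke is itself derived using the global chart \eqref{eq:cancoords2} and the explicit $\Phi(r)$, i.e.\ using the conclusion of the lemma; the paper avoids this by working only in the region outside the outermost throat, where the chart is guaranteed to exist.
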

\noindent This means one function $\Phi(r)$ contains all the information about the
embedding of $X$ in $\Sigma$ -- we do not need multiple branches. It also means that $\Sigma$ cannot have any locally
stationary planes -- that is -- no planes of vanishing mean curvature, where
$B(r)$ would blow up.  The means we never need to worry about patching across coordinate systems when
working on $\Sigma$. Geometrically, it implies that $\Sigma$ has no
``throats''.

Now, taking $(r, \bm{x})$ to be coordinates on $X$, the induced metric on $X$ reads
\begin{equation}\label{eq:coordsHRT}
\begin{aligned}
    \dd s^2|_{X}  = \left[B(r)+r^2 \Phi'(r)^2\right]\dd r^2 + r^2 \dd \bm{x}^2.
\end{aligned}
\end{equation}
Since $X$ is an extremal surface, its area is stationary under all variations,
including under variations within $\Sigma$. Enforcing this gives an ODE for $\Phi(r)$ in terms
of $B(r)$. To find it, we compute the mean curvature $\mathcal{K}$ of $X$ viewed
as a submanifold of $\Sigma$
and demand it to be zero. This gives the equation (see appendix~\ref{app:meancurv} for a
computation)
\begin{equation}\label{eq:planarAeq}
    r B \Phi'' + (d-1) r^2 \left(\Phi'\right)^3 + \Phi' \left(d B-\frac{r}{2}
    B'\right)= 0.
\end{equation}
The relevant boundary conditions are
\begin{equation}
\begin{aligned}
    \Phi'(r_0) = \infty, \quad \Phi(r_0) = 0,
\end{aligned}
\end{equation}
where the former says that $r_0$ is the radius of the plane tangent to the tip of the HRT surface (i.e. where
$\frac{ \dd r }{ \dd \phi }=0$), while the latter implements that $\phi=0$
corresponds to the center of the strip.
It turns out that equation \eqref{eq:planarAeq} can be integrated, and the solution with the
correct boundary condition is 
\begin{equation}
\begin{aligned}
    \Phi(r) = \int_{r_0}^{r}\dd \rho \frac{ \sqrt{B(\rho)} }{ \rho
    \sqrt{(\rho/r_0)^{2d-2} - 1} }.\label{eq:phisol}
\end{aligned}
\end{equation}
This gives the location of the HRT surface within $\Sigma$ explicitly in terms
of the geometry of $\Sigma$. We now use this solution to determine the
Einstein constraint equations on $\Sigma$, and to derive a formula for the
rate of change of the entanglement growth.

\subsection{A momentum-entanglement correspondence}
Since $X_t$ is extremal, its first order variation reduces to a pure boundary
term given by (see for example the appendix of \cite{FisWis16, BaoCao19}):
\begin{equation}\label{eq:Neta}
\begin{aligned}
    \frac{ \dd \text{Area}[X_t] }{ \dd t }\Big|_{t=0} = \int_{\partial X} N^a\eta_a,
\end{aligned}
\end{equation}
where $\eta^a$ is the translation vector generating
the flow of $\partial X_t$ at conformal infinity at $t=0$, while $N^a$ is the normal
to $\partial X$ that is also tangent to $X$, and that points towards the
conformal boundary. In writing this formula, we implicitly assume that it is
evaluated with some near-boundary cutoff that is
subsequently removed. As is well known, given some choice of boundary conformal frame, a
canonical choice of cutoff exists \cite{FefGra85,GraLee91,GraWit99}, which in our case reduces to a cutoff
in the radial coordinate $r$. With a cutoff adapted to the Minkowski conformal
frame and the falloffs \eqref{eq:falloffs}, \eqref{eq:Neta} is finite, even
though $\text{Area}[X_t]$ diverges.

Now we write \eqref{eq:Neta} in a more useful form. We will give all the main
steps, but relegate tedious but straight forward computations to the appendix.

Using the planar symmetry of $\Sigma$, the extrinsic curvature
$K_{\alpha\beta}$ of $\Sigma$ is given by
\begin{equation}
\begin{aligned}
    K_{\mu\nu}\dd y^{\mu}\dd y^{\nu} = K_{rr}(r)\dd r^2 +
    K_{\phi\phi}(r)\left[\dd \phi^2 + \dd\bm{x}^2\right],
\end{aligned}
\end{equation}
where we take the extrinsic curvature to be defined with respect to the future
directed normal. Using this, we show in appendix~\eqref{app:dSKphi}, retracing
the steps of \cite{EngFol21}, that
\begin{equation}\label{eq:dAdt}
\begin{aligned}
    \frac{ \dd \text{Area}[X_t] }{ \dd t }\Big|_{t=0} = -\frac{ \text{Area}[\partial R] }{
    L^{d-2}}\lim_{r \rightarrow
    \infty}r^{d-3}K_{\phi\phi}.
\end{aligned}
\end{equation}
Physically, $\lim_{r \rightarrow \infty}r^{d-3}K_{\phi\phi}$ measures the
boost angle
at which $X$ hits the conformal boundary, or rather, the subleading part of the
angle, since extremality implies that $X$ hits $\partial \mathcal{M}$
orthogonally. This can be seen by studying extremal surfaces in a near-boundary
expansion. 
Thus, we see that the entanglement growth is, up to a factor,
identically given by the (subleading) boost angle at which the HRT surface hits the
boundary. The same was found for maximal volume slices in \cite{EngFol21}.

Next we want to find a more explicit expression for $\lim_{r \rightarrow \infty
}r^{d-3}K_{\phi\phi}$. To do this, we need to use the Einstein constraint
equations, which read
\begin{equation}
\begin{aligned}
    \mathcal{R} + K^2 -K^{\alpha \beta}K_{\alpha \beta} + \frac{ d(d-1) }{ L^2
    }&= 16 \pi G_N \mathcal{T}_{ab}t^a t^b, \\
    D_{\alpha}K\indices{^{\alpha}_\beta} - D_{\beta}K &= 8\pi G_N
    \mathcal{T}_{ab}t^a e^{b}_{\beta},
\end{aligned}
\end{equation}
where $\mathcal{R}$ is the Ricci scalar of the metric on
$\Sigma$, $t^a$ the future unit normal to $\Sigma$,
$K=H^{\alpha\beta}K_{\alpha\beta}$, and $e^{a}_{\alpha}$ a set of tangent
vectors to $\Sigma$. To write these equations in coordinate form, it is convenient to
introduce the function $\omega(r)$ as
\begin{equation}
\begin{aligned}
    B(r)  = \frac{ 1 }{ \frac{ r^2 }{ L^2  } - \frac{\omega(r)}{r^{d-2}} }.
\end{aligned}
\end{equation}
We will call $\omega(r)$ the Riemannian Hawking mass.\footnote{For $d=3$ it is
also known as the Geroch-Hawking mass \cite{Ger73,JanWal77,Wan01,HuiIlm01}, and
its was used to prove the Riemannian Penrose inequality \cite{HuiIlm01}.} 
It will play a central role in our work. Whether or not $\omega(\infty)$ is
proportional to the spacetime mass for some general spacelike hypersurface
$\Sigma$ depends on the behavior of the extrinsic
curvature $\Sigma$ at large $r$. It turns out that for $d\geq 3$, and with
$\Sigma$ being the extended homology hypersurface of an HRT surface, 
it has the property that it is proportional to the CFT
energy density: 
\begin{equation}\label{eq:TCFT}
\begin{aligned}
\left<T_{tt}\right> = \frac{ d-1 }{ 16 \pi G_N L^{d-1}  }\omega(\infty), \qquad
d\geq 3.
\end{aligned}
\end{equation}
For $d=2$, the right hand side is a lower bound on
$\left<T_{tt}\right>-\left<T_{tt}\right>_{\rm vac}$, where the vacuum energy
must be subtracted when we allow $\phi$ to be periodic. 
We will explain these facts in Sec.~\ref{eq:d2proof}.

It is also convenient to redefine $K_{rr}(r)$ in terms of a function $F(r)$
which is the $rr$-component of the extrinsic curvature in an orthonormal basis
\begin{equation}
    \begin{aligned}
        K_{rr}(r) \equiv B(r)F(r).
    \end{aligned}
\end{equation}
In terms of these functions, the constraint equations in coordinate form read
\begin{align}
    (d-1) \frac{\omega'(r)}{r^{d-1}} &= 2 \mathcal{E}(r) - \frac{ (d^2 - 3d + 2) }{ r^4 }K_{\phi\phi}(r)^2 - \frac{
        2(d-1) }{ r^2 }F(r) K_{\phi\phi}(r), \label{eq:preConstraint1}\\
    K_{\phi\phi}'(r) - \frac{ K_{\phi\phi}(r) }{ r } &= r F(r) - \frac{ r^2 }{
        d-1}\mathcal{J}(r), \label{eq:preConstraint2}
\end{align}
where we introduced the notation
\begin{equation}
\begin{aligned}
    \mathcal{E} &=  8 \pi G_N \mathcal{T}_{ab}t^a t^b, \label{eq:EJdef}\\
    \mathcal{J} &= 8\pi G_N \mathcal{T}_{ab} (\partial_r)^a t^{b}.
\end{aligned}
\end{equation}
These are (proportional to) the energy density and radial momentum density of the matter
with respect to the frame $t^a$. $\mathcal{J}>0$ corresponds to matter falling
into the bulk towards smaller $r$. From \eqref{eq:falloffs} and the fact that
$B(r)\sim \mathcal{O}(r^{-1})$, we find that
\begin{equation}
\begin{aligned}
    \mathcal{E}\sim o\left(1/r^{d}\right), \qquad \mathcal{J}\sim
    o\left(1/r^{d+1}\right),
\end{aligned}
\end{equation}
where we use that $\frac{ 1 }{ \sqrt{B} }(\partial_r)^a$ is a unit vector.

To turn \eqref{eq:preConstraint1} and \eqref{eq:preConstraint2} into a
closed system, we will eliminate $F(r)$. We do this by imposing extremality of
$X$ in the direction of $t^a$. To do this, note that the inwards
(outwards) null
expansion $\theta_+$ ($\theta_-$) of $X$ can be written as (see for example
the appendix of \cite{EngFol21})
\begin{equation}\label{eq:thetaonslice}
\begin{aligned}
    \sqrt{2}\theta_{\pm}[X] =\pm \mathcal{K}[X]+ K -
    n^{\alpha}n^{\beta}K_{\alpha\beta},
\end{aligned}
\end{equation}
where $n^{\alpha}$ is the outwards normal to $X$ within $\Sigma$,\footnote{
We have here taken the outwards and inwards null vectors, $k_+^a$ and $k_-^a$,
respectively, to be $k_{\pm}^a = 2^{-1/2}(t^a \pm n^a)$.
}
and where we remind that $\mathcal{K}[X]$ is the mean curvature of $X$ within
$\Sigma$. Extremality means $\theta_{+}=\theta_{-}=0$, which implies that $\mathcal{K}=0$ and
\begin{equation}
\begin{aligned}
    K|_{X} = n^{\alpha}n^{\beta}K_{\alpha\beta}|_{X}.
\end{aligned}
\end{equation}
This equation holds at $\phi=\Phi(r)$, which by planar symmetry means it holds
everywhere on $\Sigma$. Writing out this equation in coordinates, carried out in
appendix~\ref{app:tExtremality}, we find
\begin{equation}\label{eq:extremalitycond}
\begin{aligned}
    F+\frac{({d}-2) K_{\phi\phi}}{r^2}+({d}-1) K_{\phi\phi} \frac{(\Phi
    ')^2}{B}=0.
\end{aligned}
\end{equation}
Plugging in the solution for $\Phi(r)$, given in \eqref{eq:phisol}, we get that
\begin{equation}
\begin{aligned}
    F(r) = -\frac{  K_{\phi\phi}(r) }{ r^2 }\left(\frac{
    (d-2)r^{2d-2} + r_0^{2d-2}}{ r^{2d-2} -  r_0^{2d-2} }
    \right),
\end{aligned}
\end{equation}
which upon insertion into the constraints, gives a closed system of ODEs
\begin{align}
    \frac{\omega'(r)}{r^{d-1}} &= \frac{ 2 }{ d-1 } \mathcal{E}(r) + 
    \frac{K_{\phi\phi}^2}{r^{4}}h_1(r)\label{eq:omegaeq},
    \\
    K_{\phi\phi}'(r) & + \frac{ K_{\phi\phi} }{ r }h_2(r) = - \frac{ r^2 }{ d-1
    }\mathcal{J}(r) \label{eq:Keq},
\end{align}
where
\begin{equation}
\begin{aligned}
    h_1(r) = \frac{ (d-2)\left({ r }/{ r_0 }\right)^{2d-2} + d }{ \left({r}/{r_0}\right)^{2d- 2} -
    1 }, \qquad
    h_2(r) = \frac{ (d-3)\left({ r }/{ r_0 } \right)^{2d-2}  + 2 }{ \left({ r }/{ r_0 } \right)^{2d-2} 
    - 1}.
\end{aligned}
\end{equation}
Now, $F(r)$ is
a component of the extrinsic curvature in an orthonormal basis, so it must be
finite at $r_0$. Using this to fix an integration constant,
we find that the solutions of \eqref{eq:omegaeq} and \eqref{eq:Keq} are
\begin{align}
    K_{\phi \phi }(r) &= -\frac{r^2}{(d-1)\sqrt{(r/r_0)^{2
    d-2}-1}}\int_{r_0}^r \dd\rho \mathcal{J}(\rho)\sqrt{(\rho/r_0)^{2 d-2}-1},
    \label{eq:Ksol}\\
    \omega(r) &= \omega(r_0) + \int_{r_0}^{r}\dd \rho
    \left[ 
    \rho^{d-5} h_1(\rho) K_{\phi\phi}(\rho)^2
    + \frac{ 2 \rho^{d-1} }{ d - 1 }\mathcal{E}(\rho) \right], \label{eq:omegasol}
\end{align}
where $K_{\phi\phi}(r_0) =0$ since $\mathcal{J}(\rho)$ must be
bounded.\footnote{For thin-shell spacetimes $\mathcal{J}(\rho)$ can be a delta
function, but we can safely assume this delta function does not have support
exactly at $r=r_0$.}
Inserting \eqref{eq:Ksol} into \eqref{eq:dAdt} and multiplying
by $(4G_N)^{-1}$, we get that
\begin{equation}
    \begin{aligned}\label{eq:dSonHRT}
    \frac{ \dd S_R}{ \dd t }\Big|_{t=0} = 
     \frac{ \text{Area}[\partial R] }{ 4G_N L^{d-2} (d-1) }
        \int_{r_0}^{\infty}\dd r \mathcal{J}(r) \sqrt{r^{2d-2} - r_0^{2d-2}}.
\end{aligned}
\end{equation}
Since $\mathcal{J}>0$ corresponds to a flux of energy density towards decreasing
$r$, we see that matter falling out of the entanglement wedge and deeper into the bulk is directly responsible for the
increase of entanglement. Conversely, outgoing matter is responsible for decrease
in entanglement. We can also rewrite this formula in a covariant way. In
appendix \ref{app:dSKphi} we show that
\begin{equation}\label{eq:dSonHRT2}
\begin{aligned}
    \frac{ \dd S_R}{ \dd t }\Big|_{t=0} = \int_{X} G \mathcal{T}_{ab}n^a t^a
\end{aligned}
\end{equation}
where $n^a$ is the outwards unit normal to $X$ that is tangent to $\Sigma$, and
\begin{equation}\label{eq:Gdef}
\begin{aligned}
    G(r) = \frac{ 2\pi r^{d} }{(d-1) r_0^{d-1}  }.
\end{aligned}
\end{equation}

The formulas \eqref{eq:dAdt}, \eqref{eq:Ksol}--\eqref{eq:Gdef} are the main results of this section. These results together
with the theorems proven in the following section are crucial pieces to our
proven bounds. 
\subsection{Geometric constraints on the HRT surface}\label{sec:HRTconstr}
In this section, we prove the following
\begin{thm}\label{thm:r0bound}
    Let $(\mathcal{M},g_{ab})$ be a regular asymptotically AdS$_{d+1\geq 3}$
    spacetime with planar symmetry satisfying the DEC.  Let $X$ be the HRT surface
    of a strip $R$ of width $\ell$, and let be $r_0$ be the smallest radius
    probed by $X$.  Then
        \begin{equation}\label{eq:r0bnd}
        \frac{ L^2 }{ r_0 } \leq \frac{
            \Gamma\left(\frac{ 1 }{ 2(d-1) }
            \right)}{2\sqrt{\pi}\Gamma\left(\frac{ d }{ 2(d-1) } \right) } \ell.
        \end{equation}
        Furthermore, if $r_{0, \rm vac}$ is the smallest radius probed by the
        HRT surface $X_0$ of a strip of width $\ell$ in pure AdS$_{d+1}$, then
        \begin{equation}\label{eq:r0proof}
        \begin{aligned}
            r_0 \geq r_{0, {\rm vac}}.
        \end{aligned}
        \end{equation}
\end{thm}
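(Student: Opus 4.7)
The plan is to prove $r_0 \geq r_{0,{\rm vac}}$ first; the bound \eqref{eq:r0bnd} then follows immediately since it is just $L^2/r_0 \leq L^2/r_{0,{\rm vac}}$ once $r_{0,{\rm vac}}$ is computed explicitly. In pure AdS, $B(\rho)=L^2/\rho^2$, so \eqref{eq:phisol} with $\Phi(\infty)=\ell/(2L)$ and the substitution $u=\rho/r_{0,{\rm vac}}$ gives $\ell=(2L^2/r_{0,{\rm vac}})\int_1^\infty du/(u^2\sqrt{u^{2d-2}-1})$. A standard Beta-function reduction (for instance via $v=u^{2-2d}$) evaluates the remaining integral to $\sqrt{\pi}\,\Gamma(d/(2(d-1)))/\Gamma(1/(2(d-1)))$, identifying $L^2/r_{0,{\rm vac}}$ with the right-hand side of \eqref{eq:r0bnd}.

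For $r_0 \geq r_{0,{\rm vac}}$ itself, I would use a monotonicity comparison. Since the vacuum width $\ell_{\rm vac}(r_0)\propto 1/r_0$ is strictly decreasing, $r_0 \geq r_{0,{\rm vac}}(\ell)$ is equivalent to $\ell_{\rm vac}(r_0) \leq \ell$. Writing the actual width as $\ell = 2L\int_{r_0}^\infty \sqrt{B(\rho)}/(\rho\sqrt{(\rho/r_0)^{2d-2}-1})\,d\rho$ via \eqref{eq:phisol}, and its vacuum counterpart with $\sqrt{B_{\rm vac}(\rho)}=L/\rho$, comparing integrands reduces the claim to the pointwise inequality $\sqrt{B(\rho)} \geq L/\rho$ on $[r_0,\infty)$; by the definition of the Riemannian Hawking mass, this is equivalent to $\omega(\rho) \geq 0$ throughout $[r_0,\infty)$. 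The constraint \eqref{eq:omegaeq} then gives $\omega'(\rho) \geq 0$ on $(r_0,\infty)$: the DEC implies $\mathcal{E} \geq 0$ via the weak energy condition, $h_1(\rho) > 0$ for $\rho > r_0$, and $K_{\phi\phi}^2 \geq 0$. Hence $\omega$ is non-decreasing outward and the problem reduces to the single boundary condition $\omega(r_0) \geq 0$.

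The hard part will be this last step, since the constraint integrations only yield upper bounds such as $\omega(r_0) \leq \omega(\infty)$, not lower bounds. My plan would be to close the gap by exploiting the interplay between $\omega$ and the Lorentzian Hawking mass $m(r)$, defined via $g^{rr}=r^2/L^2 - m(r)/r^{d-2}$, that the abstract emphasizes, together with the fact announced in the introduction that HRT tips cannot lie in trapped regions, which ensures $g^{rr}(r_0) \geq 0$ and hence $m(r_0) \leq r_0^d/L^2$. Using $K_{\phi\phi}(r_0)=0$ and the near-tip expansion of \eqref{eq:Ksol} to relate $\omega(r_0)$ to $m(r_0)$ with controlled sign, combined with appropriate IMCF-type monotonicity for $m$ under the DEC and the non-negativity of $\omega(\infty)$ coming from the CFT stress-energy, one expects $\omega(r_0) \geq 0$, completing the argument.
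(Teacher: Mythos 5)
Your overall reduction coincides with the paper's: everything hinges on showing $\omega(r)\geq 0$ on $[r_0,\infty)$; the constraint \eqref{eq:omegaeq} together with the DEC gives $\omega'\geq 0$ there; the vacuum integral evaluates to the stated ratio of Gamma functions; and comparing integrands in \eqref{eq:phisol} then yields both \eqref{eq:r0bnd} and \eqref{eq:r0proof}. (The paper obtains $r_0\geq r_{0,\rm vac}$ by noting that in pure AdS $\omega\equiv 0$ forces all the inequalities to become equalities, which is your argument in the opposite order; that difference is cosmetic.) All of this part is correct.

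The genuine gap is the step you yourself flag as hard: $\omega(r_0)\geq 0$, which is Lemma~\ref{lem:omega0} in the paper and carries essentially all the weight of the theorem. Your sketch for it does not close, and two of the inputs you propose cannot do the job. The untrapped-tip condition gives $\mu(r_0)\leq r_0^{d}/L^2$, an upper bound with the wrong sign for your purposes; and non-negativity of $\omega(\infty)$ combined with $\omega'\geq 0$ yields only $\omega(r_0)\leq\omega(\infty)$, which you already noted is useless (moreover, positivity of the boundary energy is a consequence of this lemma in the paper's logic, not an independent input). The actual mechanism is an \emph{inward} flow, not anything anchored at infinity: since $K_{\phi\phi}(r_0)=0$, equation \eqref{eq:muomegacomp} gives $\mu[\sigma]=\omega(r_0)$ exactly at the tip plane $\sigma$; AdS-hyperbolicity lets one embed $\sigma$ in a complete planar-symmetric spacelike slice $\Gamma$ extending to smaller radii; in the untrapped region $\mu$ is non-increasing as $\sigma$ is deformed inward along $\Gamma$; and this inward deformation terminates either at a marginally trapped surface, where $\mu=r^{d}/L^2>0$ manifestly, or at $r\to 0$, where spacelikeness of $\Gamma$ forces $\omega_\Gamma(r)\leq r^{d}/L^2\to 0$ and hence $\liminf\mu\geq 0$ (Lemma~\ref{lem:mupos}). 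Monotonicity then propagates non-negativity outward to $\sigma$. Without the inward continuation on a complete slice and the endpoint analysis at $r=0$ or at a marginally trapped surface, $\omega(r_0)\geq 0$ is not established and the theorem does not follow.
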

\noindent We now give the proof assuming that $\omega(r_0)\geq 0$, and then we will
spend most of the rest of this section proving this assertion. 
\begin{proof}
    By Lemma~\ref{lem:omega0}, proven below, we have that $\omega(r_0) \geq 0$. Furthermore, the DEC implies that
    $\mathcal{E}$ is positive. Hence, \eqref{eq:omegasol} gives that $\omega(r)$ is everywhere positive. 
    But this means that
    \begin{equation}
    \begin{aligned}
        B(r) = \frac{ 1 }{ \frac{ r^2 }{ L^2 } - \frac{ \omega(r) }{ r^{d-2} } }
        \geq \frac{ L^2 }{ r^2 },
    \end{aligned}
    \end{equation}
    which allows us to lower bound the strip width as follows:
    \begin{equation}
    \begin{aligned}
        \ell &= 2L \Phi(\infty) 
        = 2L \int_{r_0}^{\infty}\dd \rho \frac{ \sqrt{B(\rho)} }{ \rho
    \sqrt{(\rho/r_0)^{2d-2} - 1} } \\
        &\geq 2L^2 \int_{r_0}^{\infty}\dd \rho \frac{ 1 }{ \rho^2
    \sqrt{(\rho/r_0)^{2d-2} - 1} } \\
        &= \frac{ 2L^2 }{ r_0 } \frac{ \sqrt{\pi}\Gamma\left(\frac{ d }{ 2(d-1)
        } \right) }{ \Gamma\left(\frac{ 1 }{ 2(d-1) } \right) }.
    \end{aligned}
    \end{equation}
    Finally, if we are in pure AdS, we must have have that the spacetime mass is
    vanishing, implying that $\omega(\infty)=0$, and so by
    $\omega'(r)\geq 0$ and the fact that $\omega(r_0)\geq 0$, we must have $\omega(r)=0$ everywhere. But that means that the above
    inequalities become equalities, giving $\frac{ L^2 }{ r_0 }\geq \frac{ L^2
    }{ r_{0, \text{vac}} }$, which implies \eqref{eq:r0proof}.
\end{proof}

\noindent Now we turn to proving that $\omega(r_0)$ is non-negative. The
crucial tool is a planar-symmetric AdS$_{d+1}$ version of the
Lorentzian Hawking mass \cite{Haw68}, which we define for a planar surface $\sigma$ as 
\begin{equation}
    \begin{aligned}\label{eq:mudef}
    \mu[\sigma] = \frac{ r^{d} }{ L^2 } - \frac{ 2r^{d}\theta_+ \theta_{-} }{ k_+ \cdot
    k_- (d-1)^2},
\end{aligned}
\end{equation}
where $k^{+}$ and $k^{-}$ are the outwards and inwards null vectors orthogonal
to $\sigma$, respectively, and $\theta_{\pm}$ the corresponding null
expansions. In \cite{Fol22}, generalizing the
results of \cite{Hay94} to planar symmetry and AAdS$_{d+1}$ spacetimes, it was shown that the DEC implies that $\mu[\sigma]$ is
monotonically non-decreasing when $\sigma$ is moving in an outwards spacelike
direction, provided we are in a normal region of spacetime, meaning that $\theta_+ \geq
0, \theta_{-}\leq0$ when we take $k_{+}^a$ and $k_-^a$ to be future directed.\footnote{This monotonicity is a planar-symmetric Lorentzian version of the
monotonicity the Riemannian Hawking mass under inverse mean curvature flow,
which has been used to prove Riemannian Penrose inequalities
\cite{Ger73,JanWal77,Wan01,HuiIlm01}. A Lorentzian flow with a monotonic
Lorentzian Hawking mass for compact
surfaces in three dimensions, without any symmetry assumptions, was studied in
\cite{BraHay06}.}

Furthermore, it is useful to rewrite the Riemannian Hawking mass $\omega(r)$ in a different way.
$\omega$ can be thought of as a function of a planar surface $\sigma$ together
with a hypersurface $\Sigma$ containing it, and in \cite{EngFol21} it is shown that we
can write $\omega$ as
\begin{equation}
\begin{aligned}
    \omega[\sigma, \Sigma] = \frac{ r^{d} }{ L^2 } - \frac{ \mathcal{K}[\sigma]^2 }{
        (d-1)^2 },
\end{aligned}
\end{equation}
where $\mathcal{K}$ is the mean curvature of  $\sigma$ in $\Sigma$.
Using \eqref{eq:thetaonslice}, which assumes the normalization $k_{+}\cdot k_{-} = -1$, we see that $2\theta_{+}\theta_{-} =
(K-n^{\alpha}n^{\beta}K_{\alpha\beta})^2 - \mathcal{K}^2$, and so we get the
following relation between the Hawking masses
\begin{equation}\label{eq:muomegacomp}
\begin{aligned}
    \mu[\sigma] = \omega[\sigma, \Sigma] + \frac{ r^{d} }{ (d-1)^2 }(K -
    n^{\alpha}n^{\beta}K_{\alpha\beta})^2.
\end{aligned}
\end{equation}
\noindent With this in hand, we prove the following Lemma.
\begin{figure}
    \centering
    \includegraphics[width=.4\textwidth]{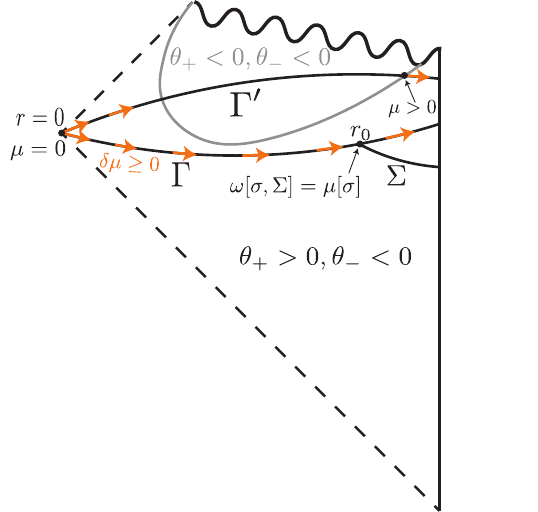}
    \caption{Example of two complete hypersurfaces $\Gamma$ and $\Gamma'$. The
    Lorentzian Hawking mass is vanishing at $r=0$ and positive at marginally trapped
    surfaces, given by the planes contained in the gray line. $\mu$ is
    monotonically non-decreasing along spacelike outwards flows in the untrapped
    region, where $\theta_{+}\geq 0, \theta_{-}\leq0$.
    At the boundary $\sigma$ of the extended homology hypersurface $\Sigma$, 
    the Riemannian Hawking mass $\omega$ with respect to $\Sigma$ agrees with the Lorentzian Hawking mass
    $\mu$.
    }
    \label{fig:muflow}
\end{figure}
\begin{lem}\label{lem:mupos}
    Let $\Gamma$ be a complete planar symmetric hypersurface with one conformal
    boundary. Let $\sigma_r$ be a one-parameter family of planes in $\Gamma$ with
    radius $r$, and with $r\in(0, \epsilon]$ for any $\epsilon>0$. Then
    \begin{equation}
    \begin{aligned}
        \lim_{r \rightarrow 0} \mu[\sigma_r] \geq 0.
    \end{aligned}
    \end{equation}
\end{lem}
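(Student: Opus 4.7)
The plan is to reduce the problem to a statement about the Riemannian Hawking mass $\omega$ on $\Gamma$ and to establish that the latter tends to zero as $r\to 0$. The key identity is \eqref{eq:muomegacomp}: for null normals with $k_+\cdot k_-=-1$,
\begin{equation*}
    \mu[\sigma_r] = \omega[\sigma_r,\Gamma] + \frac{r^d}{(d-1)^2}\bigl(K - n^\alpha n^\beta K_{\alpha\beta}\bigr)^2 \geq \omega[\sigma_r,\Gamma],
\end{equation*}
since the second term is a perfect square. It therefore suffices to show $\lim_{r\to 0}\omega[\sigma_r,\Gamma]\geq 0$.

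I would first argue that $\Gamma$ admits planar coordinates of the form \eqref{eq:cancoords2} in a neighborhood of the $r\to 0$ end. The hypotheses of planar symmetry, completeness, and a single conformal boundary rule out throats there: a local extremum of $r$ on $\Gamma$ at some $r_*>0$ would either prevent $r$ from descending to $0$ or force $\Gamma$ to extend out to a second conformal boundary, by an argument parallel to Appendix~\ref{app:nothroat}. In these coordinates, define $\omega(r)$ through $B(r)=\bigl[r^2/L^2 - \omega(r)/r^{d-2}\bigr]^{-1}$. Since the metric on $\Gamma$ is Riemannian, $B>0$, equivalently $\omega(r)<r^d/L^2$ pointwise, and hence $\limsup_{r\to 0}\omega(r)\leq 0$ immediately.

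For the matching lower bound I would use completeness: the proper distance from an interior point at $r=r_1$ down to $r=0$ equals $\int_0^{r_1}\sqrt{B(\rho)}\,d\rho$, and must diverge. Setting $\epsilon(r):=1/B(r)=r^2/L^2 - \omega(r)/r^{d-2}>0$, divergence of this integral prevents $\epsilon$ from being bounded below on a full neighborhood of $r=0$, so along some sequence $r_n\to 0$ one has $\epsilon(r_n)\to 0$. Along any such sequence,
\begin{equation*}
    \omega(r_n) = r_n^{d-2}\bigl(r_n^2/L^2 - \epsilon(r_n)\bigr) \to 0
\end{equation*}
for any $d\geq 2$. Combined with the upper bound this gives $\lim_{r\to 0}\omega(r)=0$, and \eqref{eq:muomegacomp} then immediately yields $\lim_{r\to 0}\mu[\sigma_r]\geq 0$.

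The main technical obstacle is the last analytic step: upgrading ``$\epsilon\to 0$ along some sequence'' to ``$\omega\to 0$.'' In particular one must rule out oscillations in which $\omega(r)$ visits a fixed negative value arbitrarily close to $r=0$. I expect this to follow from a contradiction argument: a persistent excursion $\omega\leq -C<0$ on an interval near $r=0$ forces $B$ bounded (for $d=2$) or $B\to 0$ (for $d>2$) on that interval, and combined with smoothness of $B$, the pointwise constraint $\omega<r^d/L^2$, and the divergence of $\int\sqrt{B}\,d\rho$, such negative excursions must shrink to measure zero. If a direct geometric argument is insufficient, one can reinforce it using the DEC (a standing assumption) via the Einstein constraint \eqref{eq:preConstraint1} to control the sign of $\omega'$ near $r=0$.
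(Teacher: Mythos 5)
Your proof follows the same skeleton as the paper's: both reduce the claim to controlling $\omega$ via the perfect-square identity \eqref{eq:muomegacomp} (so $\mu[\sigma_r]\geq\omega[\sigma_r,\Gamma]$) and then study $\omega$ near $r=0$ using spacelikeness and completeness. The difference is in the one nontrivial step. The paper's proof deduces ``$\omega(r)\sim\mathcal{O}(r^d)$'' from the spacelikeness bound $\omega(r)<r^d/L^2$ alone; as written that only yields $\limsup_{r\to0}\omega\leq 0$, which combined with $\mu\geq\omega$ gives nothing, so the needed \emph{lower} bound on $\omega$ is left implicit. You correctly identify this and supply it: geodesic completeness forces $\int_0\sqrt{B}\,\dd\rho=\infty$, hence $1/B(r_n)\to0$ along some sequence $r_n\to0$, hence $\omega(r_n)=r_n^{d-2}\bigl(r_n^2/L^2-1/B(r_n)\bigr)\to0$. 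That is the real content, and your version is the more rigorous one. Two remarks. First, the ``oscillation'' issue you flag as the main obstacle is not actually needed: a single sequence with $\omega(r_n)\to0$ already gives $\limsup_{r\to0}\mu\geq0$, and in the only place the lemma is invoked (Lemma~\ref{lem:omega0}) the DEC makes $\mu$ monotone along the inward flow in the untrapped region, so the limit exists and therefore equals the subsequential limit; thus your argument as it stands suffices for the statement as used. Second, your proposed DEC-based patch via \eqref{eq:preConstraint1} would not go through directly on a general $\Gamma$, since without an extremality condition to eliminate $F$ the sign of $\omega'$ is not controlled by $\mathcal{E}\geq0$ alone; the robust route is monotonicity of the Lorentzian mass $\mu$ rather than of $\omega$, which is exactly what the downstream lemma uses. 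Your handling of the coordinate issue near $r=0$ is also fine: even if throats accumulated at $r=0$, one has $1/B=0$ there, which only helps.
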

\begin{proof}
Let us pick coordinates
\begin{equation}
    \dd s^2|_{\Gamma} = \left[\frac{ r^2 }{ L^2 } - \frac{ \omega(r) }{ r^{d-2}
    }\right]^{-1}\dd r^2 + r^2 \dd \bm{x}^2
\end{equation}
    on $\Gamma$ in a neighborhood of $r=0$. Since $\Gamma$ is complete and we only have one conformal
    boundary, arbitrarily small $r$ must be part of $\Gamma$. Since $\Gamma$ is
    spacelike, we must have $\omega(r) \leq r^{d}/L^2$, which means that
    $\omega(r)\sim \mathcal{O}(r^{d})$ at small $r$. Now, from
    \eqref{eq:muomegacomp} we see that $\mu[\sigma_r] \geq \omega[\sigma_r,
    \Gamma]$ and so
    \begin{equation}
    \begin{aligned}
        \mu[\sigma_r] \geq \mathcal{O}(r^{d}).
    \end{aligned}
    \end{equation}
    Taking $r\rightarrow 0$ proves our assertion.
\end{proof}
\noindent Now we are ready to prove that $\omega(r_0)\geq 0$, together with the
fact that the tip of the HRT surface cannot lie in a trapped region.
\begin{lem}\label{lem:omega0}
    Let $(\mathcal{M}, g_{ab})$ be a planar-symmetric regular asymptotically AdS$_{d+1}$
    spacetime. Let $X$ be the HRT surface of a strip. 
    Then the tip of $X$ lies in an untrapped region of spacetime, meaning the
   future null expansions of the plane $\sigma$ tangent to $X$ at the tip satisfies
   \begin{equation}
   \begin{aligned}
       \theta_+[\sigma] \geq 0, \quad \theta_{-}[\sigma] \leq 0.
   \end{aligned}
   \end{equation}
    Furthermore, if the DEC holds and $(\mathcal{M}, g)$ is regular, 
    the Riemannian Hawking mass of $\sigma$ is non-negative:
   \begin{equation}
   \begin{aligned}
       \omega[\sigma, \Sigma] = \omega(r_0) \geq 0.
   \end{aligned}
   \end{equation}
\end{lem}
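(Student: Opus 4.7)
The plan is to prove the untrapped-tip claim first using only the extremality of $X$, and then to establish $\omega(r_0)\geq 0$ via the interplay between the Riemannian and Lorentzian Hawking masses together with the monotonicity of $\mu$ under outward spacelike flow established in \cite{Fol22}. For the first claim, I would observe that at the tip $\Phi'(r_0)=\infty$, so $\dd r/\dd\phi = 0$ and the tangent space of $X$ coincides with that of $\sigma$. Consequently the unit normals $n^\alpha$ to $X$ within $\Sigma$ and to $\sigma$ within $\Sigma$ agree at the tip. Applying \eqref{eq:thetaonslice} to $\sigma$ and invoking the extremality identity $K|_X = n^\alpha n^\beta K_{\alpha\beta}|_X$ derived just before \eqref{eq:extremalitycond}, the quantity $K - n^\alpha n^\beta K_{\alpha\beta}$ vanishes at the tip, reducing the formula to $\sqrt{2}\theta_\pm[\sigma] = \pm\mathcal{K}[\sigma]$. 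A direct computation in the metric \eqref{eq:cancoords} then gives $\mathcal{K}[\sigma_r] = (d-1)/(r\sqrt{B(r)})\geq 0$ for any constant-$r$ plane, with $B(r_0)$ finite by Lemma~\ref{lem:cancoords}, which immediately yields $\theta_+[\sigma]\geq 0$ and $\theta_-[\sigma]\leq 0$.

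For $\omega(r_0)\geq 0$, the same vanishing of $K - n^\alpha n^\beta K_{\alpha\beta}$ at the tip combined with \eqref{eq:muomegacomp} gives $\mu[\sigma] = \omega(r_0)$, so it suffices to prove $\mu[\sigma]\geq 0$. I would build a planar-symmetric, complete spacelike hypersurface $\Gamma$ with a single conformal boundary that contains $\sigma$ as one of its planes and extends, on the opposite side of the tip from $\Sigma$, through planes of decreasing radius all the way down to $r\to 0$. The existence of such $\Gamma$ should follow from AdS-hyperbolicity and the $C^2$ regularity of $(\mathcal{M}, g_{ab})$, by smoothly continuing $\Sigma$ past its interior boundary $\sigma$ in a spacelike direction. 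Lemma~\ref{lem:mupos} applied to $\Gamma$ then yields $\lim_{r\to 0}\mu[\sigma_r]\geq 0$, and the Lorentzian monotonicity of $\mu$ under outward spacelike flow in the normal region gives $\mu[\sigma] \geq \lim_{r\to 0}\mu[\sigma_r]\geq 0$, completing the proof.

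The main obstacle I expect is the construction of $\Gamma$ together with verifying that the outward flow from deep in the interior to $\sigma$ stays within the normal region throughout, as required by Lorentzian monotonicity of $\mu$. Although the first claim places $\sigma$ itself (marginally) in the normal region, in spacetimes with trapped regions (e.g.\ behind a dynamical horizon) the candidate flow might be forced through surfaces with $\theta_+\theta_->0$. This will likely require either a careful geometric choice of $\Gamma$ that keeps the interior planes untrapped, or a limiting argument that pushes the flow only up to the edge of any trapped region and invokes a marginally-trapped version of the monotonicity of $\mu$. Ensuring $\Gamma$ has only one conformal boundary in spacetimes with two asymptotic ends is an additional subtlety, which I would address by exploiting planar symmetry to construct $\Gamma$ entirely on one side of the bifurcation surface.
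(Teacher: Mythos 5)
Your proposal follows essentially the same route as the paper: the untrapped-tip claim via the extremality identity evaluated at the tip (where the normal to $X$ in $\Sigma$ degenerates to the radial normal of $\sigma$, forcing $K - r^\alpha r^\beta K_{\alpha\beta}=0$, equivalently $K_{\phi\phi}(r_0)=0$), the consequent identification $\mu[\sigma]=\omega[\sigma,\Sigma]$, and then positivity of $\mu[\sigma]$ from Lemma~\ref{lem:mupos} plus inward monotonicity along a complete planar-symmetric slice $\Gamma$. The one obstacle you leave open --- that the inward flow might be forced into a trapped region --- is resolved in the paper by a simple observation you do not quite land on: since $g_{ab}$ is $C^2$, $\theta_{\pm}$ are continuous along $\Gamma$, so the first surface at which the flow would exit the untrapped region is marginally trapped, i.e.\ has $\theta_+\theta_-=0$; at such a surface the definition \eqref{eq:mudef} gives $\mu = r^{d}/L^2 \geq 0$ directly, so one simply terminates the flow there rather than needing any ``marginally-trapped version'' of the monotonicity or a delicate choice of $\Gamma$. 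Either the flow reaches $r\to 0$ (Lemma~\ref{lem:mupos} applies) or it terminates at a marginal surface ($\mu$ manifestly non-negative); in both cases monotonicity on the untrapped segment yields $\mu[\sigma]\geq 0$, and the same dichotomy also disposes of your worry about two-sided spacetimes, since a throat of $\Gamma$ has $\mathcal{K}=0$ and hence $\theta_+\theta_-\geq 0$, so a marginal surface is encountered no later than the throat.
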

\begin{proof}
    Let $\Sigma$ be the unique planar symmetric extended homology hypersurface
    containing $X$. Let $\sigma$ be the boundary of $\Sigma$ in the bulk, having radius $r_0$.
    Its null expansion is
\begin{equation}
\begin{aligned}
    \sqrt{2} \theta_{\pm}[\sigma] = \pm \mathcal{K}[\sigma] + K - r^{\alpha}r^{\beta}K_{\alpha\beta},
\end{aligned}
\end{equation}
where $r^{\alpha} = \frac{ 1 }{ \sqrt{B} }(\partial_r)^{\alpha}$.
 An explicit computation gives
\begin{equation}
    \begin{aligned}\label{eq:Ktr}
    \mathcal{K}[\sigma] &=  \frac{ d-1 }{ r_0 \sqrt{B(r_0)} }, \\
    K &=  \frac{ 1 }{ B }K_{rr} + \frac{ K_{\phi\phi}(d-1) }{ r^2 },
\end{aligned}
\end{equation}
and so we find
\begin{equation}
\begin{aligned}
    \sqrt{2} \theta_{\pm}[\sigma] = \pm \frac{ d-1 }{ r_0\sqrt{B(r_0)} } -
    \frac{ K_{\phi\phi}(r_0)(d-1) }{ r_0^2 }.\label{eq:thetapmform}
\end{aligned}
\end{equation}
From \eqref{eq:Ksol} we have that $K_{\phi\phi}(r_0)=0$, and so we get that
\begin{equation}
\begin{aligned}
\pm \theta_{\pm} \geq 0,
\end{aligned}
\end{equation}
proving the first assertion. 

Next, since $K_{\phi\phi}(r_0)=0$ we see that $2\theta_{+}\theta_{-}|_{\sigma} =
    -\mathcal{K}^2|_{\sigma}$, implying that $\mu[\sigma] =\omega[\sigma,
    \Sigma]$. Now, since our spacetime is AdS-hyperbolic, we can embed $\sigma$ in a
complete hypersurface with planar symmetry $\Gamma$, see Figure \ref{fig:muflow}.
    Since $\sigma$ lies in an untrapped region of
    spacetime, and since $\Gamma$ is spacelike, $\mu[\sigma]$ is monotonically
    non-increasing as we deform $\sigma$ in inwards along $\Gamma$ while
    preserving its planar symmetry. Since the $g_{ab}$ is $C^2$,
    $\theta_{\pm}$ are continuous, and so as we deform $\sigma$ inwards, one of two
    things happen. Either we hit a marginally trapped surface, where
    $\theta_{+}\theta_{-}=0$ and where $\mu$ is manifestly positive, or we approach
    $r=0$, where we again have that $\mu$ is non-negative by
    Lemma~\ref{lem:mupos}. See Figure \ref{fig:muflow}. But since $\mu$ is non-increasing along
    this deformation, and since it ends up somewhere non-negative, we must have
    $\mu[\sigma] \geq 0$. But $\mu[\sigma]=\omega[\sigma, \Sigma]$, completing
    the proof.
\end{proof}

We have illustrated the fact that the tip cannot lie in a trapped region of
spacetime in Figure \ref{fig:planarHRTs} -- the tip cannot lie behind the gray
line. Note that the proof of this fact does not rely on the DEC. 
This result  improves on the findings of \cite{EngFis15} in the special case where we have planar
symmetry. In \cite{EngFis15}, they showed without any symmetry assumptions that the tip of an HRT surface in a
$(2+1)$--dimensional spacetime can never lie in the so-called umbral region,
which is a special subset of the trapped region that lies behind regular
holographic screens \cite{Bou99,EngFis15}. They also showed this result with
planar symmetry in all dimensions.  Here we extend this result to
show that the whole trapped region is forbidden, although our
result is more limited in that it always requires planar symmetry and a strip
(or spherical) boundary region. Note also that this
result does not forbid $X$ to probe inside trapped regions -- it is only the tip
that is forbidden to lie there (see Figure \ref{fig:planarHRTs}). For example, for early times after a quench, the
HRT surface will have portions threading through the trapped region \cite{LiuSuh13a,LiuSuh13b}.

\subsection{Proofs}\label{eq:d2proof}\label{sec:thinshellstrip}
\subsubsection*{Proof of $d=2$ bound}
We are now ready to prove Theorem~\ref{thm:mainthm1}.
Evaluating the Lorentzian Hawking mass on a sphere at large $r$ in a planar
symmetric AAdS$_{d+1}$ spacetime with falloffs \eqref{eq:falloffs}, we get that
\begin{equation}\label{eq:TCFT2}
\begin{aligned}
\left<T_{tt}\right> = \frac{ d-1 }{ 16 \pi G_N L^{d-1}  }\mu(\infty).
\end{aligned}
\end{equation}
This is valid also for $d=2$, except if $\phi$ is periodically identified, we
must replace the left hand side with $\left<T_{tt}\right> -
\left<T_{tt}\right>_{\rm vac}$. It can be seen to be true by evaluating $\mu(\infty)$ near
the boundary in the usual Fefferman-Graham expansion \cite{FefGra85,GraLee91,GraWit99}. Now, from
 \eqref{eq:muomegacomp} and \eqref{eq:Ktr} we have that
\begin{equation}\label{eq:muVSomega}
\begin{aligned}
    \mu(r) = \omega(r) + r^{d-4}K_{\phi\phi}(r)^2.
\end{aligned}
\end{equation}
From \eqref{eq:Ksol}, we see that $K_{\phi\phi}$ has asymptotic falloff
$K_{\phi\phi}\sim \mathcal{O}(r^{3-d})$. 
Thus, we get that for
$d\geq 3$, $\mu(\infty)=\omega(\infty)$, while for $d=2$, we have
\begin{equation}
\begin{aligned}
    \mu(\infty) = \omega(\infty) + \left(\lim_{r
    \rightarrow \infty}r^{-1}K_{\phi\phi}\right)^2.
\end{aligned}
\end{equation}
Since $\omega(\infty) \geq 0$ by the DEC, when $d=2$ we obtain
\begin{equation}\label{eq:d2masterresult}
\begin{aligned}
    \Big|\lim_{r \rightarrow \infty}r^{-1}K_{\phi\phi}\Big| \leq \sqrt{\mu(\infty)} 
    = \sqrt{16\pi G_N L \left<T_{tt}\right> }.
\end{aligned}
\end{equation}
Using that $\text{Area}[\partial R_t]=2$, and combining
 \eqref{eq:d2masterresult} and \eqref{eq:dAdt} then yields
\begin{equation}
\begin{aligned}
    \Big| \frac{ \dd S_R }{ \dd t  } \Big|_{t=0} 
    &\leq \frac{ 1 }{ 2G_N }\sqrt{16\pi G_N L \left<T_{tt}\right>}
    =\sqrt{\frac{8\pi c}{3}\left<T_{tt}\right>},
\end{aligned}
\end{equation}
where we used the known Brown-Henneaux expression for the central charge:
$c = \frac{ 3L }{ 2G_N }$ \cite{BroHen86}. This proves Theorem~\ref{thm:mainthm1}.

\subsubsection*{Proof of bound for small $\ell$}
Now let us consider the result for small subregions, given by
Theorem~\ref{thm:mainthm3}.
The following Lemma is what we need:
\begin{lem}\label{lem:smallLmainlem}
    Let $(\mathcal{M},g_{ab})$ be a regular asymptotically AdS$_{d+1\geq 3}$
    spacetime with planar symmetry satisfying the DEC.  Let $X$ be the HRT surface
    of a strip $R$ of width $\ell$, and let be $r_0$ be the smallest radius
    probed by $X$. Assume that
    \begin{equation}
    \begin{aligned}
    \frac{ \ell^{d}\left<T_{tt}\right> }{ c_{\rm eff} } \ll 1.
    \end{aligned}
    \end{equation}
    Then
    \begin{equation}\label{eq:mainlem1}
    \begin{aligned}
        \Big| \lim_{r \rightarrow \infty} r^{d-3}K_{\phi\phi} \Big| \leq \frac{ L }{ 2r_0 }
        \omega(\infty)\left[1 + \mathcal{O}\left(\frac{
            \ell^{d}\left<T_{tt}\right> }{ c_{\rm eff} }\right)\right].
    \end{aligned}
    \end{equation}
\end{lem}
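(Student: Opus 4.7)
The plan is to start from the explicit expression \eqref{eq:Ksol} for $K_{\phi\phi}(r)$. Since $\sqrt{(r/r_0)^{2d-2}-1}\to (r/r_0)^{d-1}$ as $r\to\infty$, I obtain
\begin{equation*}
\lim_{r\to\infty} r^{d-3}K_{\phi\phi}(r) = -\frac{r_0^{d-1}}{d-1}\int_{r_0}^{\infty} d\rho\,\mathcal{J}(\rho)\sqrt{(\rho/r_0)^{2d-2}-1},
\end{equation*}
so the task reduces to bounding the right-hand side by $\tfrac{L}{2r_0}\omega(\infty)$ up to the specified error.

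The first key ingredient is a sharper form of the DEC. Since $n^a=B^{-1/2}(\partial_r)^a$ is a unit spacelike normal within $\Sigma$, the orthonormal radial momentum density is $\mathcal{J}/\sqrt{B}$ while the orthonormal energy density is $\mathcal{E}$. Applying the DEC to timelike vectors of the form $t^a\pm\alpha n^a$ with $|\alpha|<1$ yields $|\mathcal{J}(\rho)|\leq \sqrt{B(\rho)}\,\mathcal{E}(\rho)$, which is strictly stronger than the naive $|\mathcal{J}|\leq \mathcal{E}$. The second ingredient is that under the smallness hypothesis $\sqrt{B}$ differs from its vacuum value $L/\rho$ only by a controlled amount: by Lemma~\ref{lem:omega0} and \eqref{eq:omegasol} one has $0\leq \omega(r)\leq \omega(\infty)$, and by Theorem~\ref{thm:r0bound} together with \eqref{eq:TCFT2} the quantity $L^2\omega(r)/r^d$ is $\mathcal{O}(\ell^d\langle T_{tt}\rangle/c_{\rm eff})$ uniformly for $r\geq r_0$, so
\begin{equation*}
\sqrt{B(r)}=\frac{L}{r\sqrt{1-L^2\omega(r)/r^d}}\leq \frac{L}{r}\left[1+\mathcal{O}\!\left(\tfrac{\ell^d\langle T_{tt}\rangle}{c_{\rm eff}}\right)\right].
\end{equation*}

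Combining these two ingredients with the elementary inequality $\sqrt{u^{2d-2}-1}\leq u^{d-1}$ (applied with $u=\rho/r_0$) gives
\begin{equation*}
\Big|\lim_{r\to\infty}r^{d-3}K_{\phi\phi}\Big|\leq \frac{L}{d-1}\int_{r_0}^{\infty}d\rho\,\rho^{d-2}\mathcal{E}(\rho)\left[1+\mathcal{O}\!\left(\tfrac{\ell^d\langle T_{tt}\rangle}{c_{\rm eff}}\right)\right].
\end{equation*}
Finally, $\rho\geq r_0$ implies $\rho^{d-2}\leq \rho^{d-1}/r_0$, while dropping the non-negative $\omega(r_0)$ and $K_{\phi\phi}^2$ terms from \eqref{eq:omegasol} yields $\tfrac{2}{d-1}\!\int_{r_0}^{\infty}\rho^{d-1}\mathcal{E}(\rho)\,d\rho\leq \omega(\infty)$, and the two inequalities combine to give the claimed bound. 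The main conceptual step is the first one: the naive bound $|\mathcal{J}|\leq\mathcal{E}$ would only deliver $|\lim r^{d-3}K_{\phi\phi}|\leq \omega(\infty)/2$, whereas the sharpened DEC inequality $|\mathcal{J}|\leq\sqrt{B}\,\mathcal{E}$ together with $\sqrt{B}\approx L/\rho$ near the boundary is precisely what introduces the extra suppression factor $L/r_0$ needed in the small-$\ell$ regime; the remainder is a short chain of algebraic estimates.
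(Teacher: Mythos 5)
Your proof is correct and follows essentially the same route as the paper's: both arguments rest on the DEC inequality $|\mathcal{J}|\leq \sqrt{B}\,\mathcal{E}$ with $\sqrt{B}\approx L/\rho$ supplying the crucial $L/r_0$ suppression, the elementary bound $\sqrt{\rho^{2d-2}-r_0^{2d-2}}\leq\rho^{d-1}$, the lower bound $\omega(\infty)\geq\tfrac{2}{d-1}\int\rho^{d-1}\mathcal{E}$ obtained by dropping the non-negative terms in \eqref{eq:omegasol}, and Theorem~\ref{thm:r0bound} to identify the expansion parameter as $\ell^{d}\langle T_{tt}\rangle/c_{\rm eff}$. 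The only difference is cosmetic (you trade $\mathcal{J}$ for $\mathcal{E}$ in the numerator integrand, whereas the paper trades $\mathcal{E}$ for $\mathcal{J}$ in the denominator of the ratio $W/\omega(\infty)$), and your handling of the error term via a uniform bound on $L^{2}\omega(r)/r^{d}$ is, if anything, slightly cleaner.
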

\begin{proof}
    Let us for convenience define $W=-\lim_{r \rightarrow
    \infty}r^{d-3}K_{\phi\phi}$, and
    assume without loss of generality that $W>0$ (otherwise, just reverse the
    time direction). Using the solutions \eqref{eq:Ksol} and \eqref{eq:omegasol}, we have that
    \begin{equation}
    \begin{aligned}
        \frac{ W }{ \omega(\infty) } &= \frac{ 1 }{ d-1 } \frac{
            \int_{r_0}^{\infty}\dd r \mathcal{J}(r ) \sqrt{r^{2d-2} - r_0^{2d-2}} }{
            \omega(r_0) + \int_{r_0}^{\infty}\dd
            r\left[r^{d-5}h_1(r)K_{\phi\phi}(r)^2 + \frac{ 2r^{d-1} }{ d-1
            }\mathcal{E}(r)\right]  } \\
         &\leq \frac{
            \int_{r_0}^{\infty}\dd r r^{d-1} \mathcal{J}(r )}{
            2 \int_{r_0}^{\infty} \dd r r ^{d-1}\mathcal{E}(\rho)  }.
    \end{aligned}
    \end{equation}
    The DEC requires that
    \begin{equation}
    \begin{aligned}
        0 \leq 8\pi G_N \mathcal{T}_{ab}\left[t^a \pm \frac{ 1 }{ \sqrt{B}
        }(\partial_r)^{a}\right]t^b = \mathcal{E} \pm \frac{ 1 }{ \sqrt{B}
        }\mathcal{J},
    \end{aligned}
    \end{equation}
    and so we have that
    \begin{equation}\label{eq:Ecallower}
    \begin{aligned}
    \mathcal{E} \geq \frac{ 1 }{ \sqrt{B} }|\mathcal{J}|.
    \end{aligned}
    \end{equation}
    Writing $B$ in terms
    of $\omega$, and enforcing the DEC, we get
    \begin{equation}
    \begin{aligned}
        \frac{ W }{ \omega(\infty) }
         &\leq \frac{
            \int_{r_0}^{\infty}\dd r r^{d-1} \mathcal{J}(r )}{
            \frac{ 2 }{ L }\int_{r_0}^{\infty} \dd r r^{d}\sqrt{1 - \frac{
                \omega(r) L^2 }{ r^{d} }} |\mathcal{J}(r)|  }.
    \end{aligned}
    \end{equation}
    Let us now for a moment assume that we are perturbatively close to the vacuum,
    where $\mathcal{\epsilon}$ is a perturbative parameter parametrizing the
    magnitude of $\omega(\infty)$. By monotonicity and positivity of $\omega(r)$, $\omega(r) \sim
    \mathcal{O}(\epsilon)$ as well, and so the $\omega(r)$ appearing in the
    square root gives higher order contributions:
    \begin{equation}
    \begin{aligned}
        \frac{ W }{ \omega(\infty) }
         &\leq \frac{
            \int_{r_0}^{\infty}\dd r r^{d-1} |\mathcal{J}(r )|}{
            \frac{ 2 }{ L }\left[\int_{r_0}^{\infty} \dd r r^{d} |\mathcal{J}(r)| -
            \frac{ L^2 }{ 2 }\int_{r_0}^{\infty} \dd r \omega(r) |\mathcal{J}(r)| +
            \ldots\right]  } \\\
        &=\frac{ L }{ 2 }\frac{
            \int_{r_0}^{\infty}\dd r r^{d-1} |\mathcal{J}(r )|}{
            \int_{r_0}^{\infty} \dd r r^{d} |\mathcal{J}(r)|}\left[1 + \frac{ L^2 }{ 2
        }\frac{ \int_{r_0}^{\infty}\dd r \omega(r) |\mathcal{J}| }{
            \int_{r_0}^{\infty}\dd r r^{d}|\mathcal{J}|} + \ldots \right] \\
        &\leq \frac{ L }{ 2r_0 }\left[1 + \frac{ L^2 }{ r_0^{d} }\omega(\infty)
        + \ldots \right] \\
        &\leq \frac{ L }{ 2r_0 }\left[1 + \frac{ L^{2d} }{ r_0^{d}
        }\frac{\mu(\infty)}{L^{2d-2}}
        + \ldots \right] \\
        &=\frac{ L }{ 2r_0 }\left[1 + 
        \frac{16\pi \eta_d}{d-1}
        \frac{\ell^{d}\left<T_{tt}\right>}{c_{\rm eff}}
        + \ldots \right]
    \end{aligned}
    \end{equation}
    where $\eta_d$ is the $O(1)$ number coming from using \eqref{eq:r0bnd}. We see that the effective expansion 
    parameter is the dimensionless quantity
    $\frac{\ell^{d}\left<T_{tt}\right>}{c_{\rm eff}}$. So the expansion is not
    really in small mass, which is dimensionful, but in small strip width
    relative to the inverse energy density per CFT degree of freedom.  
\end{proof}
\noindent From \eqref{eq:dAdt} and \eqref{eq:r0bnd}, we get, up to the
perturbative corrections,
\begin{equation}
    \begin{aligned}\label{eq:boundsteps}
    \Big| \frac{ \dd S_R }{ \dd t  }\Big| &\leq \frac{ \text{Area}[\partial
    R]}{ 4G_N L^{d-2} } \frac{ L }{ 2r_0 }\omega(\infty) 
    \leq \frac{ \sqrt{\pi} }{ d-1 }\frac{
            \Gamma\left(\frac{ 1 }{ 2(d-1) } \right) }{ 
        \Gamma\left(\frac{ d }{ 2(d-1) } \right)   } \ell \text{Area}[\partial
        R] \left<T_{tt}\right> \\
        &=\text{Vol}[R] \left< T_{tt}\right> \begin{cases}
        2\pi  & d=2 \\
        \frac{ \sqrt{\pi} }{ d-1 } \frac{\Gamma\left(\frac{ 1 }{ 2(d-1) }
            \right) }{ \Gamma\left(\frac{ d }{ 2(d-1) } \right) } & d>2 
        \end{cases},
\end{aligned}
\end{equation}
where we used \eqref{eq:r0bnd} and \eqref{eq:TCFT} in the second inequality. This
also holds for $d=2$, since $\omega(\infty) \leq \mu(\infty)$, provided we
replace $\left<T_{tt}\right> \rightarrow \left<T_{tt}\right> -
\left<T_{tt}\right>_{\rm vac}$ if $\phi$ is compact. 
Also, note that
for $d=2$ we have that $\ell \text{Area}[\partial R] =2 \text{Vol}[R]$.
This completes the proof of Theorem~\ref{thm:mainthm3} for strip regions.

\subsubsection*{Proof of bounds in thin-shell spacetimes}
We now turn our attention to thin-shell spacetimes, where we will be able to establish that
a bound of the form $|\partial_t S| \leq \# \text{Vol}[R]\left<T_{tt}\right>$ holds for any $\ell$.
Furthermore, in this class of spacetimes we will prove our conjectured
generalization of Theorem~\ref{thm:mainthm1} to $d>2$, i.e.
Theorem~\ref{thm:mainthm2}.

\begin{figure}
    \centering
    \includegraphics[width=.7\textwidth]{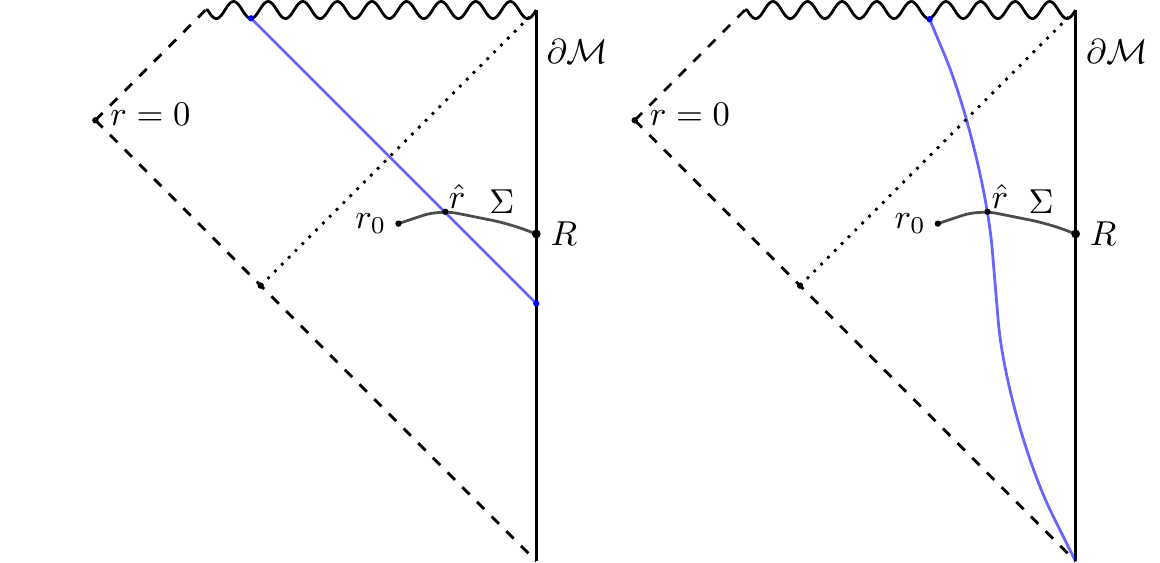}
    \caption{Examples of thin-shell spacetimes, where the blue lines correspond
    to the shells. The left space is dual to a uniform quench, where matter is
    thrown in from the boundary, while the right is a spacetime with a brane in
    the bulk interior. 
    }
    \label{fig:shockspace}
\end{figure}

Consider a spacetime where the matter consists of a single thin shell of
matter that separately satisfies the DEC, together with a possible contribution
from any number of $U(1)$ gauge fields:
\begin{equation}\label{eq:thinshells}
\begin{aligned}
    \mathcal{E} &= \kappa \delta(r-\hat{r}) + \mathcal{E}^{\rm Maxwell}, \\
    \mathcal{J} &= \eta \delta(r-\hat{r}),
\end{aligned}
\end{equation}
for some $\eta, \kappa, \hat{r} > r_0$. See Figure \ref{fig:shockspace}. 
Here we used that in planar symmetry, Maxwell
fields give no contribution to the radial momentum density $\mathcal{J}$ (see for example Sec.~3 of
\cite{EngFol21}). In fact, we can add to the $U(1)$ gauge fields any matter that has
a positive contribution to $\mathcal{E}$ but no contribution to $\mathcal{J}$.

The DEC, through \eqref{eq:Ecallower}, imposes that $\mathcal{J}$ only can have support at
$\hat{r}$. Without loss of generality, we take $\eta>0$.
Let us in this section also use our scaling freedom in $r$ to set $r_0 = L$ and choice of
units to set $L=1$. 

Define again $W = - \lim_{r \rightarrow \infty}r^{d-3}K_{\phi\phi}$.
Plugging \eqref{eq:thinshells} into \eqref{eq:Ksol}, the solution for
$K_{\phi\phi}$ is
\begin{equation}
\begin{aligned}
    K_{\phi \phi }(r) 
    &=  - \frac{ r^2 }{ d-1 }\eta \sqrt{ \frac{ \hat{r}^{2d-2} -
   1 
    }{ r^{2d-2} - 1}  }\theta(r-\hat{r}),
\end{aligned}
\end{equation}
and so
\begin{equation}
\begin{aligned}
    \eta = \frac{ (d-1) W }{ \sqrt{\hat{r}^{2d-2} - 1} },
\end{aligned}
\end{equation}
which gives
\begin{equation}
\begin{aligned}
    K_{\phi\phi}(r) = - \frac{ r^2 W }{ \sqrt{r^{2d-2} -
    1}}\theta(r-\hat{r}) .
\end{aligned}
\end{equation}
Next, let us solve for the contribution to $\omega(r)$ from the squared extrinsic curvature
term in \eqref{eq:omegasol}:
\begin{equation}
\begin{aligned}
    Q(r) &\equiv \int_{1}^{r}\dd \rho \rho^{d-5}K_{\phi\phi}(\rho)^2 h_1(\rho) \\
    &= \theta(r-\hat{r}) W^2 \int_{\hat{r}}^{r}\dd \rho \rho^{d-1} \frac{ h_1(\rho) }{
        \left[\rho^{2d-2} - 1\right] } \\
    &=W^2 \theta(r-\hat{r})\left[ \frac{ \hat{r}^{d} }{ \hat{r}^{2d-2} - 1 } - \frac{
        r^{d} }{ r^{2d-2} - 1 }\right].
\end{aligned}
\end{equation}
To proceed, we need to understand what happens to $\omega$ as we cross the shock.
Restricting attention to a small neighborhood of $\hat{r}$, where we can treat
explicit occurrences of $r$ not appearing in delta functions as constant, the equation for $\omega$ reads
\begin{equation}\label{eq:eqshell}
\begin{aligned}
    (d-1) \frac{ \omega'(r) }{ \hat{r}^{d-1} } = 2\mathcal{E}^{\rm shell} + \ldots,
\end{aligned}
\end{equation}
where the terms indicated with dots will make no contribution to the
discontinuity.
Remembering that the DEC implies that $\sqrt{B}\mathcal{E}\geq |\mathcal{J}|$,
imposing the DEC on the shell means that 
\begin{equation}\label{eq:eshellbnd}
\begin{aligned}
    \mathcal{E}^{\rm shell} \geq \sqrt{\hat{r}^2 - \frac{ \omega(r) }{ \hat{r}^{d-2} }}
    \eta \delta(r-\hat{r}).
\end{aligned}
\end{equation}
Inserting \eqref{eq:eshellbnd} into \eqref{eq:eqshell},
dividing by the prefactor of the delta function, and integrating from
$\hat{r}-\varepsilon$ to $\hat{r}+\varepsilon$ for some small positive
$\varepsilon$, we find
\begin{equation}\label{eq:disconteq}
\begin{aligned}
    \sqrt{\hat{r}^{2} - \frac{
        \omega_{-} }{
        \hat{r}^{d-2} }} - \sqrt{\hat{r}^{2} - \frac{
        \omega_{+} }{ \hat{r}^{d-2} 
    }} \geq \frac{ \hat{r} }{ d-1 }\eta + \mathcal{O}(\varepsilon),
\end{aligned}
\end{equation}
where we defined $\omega_{\pm} = \omega(\hat{r}\pm \varepsilon)$.
We only have a sensible solution when $B(r)$ is real and positive everywhere, which
requires
\begin{equation}
\begin{aligned}
    \frac{ 1 }{ d-1 }\eta \leq\sqrt{1 -
    \frac{ \omega_{-} }{ \hat{r}^{d} }}.
\end{aligned}
\end{equation}
Solving for $\omega_{-}$ from \eqref{eq:disconteq} and inserting our expression for $\eta$, we get that
\begin{equation}
\begin{aligned}
    \omega_{+} 
    &\geq \omega_{-} + \frac{
        \hat{r}^{d} }{ \sqrt{\hat{r}^{2d-2} - 1} }W\left[ \sqrt{1 - \frac{
        \omega_{-} }{ \hat{r}^{d} }} - \frac{ W }{
        \sqrt{\hat{r}^{2d-2} - 1} }
    \right].
\end{aligned}
\end{equation}
Using this and \eqref{eq:muVSomega}, the Lorentzian Hawking mass at infinity has the lower bound
\begin{equation}
\begin{aligned}
    \mu(\infty) &=\omega(\infty) + \delta_{d2} W^2 \\
    &\geq \omega_{-} 
    + \frac{
        \hat{r}^{d} }{ \sqrt{\hat{r}^{2d-2} - 1} }W\left[ \sqrt{1 - \frac{
        \omega_{-} }{ \hat{r}^{d} }} - \frac{ W }{
        \sqrt{\hat{r}^{2d-2} - 1} }
    \right]
    + Q(\infty) + \delta_{d2}W^2 \\
    &= \omega_{-} 
    + \frac{
        \hat{r}^{d} }{ \sqrt{\hat{r}^{2d-2} - 1} }W \sqrt{1 - \frac{
        \omega_{-} }{ \hat{r}^{d} }},
\end{aligned}
\end{equation}
where $\delta_{ij}$ is the Kronecker delta. Thus, for any real $n$, we have that
\begin{equation}
\begin{aligned}
    \frac{ W^{n} }{ \mu(\infty) } \leq \frac{ W^{n} }{  \omega_{-} 
    + \frac{
        \hat{r}^{d} }{ \sqrt{\hat{r}^{2d-2} - 1} }W \sqrt{1 - \frac{
        \omega_{-} }{ \hat{r}^{d} }} } \equiv U_n,
\end{aligned}
\end{equation}
together with the constraints 
\begin{align}
    0 &\leq \omega_{-} \leq \hat{r}^{d}, \label{eq:dom1}\\
    W &\leq \sqrt{\hat{r}^{2d-2} - 1}\sqrt{1 - \frac{ \omega_{-}
    }{ \hat{r}^{d} }}. \label{eq:dom2}
\end{align}
Our goal will now be to upper bound $U_n$ for all legal triplets $\left(W, \hat{r},
\omega_{-}\right)$ for $n=1$ and $n=\frac{ d }{ d-1 }$,
which turns out to be values that will give interesting growth bounds.

Note first that we have
\begin{equation}
\begin{aligned}
   \partial_{\omega_-} ^2 U_n \geq 0,
\end{aligned}
\end{equation}
so any local extremum of $U_n$ with respect to $\omega_-$ is a minimum. Thus, for any given $W$ and $\hat{r}$, $U_n$ is maximized when $\omega_{-}$ is on the boundary of its domain. First, take
$\omega_- = 0$. Then, assuming that $ 1 \leq n \leq \frac{ d }{ d-1 }$,
\begin{equation}\label{eq:Unbnd}
\begin{aligned}
    U_n = \frac{ W^{n-1}\sqrt{\hat{r}^{2d-2} - 1} }{ \hat{r}^{d} }\leq \frac{
        \left[ \hat{r}^{2d-2} - 1\right]^{\frac{ n }{ 2 }} }{ \hat{r}^{d} }
    \leq \frac{ \hat{r}^{n(d-1)} }{ \hat{r}^{d} } \leq 1,
\end{aligned}
\end{equation}
where we used \eqref{eq:dom2} in the second inequality.
For $n=1$, we get the stronger bound
\begin{equation}\label{eq:U1bnd}
\begin{aligned}
    U_{1} \leq  \frac{ \sqrt{\hat{r}^{2d-2} - 1} }{ \hat{r}^{d} } \leq
    \sqrt{\frac{d-1}{d^{\frac{ d }{ d-1 }}}} \equiv \alpha_d,
\end{aligned}
\end{equation}
where the upper bound is found by maximizing with respect to $\hat{r}$.
Next, let us look at the maximal value for $\omega_-$, where we have the equality
\begin{equation}
\begin{aligned}
    W = \sqrt{\hat{r}^{2d-2} - 1}\sqrt{1 - \frac{ \omega_- }{ \hat{r}^{d} }}.
\end{aligned}
\end{equation}
Neglecting the first $\omega_-$ in the denominator of $U_n$ and using $W\leq \sqrt{\hat{r}^{2d-2} - 1}$, we get
\begin{equation}
\begin{aligned}
    U_n \leq  \frac{ \left[\hat{r}^{2d-2} -
    1\right]^{n/2} }{
        \hat{r}^{d} }.
\end{aligned}
\end{equation}
But this is just the expression bounded earlier, and so \eqref{eq:Unbnd} and
\eqref{eq:U1bnd} holds generally. Restoring factors of $L, r_0$, we have the
following true bounds
\begin{align}
    W &\leq L^{\frac{ d-2  }{ d }} \omega(\infty)^{\frac{ d-1 }{ d }},
    \label{eq:thinshell1bnd}\\
    W &\leq \alpha_{d}\frac{ L }{ r_0 }\omega(\infty). \label{eq:thinshell2bnd}
\end{align}

Redoing the steps in \eqref{eq:boundsteps} with the numerical factor
from in \eqref{eq:thinshell2bnd}, we get
Theorem~\ref{thm:mainthm4} for strip regions. Next, inserting \eqref{eq:thinshell1bnd} into
\eqref{eq:dAdt}, we find
\begin{equation}\label{eq:thm2proxy}
\begin{aligned}
    \Big|\frac{ \dd S_{R} }{ \dd t }\Big| &\leq \frac{ \text{Area}[\partial R] }{
        4G_N L^{d-2}}L^{\frac{ d-2 }{ d }}\left[\frac{ 16\pi G_N L^{d-1} }{d-1 
    }\left<T_{tt}\right> \right]^{\frac{ d-1 }{ d }}
    = \frac{ 1 }{ 4 }\text{Area}[\partial R] c_{\rm eff} \left[\frac{ 16\pi }{(d-1)c_{\rm eff}}\left<T_{tt}\right>
    \right]^{\frac{ d-1 }{ d }},
\end{aligned}
\end{equation}
proving Theorem~\ref{thm:mainthm2} for strip regions.

\subsection{Multiple strips and mutual information}
Our results not scaling with $\text{Vol}[R]$ can be generalized to regions $R$
consisting of $n$ disjoint finite strips by simply applying
the same argument to each connected component of the HRT surface separately. For
$d=2$ this results in 
\begin{equation}
\begin{aligned}
    \qty|\frac{ \dd S_R}{ \dd t }| \leq n\sqrt{\frac{ 8\pi c }{ 3
    }\left< T_{tt} \right>}.
\end{aligned}
\end{equation}
It is easy to see that \eqref{eq:thm2proxy} also holds true for $n$ strips. 
No modification is needed, since $\text{Area}[\partial R]$ implicitly contains the 
factor of $n$ present in the $d=2$ case.

\begin{figure}
    \centering
    \includegraphics[width=.6\textwidth]{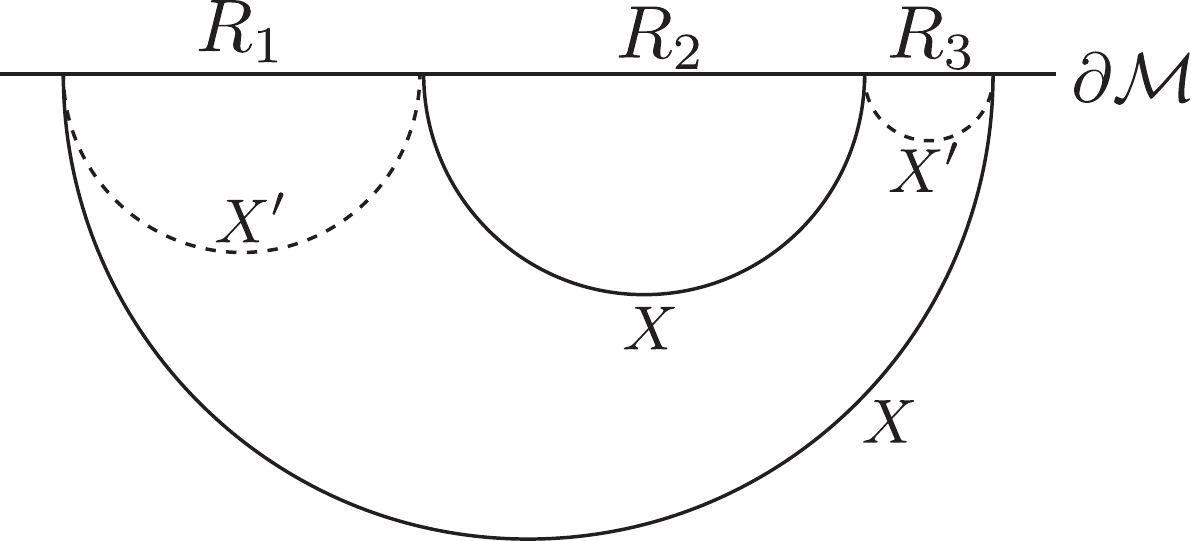}
    \caption{Possible HRT surfaces $X$ and $X'$ of the region $R_1 \cup R_3$, projected onto a
    timeslice.}
    \label{fig:R1R2R3}
\end{figure}

For the bounds scaling like volume, the behavior is different, since
the upper bound depends on the connectivity properties of the entanglement
wedge. Consider for example $d=2$ and the three intervals $R_1, R_2, R_3$ in
Figure \ref{fig:R1R2R3}, and let $R=R_1 \cup R_3$ be the region under consideration. We then see that
\begin{equation}
\begin{aligned}
    \Big|\frac{ \dd S_R }{ \dd t }\Big| \leq \kappa \left<T_{tt}\right>
    \begin{cases}
    \text{Vol}[R] & \text{the entanglement wedge is disconnected}, \\
        \text{Vol}[R]+2\text{Vol}[R_2] & \text{the entanglement wedge is
        connected}, \\
    \end{cases}
\end{aligned}
\end{equation}
where $\kappa$ is the relevant numerical prefactor of either
Theorem~\ref{thm:mainthm3} or \ref{thm:mainthm4}. We get this result by adding
the different volume factors from each connected component of the HRT surface.
Similar games can be played for $n$ strips in $d$ dimensions.

Next, lets us consider $d=2$ and the mutual information between two subsystems $R_1$ and
$R_2$ consisting of $n_1$ and $n_2$ finite intervals, respectively. We then have
\begin{equation}\label{eq:mutualbound}
\begin{aligned}
    \Big|\partial_t I (R_1, R_2)\Big| &= |\partial_t S_{R_1} + \partial_t
    S_{R_2} - \partial_t S_{R_1
    R_2}| 
    \leq |\partial_tS_{R_1}| + |\partial_t S_{R_2}| + |\partial_t S_{R_1
    R_2}| \\
    &\leq \left(2n_1 + 2 n_2 \right)\sqrt{\frac{ 8\pi c }{ 3
    }\left<T_{tt}\right>}.
\end{aligned}
\end{equation}
Using \eqref{eq:shellnonlinearT}, the generalization to higher $d$ is obvious.

\section{Maximal Rates for Balls, Wilson Loops and
Correlators}\label{sec:generaldim}
\subsection{Setup and summary of results}\label{sec:qsummary}
In this section we will consider extremal surfaces $X_t$ of dimension $q+1$
anchored at $q$-dimensional spheres $\partial R_t$ at time $t$ on the conformal
boundary, where extremal means that all the mean curvatures of $X_t$ are
zero.\footnote{This means that the generalized volume of $X_t$ is
stationary under perturbations with compact support. }
We take the spheres to have radius $\mathcal{R}$. For $q=0$, $\partial R_t$ just
consists of two points, and $X_t$ is a
one-parameter family of geodesics. For $q=d-2$, $X_t$ is a one-parameter family
of HRT surfaces anchored at spheres. For $q=1$, $X_t$ are
two-dimensional spacelike worldsheets anchored at circles.

As before we are working with planar symmetric spacetimes, subject to the same
assumptions described in Sec.~\ref{sec:stripsummary}.
The logical steps will be mostly identical to Sec.~\ref{sec:strip}, but with extra
technicalities coming from the curvature of $\partial R_t$. 
Note that since we now have submanifolds of varying dimensions, we will use the
symbol $\norm{\cdot}$ to indicate the measure of the surface in the natural
induced volume form. For quantities on the conformal boundary, $\norm{\cdot}$
means with respect to the induced metric from the Minkowski conformal frame. We
will use $\text{Length}[]$, $\text{Area}[]$ and $\text{Vol}[]$ to refer to the
measure of surfaces of dimension 1, codimension 2, and codimension 1,
respectively.

To describe the relevant subregions in our results, let $\bm{z}$ be Cartesian
coordinates in the direction transverse to the sphere $\partial R_t$. We now
choose coordinates for our Minkowski conformal frame on the boundary to be
\begin{equation}
\begin{aligned}
    \dd s^2 = - \dd t^2 + L^2\left(\dd \phi^2 + \phi^2 \dd \Omega^2_{q} + \dd \bm{z}^2 \right),
\end{aligned}
\end{equation}
with $\dd \Omega^2_{q}$ the metric of a round unit $q$-sphere, and with
the constant$-t$ slices the ones on which one-point functions of local operators
are constant. For $q=0$ there is no
$\dd \Omega^2_q$--term, while for $q=d-2$ there is no $\dd \bm{z}^2$ term.
$\phi$ is a dimensionless radial coordinate on the boundary, and $R_{t'}$ is given by
\begin{equation}
\begin{aligned}
    0 \leq \phi \leq \frac{ \mathcal{R} }{ L }, \qquad t=t', \qquad \bm{z}=0.
    \label{eq:sphereregion}
\end{aligned}
\end{equation}
We now have
\begin{equation}
\begin{aligned}
    \norm{\partial R_t} = \norm{\partial R} = \Omega_{q}\mathcal{R}^{q},
\end{aligned}
\end{equation}
where $\Omega_{q}$ is the volume of a unit $q$-sphere.

Let us now summarize the results proven in this section. 
For entanglement entropy, we will prove the parts Theorems~\ref{thm:mainthm2},
\ref{thm:mainthm3} and \ref{thm:mainthm4} that refer to spherical $\partial R$.
For extremal surfaces of other dimensionalities, the following theorem applies
to the most general class of spacetimes and subregions:
\begin{thm}\label{thm:mainthm5}
    Let $(\mathcal{M},g_{ab})$ be a regular asymptotically AdS$_{d+1}$ spacetime with planar
    symmetry satisfying the DEC. Assume that $d$ is
    even, and let be $X_t$ be an extremal surface of dimension $d/2$, anchored on the conformal boundary at
    the sphere $\partial R_t$. Then
    \begin{equation}\label{eq:mainthm5}
    \begin{aligned}
        \qty|\frac{ \dd }{ \dd t }\norm{X_t}| \leq \norm{\partial
        R}L^{\frac{ d }{ 2 }}\sqrt{\frac{ 16\pi }{ c_{\rm eff}(d-1) }
        \left<T_{tt}\right>}.
    \end{aligned}
    \end{equation}
\end{thm}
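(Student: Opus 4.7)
The plan is to follow the blueprint of Section~\ref{sec:strip}, adapted to extremal surfaces of dimension $d/2$ anchored on a $(d/2-1)$-sphere. First, I would identify the unique extended homology hypersurface $\Sigma$ of $X$ respecting both the planar symmetry of $(\mathcal{M},g_{ab})$ and the $SO(d/2)$ rotational symmetry centered on $\partial R$, and choose coordinates analogous to \eqref{eq:cancoords} in which the induced metric on $\Sigma$ is block-diagonal. Because of the combined symmetries, $X$ is still described by a single embedding function $\phi=\Phi(r)$, now with $\phi$ playing the role of a radial boundary coordinate, while the transverse $q=(d/2-1)$-sphere is swept out automatically. Extremality of $X$ within $\Sigma$ then yields a modified version of the ODE \eqref{eq:planarAeq} with an extra $(\phi r)^{q}$ factor from the transverse sphere, and I expect a first integral to still give an explicit representation for $\Phi$ analogous to \eqref{eq:phisol}.

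Next, I would adapt \eqref{eq:dAdt}: since $X_t$ is extremal, $\partial_t\norm{X_t}$ reduces to a pure boundary term at $\partial X$ fixed by the asymptotic value of a specific component of the extrinsic curvature $K_{\alpha\beta}$ of $\Sigma$, integrated over the $q$-sphere $\partial R$ with its round metric. Imposing the spacelike extremality condition \eqref{eq:thetaonslice} together with the Einstein constraints reduces, as in \eqref{eq:omegaeq}--\eqref{eq:Keq}, to a closed coupled system for $K_{\phi\phi}(r)$ and $\omega(r)$ whose coefficients now depend on $q$. Demanding finiteness of the orthonormal extrinsic curvature at the tip $r_0$ fixes the integration constant, producing an integral formula expressing the boundary term as a weighted average of the radial momentum density $\mathcal{J}$ on $X$.

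To close the bound, the sphere analogue of Lemma~\ref{lem:omega0} should still hold: Lemma~\ref{lem:mupos} and the Lorentzian Hawking mass flow used only planar symmetry, AdS-hyperbolicity, and the DEC, none of which are tied to the shape of $\partial R$, so $\omega(r_0)\geq 0$ follows. Combining this with DEC in the form $\mathcal{E}\geq |\mathcal{J}|/\sqrt{B}$ and the identity \eqref{eq:muomegacomp} lets one bound the boundary integral by $\sqrt{\mu(\infty)}$ times an explicit geometric prefactor, and then translate into $\langle T_{tt}\rangle$ using $\mu(\infty)\propto L^{d-1}\langle T_{tt}\rangle$. The restriction to $\dim X = d/2$ with $d$ even is precisely what makes the square root appear: it is the critical dimension at which, as in the $d=2$ strip case of \eqref{eq:d2masterresult}, the asymptotic falloff of $K_{\phi\phi}$ ensures that $K_{\phi\phi}^2$ contributes at leading order to $\mu(\infty)-\omega(\infty)$, giving $W^2\leq \mu(\infty)$ for the boundary quantity $W$ that controls $\partial_t\norm{X_t}$.

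The main obstacle is verifying that this integrable structure survives the transverse-sphere volume factor $\phi^{q}$ in the area functional, and tracking how $\dim X = d/2$ and $d$ even make the asymptotic decomposition $\mu(\infty) = \omega(\infty) + W^{2}$ work out with the correct numerical constant, so that the square root $\sqrt{\langle T_{tt}\rangle}$ and the prefactor $L^{d/2}\norm{\partial R}$ in \eqref{eq:mainthm5} emerge with the advertised coefficient. In the strip case this was essentially trivial thanks to translation symmetry transverse to the strip width; for the sphere case, one has to carefully execute the near-boundary expansion of the modified ODE system for $K_{\phi\phi}(r)$ and confirm that the $\phi^{q}$ factor does not spoil either the integrability of $\Phi$ or the scaling of the kernel needed for the final square-root bound.
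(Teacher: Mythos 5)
Your proposal follows essentially the same route as the paper: reduce $\partial_t\norm{X_t}$ to the asymptotic value $W=-\lim r^{q-1}K_{\phi\phi}$ via the boundary-term formula, close the constraint system using timelike extremality, establish $\omega(r_0)\geq 0$ by the Hawking-mass flow argument, and observe that at the critical dimension $q+1=d/2$ the relation \eqref{eq:muvsomega} degenerates to $\mu(\infty)=\omega(\infty)+W^2$, whence $W\leq\sqrt{\mu(\infty)}$. Two small corrections to your outline: first, for $q>0$ the extremality ODE admits only an \emph{implicit} solution \eqref{eq:Phiq} involving a function $h(r)$ (not an explicit first integral as in \eqref{eq:phisol}), but the bounds $0<(r_0/r)^{2q+2}<h(r)\leq 1$ are all that is needed; second, passing from $\omega(r_0)\geq 0$ to $\omega(\infty)\geq 0$ requires monotonicity of $\omega$, i.e.\ positivity of the $K_{\phi\phi}^2$ source term $2H(r)-d+2$ in \eqref{eq:omegaQ}, and this is a second, independent place where $q\geq(d-2)/2$ is essential — it is not automatic and fails for lower-dimensional surfaces. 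Also note that the pointwise DEC inequality $\mathcal{E}\geq|\mathcal{J}|/\sqrt{B}$ is not actually used for this particular theorem; only $\mathcal{E}\geq 0$ (for monotonicity of $\omega$) and the tip lemma are needed.
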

\noindent Of course, for $d=2$, this just reduces to Theorem~\ref{thm:mainthm1}.
For $d=4$ this can be converted to the growth bound on circular Wilson loops, given by
 \eqref{eq:Wsym}. Next, for surfaces $X_t$ anchored at small spheres on the
boundary, we get the following:
\begin{thm}\label{thm:mainthm6}
    Let $(\mathcal{M},g_{ab})$ be a regular asymptotically AdS$_{d+1}$ spacetime with planar
    symmetry satisfying the DEC.  Let be $X_t$ be an
    extremal surface of dimension $q+1$, anchored on the conformal boundary at
    the sphere $\partial R_t$ having radius $\mathcal{R}$. Assume
    that
    \begin{equation}
    \begin{aligned}
    q \geq \frac{ d-2 }{ 2 }
    \end{aligned}
    \end{equation}
    and
    \begin{equation}
    \begin{aligned}
        \frac{ \mathcal{R}^{d} \left<T_{tt}\right> }{ c_{\rm eff} } \ll 1.
    \end{aligned}
    \end{equation}
    Then
    \begin{equation}\label{eq:mainthm6}
    \begin{aligned}
        \qty|\frac{ \dd }{ \dd t }\norm{X_t}| \leq \eta_{d,
        q}\norm{\partial
        R} L^{q+1}\mathcal{R}^{d-q-1}\frac{\left<T_{tt}\right>}{c_{\rm eff}}\left[1 +
        \mathcal{O}\left(\frac{ \mathcal{R}^{d} \left<T_{tt}\right> }{ c_{\rm eff} }
        \right) \right],
    \end{aligned}
    \end{equation}
    where
    \begin{equation}
    \begin{aligned}
    \eta_{d, q} = \frac{ 8\pi }{ d-1 }\left[\frac{ \Gamma\left(\frac{ 1 }{ 2(q+1) } \right) }{
        \sqrt{\pi}\Gamma\left(\frac{ q+2 }{ 2(q+1) } \right) }\right]^{d-q-1}.
    \end{aligned}
    \end{equation}
\end{thm}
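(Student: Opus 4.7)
The plan is to mirror the strategy used for strips in Sec.~\ref{sec:strip}, adapting each step to a $(q+1)$-dimensional extremal surface $X$ anchored at a round $q$-sphere $\partial R$ of radius $\mathcal{R}$. First I would construct the unique symmetric extended homology hypersurface $\Sigma$ respecting both the planar symmetry of $(\mathcal{M},g_{ab})$ and the $SO(q+1)$ rotational symmetry of $\partial R$, and write its intrinsic metric as
\begin{equation}
\dd s^2|_\Sigma = B(r)\, \dd r^2 + r^2\!\left(\dd \phi^2 + \phi^2\, \dd \Omega_q^2 + \dd \bm{z}^2\right).
\end{equation}
Parameterizing $X$ by a single function $\phi(r)$ with tip at $r_0$, the extremality equation within $\Sigma$ is the direct analogue of \eqref{eq:planarAeq}, picking up an extra factor of $\phi^q$ from the induced volume element on $X$. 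I expect this ODE to still admit a first integral of the form
\begin{equation}
\Phi(r) = \int_{r_0}^r \frac{\sqrt{B(\rho)}\, \dd \rho}{\rho\sqrt{(\rho/r_0)^{2(q+1)} - 1}},
\end{equation}
with the exponent $2d-2$ in \eqref{eq:phisol} replaced by $2(q+1)$, tracking the dimension of $X$ rather than that of $\Sigma$.

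Next I would reproduce the momentum--entanglement-style formula: extremality reduces $\dd \norm{X_t}/\dd t$ to a boundary term at $\partial X$ via \eqref{eq:Neta}, which through the same manipulations as in appendix~\ref{app:dSKphi} becomes proportional to a subleading component of the extrinsic curvature $K_{\phi\phi}$ of $\Sigma$ at infinity, now weighted by the $q$-sphere measure on $\partial X$. Imposing $t^a$-extremality together with the Einstein constraints on $\Sigma$ should yield a closed pair of ODEs generalizing \eqref{eq:omegaeq}--\eqref{eq:Keq} for $\omega(r)$ and $K_{\phi\phi}(r)$, with kernels $h_1, h_2$ that depend on $q+1$ rather than $d-1$. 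Integrating with regularity at $r_0$ then yields a formula $\dd \norm{X_t}/\dd t = \int_X G\, \mathcal{T}_{ab} n^a t^b$ with a kernel $G$ depending only on $r, r_0, L, d$ and $q$.

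Then I would bound the tip radius $r_0$ from below. The Lorentzian and Riemannian Hawking-mass arguments of Sec.~\ref{sec:HRTconstr} should carry over essentially verbatim: the tip plane $\sigma$ is untrapped, so $\mu[\sigma] = \omega[\sigma,\Sigma]$, and Lemma~\ref{lem:mupos} together with monotonicity of $\mu$ under inward flow along a complete hypersurface forces $\omega(r_0)\geq 0$; combined with the DEC this yields $\omega(r)\geq 0$ throughout and hence $B(r)\geq L^2/r^2$. Integrating the embedding equation at $\phi(\infty) = \mathcal{R}/L$ then produces
\begin{equation}
\frac{L^2}{r_0} \leq \frac{\Gamma\!\left(\tfrac{1}{2(q+1)}\right)}{\sqrt{\pi}\,\Gamma\!\left(\tfrac{q+2}{2(q+1)}\right)}\, \mathcal{R},
\end{equation}
which is the source of the $\Gamma$-ratio raised to the power $d-q-1$ in $\eta_{d,q}$, once this lower bound on $r_0$ is combined with the powers of $r_0$ appearing in $G$ and with $\norm{\partial R} = \Omega_q \mathcal{R}^q$.

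The final step is a line-by-line adaptation of Lemma~\ref{lem:smallLmainlem}: use the DEC inequality $\sqrt{B}\,\mathcal{E}\geq |\mathcal{J}|$ to dominate the numerator of $W/\omega(\infty)$ by the denominator, expand $\sqrt{1 - \omega L^2/r^d}$ perturbatively in powers of $\omega(\infty)\mathcal{R}^d/c_{\rm eff}$, and combine with \eqref{eq:TCFT} and the tip bound above. The main obstacle I anticipate is verifying that the hypothesis $q \geq (d-2)/2$ is exactly what makes both (i) the $K_{\phi\phi}^2$ contribution to $\omega'$ decay fast enough near infinity that $\omega(\infty)$ is controlled by $\langle T_{tt}\rangle$ through the analogue of \eqref{eq:muVSomega}, and (ii) the relevant near-boundary integrals finite, in direct parallel with the role played by $d\geq 3$ in the strip analysis. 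Once this dimensional book-keeping is settled and the analogues of \eqref{eq:Ksol}, \eqref{eq:omegasol}, and Lemma~\ref{lem:omega0} are in place, the remainder of the argument is essentially mechanical.
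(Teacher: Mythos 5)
Your overall strategy matches the paper's, but there is one concrete error at the heart of your proposal: the claim that the extremality ODE for $\Phi(r)$ "still admits a first integral of the form" $\Phi(r)=\int_{r_0}^r \dd\rho\,\sqrt{B}/\bigl(\rho\sqrt{(\rho/r_0)^{2(q+1)}-1}\bigr)$. For $q>0$ this is false. The mean curvature of $X$ in $\Sigma$ acquires an extra term $-\tfrac{qB}{\Phi}\bigl(B+r^2(\Phi')^2\bigr)$ coming from the curvature of the anchoring $q$-sphere, and the equation is no longer exactly integrable. One only gets an \emph{implicit} solution $\Phi(r)=\int_{r_0}^r \dd\rho\,\sqrt{B}/\bigl(\rho\sqrt{(\rho/r_0)^{2q+2}h(\rho)-1}\bigr)$ with an unknown profile $h(r)$ satisfying $(r_0/r)^{2q+2}<h(r)\le 1$ and $h(r_0)=1$. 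Consequently the closed ODE system you invoke (the analogues of \eqref{eq:omegaeq}--\eqref{eq:Keq} "with kernels depending on $q+1$") does not hold as a set of equalities: the function $H(r)$ replacing $h_1,h_2$ depends on $h(r)$, and the exact solution \eqref{eq:Ksol} for $K_{\phi\phi}$ is unavailable.

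The fix — which is what the paper actually does — is to convert every step into a one-sided estimate using $h\le 1$, i.e.\ $H(r)\ge H_L(r)$ where $H_L$ is $H$ with $h\to 1$: this yields the \emph{upper} bound \eqref{eq:Kupper} on $|K_{\phi\phi}(r)|$, the \emph{lower} bound \eqref{eq:muboundq} on $\mu(\infty)$, and (since $h\le1$ makes the integrand for $\Phi$ larger) the same lower bound on $\mathcal{R}$ in terms of $L^2/r_0$ that you wrote down. All the signs happen to cooperate because the final statement is itself an upper bound, but your proposal asserts equalities where only inequalities hold and does not perform this sign-checking. Relatedly, your diagnosis of the hypothesis $q\ge(d-2)/2$ is only half right: beyond controlling the $r\to\infty$ falloff of $K_{\phi\phi}^2$ in \eqref{eq:muvsomegaq2} (so that $\omega(\infty)\le\mu(\infty)$ is finite), it is needed so that the coefficient $2H_L(r)-d+2$ of $K_{\phi\phi}^2$ in $\omega'(r)$ is non-negative for \emph{all} $r$, which is what gives monotonicity and hence positivity of $\omega(r)$, the ingredient behind both $B\ge L^2/r^2$ and the tip bound. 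With these two repairs the remainder of your argument goes through as you describe.
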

Using well known dictionary entries, described in Sec.~\ref{eq:d2d4specialbnds},
this converts to growth bounds on the
entanglement of small balls, small circular Wilson loops, and heavy two-point
functions at small separations. Specifically, for the latter two, we get
\begin{equation}\label{eq:OOsmalx}
\begin{aligned}
    \Big|\frac{ \dd }{ \dd t }\log |\left<O(x)
    O(0)\right>_{\rho(t)}|\Big| &\leq \frac{8\pi \Delta }{ c_{\rm eff}
    }|x|\left<T_{tt}\right>\left[1 +
        \mathcal{O}\left(\frac{ |x|^2 \left<T_{tt}\right> }{ c_{\rm eff} }
        \right) \right], \quad d=2,
\end{aligned}
\end{equation}
and
\begin{equation}\label{eq:Wsmallr}
\begin{aligned}
    \Big|\frac{ \dd }{ \dd t  }\log|\left<\mathcal{W}(C)\right>_{\rho(t)}|\Big| &\leq
\frac{8\pi\sqrt{\lambda_{\rm eff}}}{
    (d-1)c_{\rm eff}
}\eta_{d,1}\mathcal{R}^{d-1}\left<T_{tt}\right>\left[1 +
        \mathcal{O}\left(\frac{ \mathcal{R}^{d} \left<T_{tt}\right> }{ c_{\rm eff} }
        \right) \right],\quad  d\in\{3, 4\},
\end{aligned}
\end{equation}
where $\sqrt{\lambda_{\rm eff}} = L^2/\ell_{\rm string}^2$ is the effective 't
Hooft coupling, and $\ell_{\rm string}$ the bulk string length.
Finally, for thin-shell spacetimes, we prove the following:
\begin{thm}\label{thm:mainthm7}
    Let $(\mathcal{M},g_{ab})$ be an asymptotically AdS$_{d+1}$ spacetime with planar
    symmetry satisfying the DEC. Assume that $X_t$ is an extremal surface
    anchored at a boundary sphere of dimension
    \begin{equation}
    \begin{aligned}
    q \geq \frac{ d-2 }{ 2 }.
    \end{aligned}
    \end{equation}
    Next, assume that the bulk matter consists of $U(1)$ gauge fields and a thin shell of matter:
    \begin{equation}
    \begin{aligned}
    \mathcal{T}_{ab} = \mathcal{T}_{ab}^{\rm shell} +
        \mathcal{T}_{ab}^{\rm Maxwell},
    \end{aligned}
    \end{equation}
    where $\mathcal{T}_{ab}^{\rm shell}$ has delta function
    support on a codimension$-1$ worldvolume that is timelike or null, and that separately
    satisfies the DEC. Assume $(\mathcal{M},g)$ is regular, except we do
    not require $g_{ab}$ to be $C^2$ at the shell. Then
    \begin{equation}\label{eq:shellnonlinearTq}
    \begin{aligned}
    \Big|\frac{ \dd }{ \dd t }\norm{X_t}\Big| &\leq \norm{\partial R}
    L^{q+1}\left[\frac{ 16\pi}{ (d-1)c_{\rm eff}
    }\left<T_{tt}\right>\right]^{\frac{ q+1 }{  d}}
    \end{aligned}
    \end{equation}
    and
    \begin{equation}
    \begin{aligned}
        \qty|\frac{ \dd }{ \dd t }\norm{X_t}| \leq \kappa_{d,
        q} \norm{\partial
        R} L^{q+1}\mathcal{R}^{d-q-1}\frac{\left<T_{tt}\right>}{c_{\rm eff}}
    \end{aligned}
    \end{equation}
    with $\kappa_{d,q}$ given by \eqref{eq:kappadq}.
\end{thm}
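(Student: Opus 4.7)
The strategy mirrors the strip thin-shell argument of Section~\ref{sec:thinshellstrip}, adapted to $(q{+}1)$-dimensional extremal surfaces anchored at $q$-spheres. I will assume that earlier subsections have established the $q$-analogs of the explicit embedding solution for $\Phi(r)$, the momentum--entanglement formula \eqref{eq:dAdt}, and the identification of $\mu(\infty)$ with a multiple of $\langle T_{tt}\rangle$. The upshot is that $\partial_t \norm{X_t}$ is proportional to $\norm{\partial R}$ times an asymptotic coefficient $W$ of the transverse extrinsic-curvature component of $\Sigma$, so the theorem reduces to two inequalities of the form $W^n \leq C_{n,q,d}\,\mu(\infty)^n$ for $n=d/(q{+}1)$ and $n=1$.

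After rescaling to $r_0 = L = 1$, I would plug the thin-shell sources $\mathcal{J} = \eta\,\delta(r-\hat{r})$ and $\mathcal{E} = \kappa\,\delta(r-\hat{r}) + \mathcal{E}^{\rm Maxwell}$ into the closed-form $q$-solution for $K_{\phi\phi}$, obtaining a relation $\eta = \eta(W,\hat{r})$ and a piecewise explicit profile for $K_{\phi\phi}(r)$. The contribution $Q(\infty)$ of the squared-extrinsic-curvature source in the $q$-analog of \eqref{eq:omegasol} can then be computed in closed form and telescopes as it did for strips. Integrating the $\omega'$ equation across the shell and applying the shell DEC in the form $\mathcal{E}^{\rm shell} \geq \sqrt{B^{-1}}\,|\eta|\,\delta(r-\hat{r})$ yields a lower bound
\begin{equation*}
\mu(\infty) \;\geq\; \omega_- + \frac{\hat{r}^d}{\sqrt{\hat{r}^{2d-2} - c_q}}\,W\,\sqrt{1 - \omega_-/\hat{r}^d},
\end{equation*}
where $c_q$ absorbs the $q$-dependent geometric prefactor, subject to positivity and reality constraints analogous to \eqref{eq:dom1}--\eqref{eq:dom2}.

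From here the argument replays the strip optimization \eqref{eq:Unbnd}--\eqref{eq:U1bnd}: I define $U_n \equiv W^n/\mu(\infty)$, observe that $\partial_{\omega_-}^2 U_n \geq 0$ so the maximum over $\omega_-$ is attained at an endpoint of its domain, and check that both endpoints give the same estimate $U_n \leq [\hat{r}^{2d-2}-c_q]^{n/2}/\hat{r}^d$. Choosing $n = d/(q{+}1)$ and maximizing over $\hat{r}$ yields the first bound \eqref{eq:shellnonlinearTq}, while $n=1$, combined with the $q$-analog of \eqref{eq:r0bnd} to trade $1/r_0$ for a multiple of $\mathcal{R}/L^2$, yields the $\kappa_{d,q}$ bound. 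Converting $\mu(\infty)$ to $\langle T_{tt}\rangle$ via the $q$-version of \eqref{eq:TCFT2} completes the proof.

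The main obstacle is the $q$-dependent bookkeeping in the closed-form integrals: verifying that $Q(\infty)$ still telescopes cleanly, that the convexity $\partial_{\omega_-}^2 U_n \geq 0$ survives the $c_q$ modification, and that the two boundary maxima of $U_n$ in $\omega_-$ coincide. The hypothesis $q \geq (d-2)/2$ should enter precisely as the range on which the relevant exponent $n = d/(q{+}1)$ lies in the interval where the analog of the inequality chain \eqref{eq:Unbnd} terminates at $1$; outside this range one of the monotonicity steps in $\hat{r}$ would fail and only a weaker bound would survive.
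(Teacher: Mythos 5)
Your overall strategy matches the paper's: reduce the theorem to bounding $U_n = W^n/\mu(\infty)$ for $n=1$ and $n=d/(q+1)$, use convexity in $\omega_-$ to push the maximization to the endpoints of its domain, and feed the result into the $q$-analogs of the momentum--entanglement formula and the $r_0$ bound. However, there are two concrete gaps. First, for general $q$ there is no closed-form solution for $K_{\phi\phi}$ to ``plug the delta function into'': the extremality condition produces the factor $H(r)$, which depends on the implicitly defined function $h(r)$ from \eqref{eq:Phiq}, so the solution of \eqref{eq:KQ} involves $\exp[-\int H(z)\,\dd z/z]$ with no explicit form. The paper's proof needs an extra step you omit: replace $H(r)$ by its explicit lower bound $H_L(r)$ in both \eqref{eq:omegaQ} and \eqref{eq:KQ}, and argue that because $\mathcal{J}$ has a fixed sign for a single shell, this replacement only increases $|W|$ and only decreases $\mu(\infty)$, so any bound of the form $W \leq \#\,\mu(\infty)^{1/n}$ proved for the modified system holds a fortiori for the true one. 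Only after this substitution do the $K_{\phi\phi}$ profile and $Q(\infty)$ become explicitly computable, and the relevant radical is $\sqrt{\hat{r}^{2q+2}-1}$, not $\sqrt{\hat{r}^{2d-2}-c_q}$: the $q$-dependence enters through the exponent, not through a subtracted constant.

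Second, you misplace where the hypothesis $q\geq (d-2)/2$ enters. It is not needed for the admissible range of $n$: the chain $[\hat{r}^{2q+2}-1]^{n/2}/\hat{r}^{d}\leq 1$ only requires $1\leq n\leq d/(q+1)$, which is available for every $q\leq d-1$. Rather, $2q\geq d-2$ is what makes the coefficient $2H(r)-d+2$ in \eqref{eq:omegaQ} non-negative, hence $\omega$ monotone non-decreasing and $\omega_-\geq 0$ (the lower endpoint of your optimization domain), and it is also what validates the tip-radius bound \eqref{eq:r0bndq} needed to convert $L/r_0^{d-q-1}$ into the $\mathcal{R}^{d-q-1}$ factor in the $\kappa_{d,q}$ bound, as well as the identification of $\omega(\infty)$ with $\mu(\infty)$ up to controlled terms. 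With these two points repaired, your outline reproduces the paper's argument.
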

\noindent The main application of \eqref{eq:shellnonlinearTq} is to bound Wilson loops in
$d=3$, where we get
\begin{equation}\label{eq:d3Wilson}
\begin{aligned}
    \Big|\frac{ \dd }{ \dd t  }\log|\left<\mathcal{W}(C)\right>_{\rho(t)}|\Big| 
    &\leq \frac{ \sqrt{\lambda_{\rm eff}} \text{Length}[C] }{ 2\pi }
    \left[\frac{8\pi}{c_{\rm eff}}\left<T_{tt}\right> \right]^{2/3}.
\end{aligned}
\end{equation}

Let us now turn to the proofs.

\subsection{An implicit solution for the extremal surface location}

As earlier, let $\Sigma$ be the extended planar symmetric homology hypersurface
containing $X$. For the exact same reason as earlier, there
is a unique choice of $\Sigma$. We can now pick coordinates on $\Sigma$ given by
\begin{equation}
\begin{aligned}
    \dd s^2|_{\Sigma} = H_{\mu\nu}\dd y^{\mu}\dd y^{\nu} = B(r) \dd r^2 +
    r^2\left(\dd \phi^2 + \phi^2 \dd \Omega_q^2   + \dd \bm{z}^2 \right).
\end{aligned}
\end{equation}
Again, one such coordinate system covers all of $\Sigma$, as shown in
appendix~\ref{app:nothroat}.

We take our intrinsic coordinates on $X$ to be $(r, \Omega^{i})$, where
$\Omega^i$ are coordinates on the sphere. The embedding coordinates of $X$ in $\Sigma$ reads
\begin{equation}
\begin{aligned}
    X^{\mu} = (r, \phi = \Phi(r), \Omega^{i}, \bm{z}=0),
\end{aligned}
\end{equation}
where the symmetries of the problem dictate $\bm{z}=0$.
The induced metric on $X$ is
\begin{equation}
\begin{aligned}
    \dd s^2|_{X} = \left[B(r) + r^2 \Phi'(r)^2\right]\dd r^2 + r^2 \Phi(r)^2 \dd
    \Omega_q^2.
\end{aligned}
\end{equation}
Now we must implement the condition that $X$ is extremal, which requires us to
compute all its mean curvatures and demand them to be vanishing. To do this, let
$n_a^{I}$ be an orthonormal basis of normal forms to $X$ that are
tangent to $\Sigma$, labeled by $I$.  Let $t_a$ be the future timelike normal orthogonal to $\Sigma$. A complete basis of mean curvatures of $X$ now is
\begin{equation}
\begin{aligned}
    \mathcal{K}^{I} &= h^{ab}\nabla_{a}n_{b}^I, \\
    \mathcal{K}^{0} &= h^{ab}\nabla_{a} t_{b},
\end{aligned}
\end{equation}
where $h^{ab} = g^{ab} + t^a t^b - \delta^{IJ}n^a_I n_J^{b} $. All of these quantities must vanish.
Considering the $\mathcal{K}^{I}$ corresponding the $\bm{z}$ directions, we just get
$0$ by our symmetries. Letting $I=n$ denote the remaining normal direction in
$\Sigma$, we get by direct computation that (see appendix
\eqref{sec:codimqextremal} for some of the required ingredients)
\begin{equation}
\begin{aligned}
   \mathcal{K}^n =
    \frac{ 1 }{ r \sqrt{B}\left[B+r^2 (\Phi')^2\right]^{3/2}}
    \Bigg[&r^2B \Phi'' + (q+1)r^3 (\Phi')^3 \\
   &+ \left((q+2) B - \frac{ 1 }{ 2 }r
    B'\right)r \Phi'   
    - \frac{ q B }{ \Phi }\left( B + r^2 (\Phi')^2 \right)
        \Bigg].
\end{aligned}
\end{equation}
If it was not for the last term, we would reproduce \eqref{eq:planarAeq} by
setting $q=d-2$. The new term is caused by the curvature of $\partial R$. Now, with this last term, we no longer have an explicit analytical
solution (when $q>0$). However, we can find an implicit solution that lets us
proceed. Define 
\begin{equation}
\begin{aligned}
    \chi(r) = \frac{ qB }{ \Phi \Phi' }\left(\frac{ B }{ \Phi'^2 } + r^2 \right),
\end{aligned}
\end{equation}
so that our equation for extremality reads
\begin{equation}\label{eq:qexeq}
\begin{aligned}
    (q+1)r^3 (\Phi')^3 + \left((q+2) B - \frac{ 1 }{ 2 }r B'\right)r \Phi' +
    r^2B \Phi'' - (\Phi')^3 \chi(r) = 0.
\end{aligned}
\end{equation}
Imposing $\Phi(r_0)=0$, where $r_0$ is the tip of the extremal surface, we have
the implicit solution
\begin{equation}
\begin{aligned}
    \Phi(r) = \int_{r_0}^{r} \dd \rho \frac{ \sqrt{B(\rho)} }{ \rho
    \sqrt{\mathcal{C} \rho^{2q+2}h(\rho) - 1  }},
\end{aligned}
\end{equation}
where
\begin{equation}
\begin{aligned}
    h(r) = 1 - \frac{2}{\mathcal{C}} \int_{r_0}^{r}\dd \rho \chi(\rho) \rho^{-6-2q},
\end{aligned}
\end{equation}
for some $\mathcal{C}$ that is fixed by imposing $\Phi'(r_0)=\infty$.
Assuming $h(r_0)$ is finite, we get that $\mathcal{C}=r_0^{-2q-2}$.
This is indeed correct, even though $\chi(r_0)$ looks superficially divergent. Since we are near a minimum of $r(\Phi)$ we have that $r=r_0 +
\mathcal{O}(\Phi^2)$ near $r_0$, and so for $r$ close to $r_0$ we get $\Phi =
\alpha \sqrt{r-r_0}$ for some $\alpha$. Even though $\Phi$ goes to zero at $r_0$
we find that
\begin{equation}
\begin{aligned}
    \chi(r) \sim \mathcal{O}(1) 
\end{aligned}
\end{equation}
near $r_0$, and so $h(r_0) = 1$. Next, reality of $\Phi(r)$ demands that
$h(\rho) \geq (r_0/r)^{2q+2}$, while positivity of $\Phi$ and $\Phi'$ ensures
that $h(r) \leq 1$, and so in total we know that\footnote{By the same kind of
analysis as in appendix~\ref{app:nothroat}, we cannot have additional turning points where
$\Phi'(r)$ diverges, and so we have strict inequality in the lower bound.}
\begin{equation}\label{eq:Phiq}
\begin{aligned}
    \Phi(r) &= \int_{r_0}^{r} \dd \rho \frac{ \sqrt{B(\rho)} }{ \rho
    \sqrt{(\rho/r_0)^{2q+2}h(\rho) - 1  }}, \qquad
    0 < (r_0/r)^{2q+2}  < h(r) \leq 1.
\end{aligned}
\end{equation}

\subsection{The relation between the Hawking masses}
Take $K_{\alpha\beta}$ to be the extrinsic curvature of the extended homology
hypersurface. Like in the case of the strip, we have that
\begin{equation}
    \begin{aligned}\label{eq:muvsomega}
    \mu(r) = \omega(r) + r^{d-4}K_{\phi\phi}(r)^2,
\end{aligned}
\end{equation}
which follows from the same computation as in the previous section, together
with \eqref{eq:Ktrgeneralq} in
appendix~\ref{sec:codimqextremal}. We have that $K_{\phi\phi}\sim \mathcal{O}(1/r^{q-1})$, as becomes clear in the next
section. Thus, at large $r$ we have
\begin{equation}\label{eq:muvsomegaq2}
\begin{aligned}
    \mu(r) = \omega(r) + \mathcal{O}\left(r^{d-2-2q}\right). 
\end{aligned}
\end{equation}
Consequently, $\omega(\infty)$ is proportional to spacetime mass if and only if
\begin{equation}
\begin{aligned}
q > \frac{ d-2 }{ 2 }.
\end{aligned}
\end{equation}
If $2q=d-2$, $\omega(\infty)$ is smaller than $\mu(\infty)$ by some finite
number. For $2q<d-2$, we get $\omega(\infty)=-\infty$ by \eqref{eq:muvsomegaq2} and
the fact that $\mu(\infty)$ is finite and positive. We will see below that this
comes out of the constraint equations, since exactly when $2q<d-2$, $\omega(r)$ is
neither positive nor monotonically increasing. We will not be able to say anything about the case $2q<d-2$.

\subsection{$\partial_t \norm{X_t} \leq $ momentum on $X_t$}\label{sec:matterfluxq}
The time-derivative of the generalized volume satisfies \cite{FisWis16, BaoCao19} 
\begin{equation}\label{eq:dXdteta}
\begin{aligned}
    \frac{ \dd }{ \dd t }\norm{X_t} = \int_{\partial X_t} N^a \eta_a,
\end{aligned}
\end{equation}
where $\eta^a = (\partial_t)^a$  generates the deformation of $\partial X_t$, 
while $N^a$ is the unit vector that is (1) tangent to $X_t$, (2) orthogonal
to $\partial X_t$, and (3) pointing towards the conformal boundary. A
computation in appendix~\ref{app:dSKphi2} shows that \eqref{eq:dXdteta} can be
written as
\begin{equation}\label{eq:dXdt}
\begin{aligned}
    \frac{ \dd }{ \dd t }\norm{X_t}\Big|_{t=0} = \frac{ \norm{\partial R} }{ L^q }\lim_{r
    \rightarrow \infty} r^{q-1}K_{\phi\phi}(r).
\end{aligned}
\end{equation}
Now we again reach the stage where we must write the Einstein constraint
equations as a closed system, which requires us to impose extremality in the timelike direction. 

First, note that from the planar symmetry of $\Sigma$, if $\bm{x}$ are Cartesian
coordinates on the plane containing $R$, then we have that the extrinsic
curvature of $\Sigma$ reads
\begin{equation}
\begin{aligned}
    K_{\mu\nu}\dd y^{\mu} \dd y^{\nu} &= K_{rr}(r)\dd r^2 + K_{\phi\phi}(r) \left(\dd \bm{x}^2 + \dd \bm{z}^2\right) \\
    &= K_{rr}(r)\dd r^2 + K_{\phi\phi}(r) \left(\dd \phi^2 + \phi^2 \dd \Omega^2_{q} +
    \dd \bm{z}^2\right).
\end{aligned}
\end{equation}
Thus, the components of the extrinsic curvature with indices in the sphere
directions reads
\begin{equation}
    \begin{aligned}\label{eq:Kij}
    K_{ij} = K_{\phi\phi}(r) \phi^2 w_{ij},
\end{aligned}
\end{equation}
where  $w_{ij}$ is the unit metric on the $q$-sphere. 
Define again $F(r)$ through the relation $K_{rr}(r) = F(r) B(r)$. Computing
$\mathcal{K}^0 = 0$, using \eqref{eq:Kij}, and solving for $F(r)$ (see
Appendix~\ref{sec:codimqextremal}), 
we get, after substituting our expression for $\Phi(r)$, that
\begin{equation}
    \begin{aligned}\label{eq:Fsolq}
    F(r) &= -\frac{ K_{\phi\phi}(r) }{ r^2 }H(r), 
\end{aligned}
\end{equation}
where we for convenience defined the function 
\begin{equation}
\begin{aligned}
    H(r) = \frac{ q(r/r_0)^{2q+2}h(r) + 1 }{
        (r/r_0)^{2q+2}h(r) - 1 }.
\end{aligned}
\end{equation}
Since $\partial_h H <0$ and $h(r) \leq 1$, we get the lower bound
\begin{equation}
\begin{aligned}
    H(r) \geq \frac{ q(r/r_0)^{2q+2} + 1 }{ (r/r_0)^{2q+2} - 1 } \equiv H_L(r).
\end{aligned}
\end{equation}
The constraint equations \eqref{eq:preConstraint1} and \eqref{eq:preConstraint2} are
unchanged, except now the expression for $F(r)$ is different. Plugging it in we get
\begin{align}
    (d-1) \frac{\omega'(r)}{r^{d-1}} &= 2 \mathcal{E} + \frac{ (d-1) }{ r^4
    }K_{\phi\phi}(r)^2\left[2H(r)  - d + 2\right], \label{eq:omegaQ} \\
    K_{\phi\phi}' + \left[H(r)-1\right] \frac{ K_{\phi\phi} }{ r } &=  - \frac{ r^2 }{ d-1
    }\mathcal{J}(r). \label{eq:KQ}
\end{align}
Now, using the lower bound $H_L(r)$, let us note the following:
\begin{equation}
\begin{aligned}
    2H(r) - d + 2 &\geq \frac{ (2q-d+2)(r/r_0)^{2q+2} + d - 1 }{
        (r/r_0)^{2q+2} - 1}.
\end{aligned}
\end{equation}
This is positive definite for all $r$ only when $q \geq \frac{d-2}{2}$, so for
geodesics ($q=0$), we only have monotonicity of $\omega(r)$ when $d=2$.
But this is just the case studied in the previous section. For ($q=1$), which is the relevant 
case for Wilson loops, we have
monotonicity of $\omega(r)$  only for $d\leq 4$. For an HRT surface we have $q=d-2$, and so
we have monotonicity in all dimensions. It is in fact quite surprising that we
have monotonicity for any $q$ whatsoever, since when looking at the Einstein
constraint equations in covariant form, monotonicity of the Riemannian Hawking mass is only
manifest on hypersurfaces that have vanishing mean curvature. 

Let us assume $2q \geq d-2$ going forward, and
let us bound $K_{\phi\phi}$ and $\mu$ at infinity. 
Fixing an integration constant by demanding that $F(r_0)=\text{finite}$, 
the solution to the momentum constraint is 
\begin{equation}
\begin{aligned}
    K_{\phi\phi}(r) = - \frac{ 1 }{ d-1 }\int_{r_0}^{r}\dd \rho\rho^2 \mathcal{J}(\rho)
    e^{-\int_{\rho}^{r}\dd \frac{ 1 }{ z }\left(H(z) - 1 \right)}.
\end{aligned}
\end{equation}
We have that
\begin{equation}
\begin{aligned}
    |K_{\phi\phi}(r)| 
    &\leq \frac{ 1 }{ d-1 }\int_{r_0}^{r}\dd \rho \rho^2
    |\mathcal{J}(\rho)| e^{-\int_{\rho}^{r}\dd \frac{ 1 }{ z }\left(H_L(z) - 1
    \right)} \\
    &= \frac{ r^2 }{ (d-1)\sqrt{(r/r_0)^{2q+2} -1} }\int_{r_0}^{r}\dd \rho 
    |\mathcal{J}(\rho)| \sqrt{(\rho/r_0)^{2q+2} -1}.\label{eq:Kupper}
\end{aligned}
\end{equation}
We see from this expression that $K_{\phi\phi}\sim \mathcal{O}(1/r^{q-1})$. Also, 
in this last expression, if we replace $|\mathcal{J}|\rightarrow -\mathcal{J}$, we just get the solution of \eqref{eq:KQ} with $H(r)$ replaced by $H_L(r)$. We will use this fact later.

Inserting \eqref{eq:Kupper} in \eqref{eq:dXdt}, we finally get
\begin{equation}
\begin{aligned}
    \Big|\frac{ \dd }{ \dd t }\norm{X_t} \Big| \leq \frac{ (d-1)\norm{\partial R}
    }{ L^q } \int_{r_0}^{\infty}\dd \rho |\mathcal{J}(\rho)| \sqrt{\rho^{2q+2} -
    r_0^{2q+2}}.
\end{aligned}
\end{equation}
Unlike for an HRT surface anchored at a strip, we are here only able to write an
inequality. 

Next, let us turn to the second ingredient: the mass. Rewriting \eqref{eq:omegaQ} in terms of $\mu(r)$, we get
\begin{equation}\label{eq:muEq}
\begin{aligned}
    (d-1) \frac{ \mu'(r) }{ r^{d-1} } = 2\mathcal{E} + \frac{ (d-1) }{ r^{4}
    }K_{\phi\phi}(r)^2\left[2H(r) + d - 6\right] + \frac{ 2(d-1) }{ r^3 }\frac{ \dd }{ \dd r
    }K_{\phi\phi}^2.
\end{aligned}
\end{equation}
After an integration by parts and using $H(r)\geq H_L(r)$, we get that
\begin{equation}
    \begin{aligned}\label{eq:muboundq}
    \mu(\infty) &\geq \mu(r_0) + \frac{ 2 }{ d-1 }\int_{r_0}^{\infty}\dd \rho
    \rho^{d-1}\mathcal{E}(\rho) + \int_{r_0}^{\infty}\dd \rho
    \rho^{d-5}K_{\phi\phi}(r)^2 \left[2H_L(r) + d \right]
\end{aligned}
\end{equation}
where
\begin{equation}
\begin{aligned}
    2H_L(r) + d = \frac{ (d+2q)(r/r_0)^{2q+2} + (d-2)}{ (r/r_0)^{2q+2} - 1 }
    \geq 0.
\end{aligned}
\end{equation}

Possessing now an upper bound on $\partial_t \norm{X_t}$ and a lower bound on
mass, we next need an upper bound on $L^2/r_0$. 

\subsection{Constraints on boundary anchored extremal surfaces}
It turns out that generalizations of Lemmata~\ref{lem:mupos} and \ref{lem:omega0} remain true for the surfaces considered in this section. 
The proof of Lemma~\ref{lem:mupos} is unchanged, while 
 from the discussion in appendix~\ref{app:nothroat} and
 \ref{sec:codimqextremal}, together with the proof of Lemma~\ref{lem:omega0},
 we get the following constraints on the tip of $X$:
\begin{lem}
    Let $(\mathcal{M}, g_{ab})$ be an asymptotically AdS$_{d+1\geq3}$
    spacetime with planar symmetry. Let $X$ be a $(q+1)$-dimensional extremal
    surface anchored at a $q$-sphere on the conformal boundary. Then the tip of
    $X$ lies in an untrapped region. 
    Furthermore, if $(\mathcal{M}, g_{ab})$ is
    regular and satisfies the DEC, then $\omega(r_0)\geq 0$, where $r_0$ is the
    radius of the tip of $X$.
\end{lem}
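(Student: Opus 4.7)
The plan is to mirror the argument for Lemma~\ref{lem:omega0}, substituting the sphere-case ingredients already assembled in Sec.~\ref{sec:matterfluxq}. Two properties carry the strip proof: that $K_{\phi\phi}(r_0)=0$, which controls the null expansions of the tip plane; and the monotonicity of the Lorentzian Hawking mass $\mu$ under inward spacelike flow in an untrapped region. Both remain available for general $q$.

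For the first assertion, let $\sigma$ be the plane of radius $r_0$ in the planar-symmetry foliation that passes through the tip of $X$. Its null expansions follow from the general formula $\sqrt{2}\theta_\pm[\sigma] = \pm\mathcal{K}[\sigma] + K - r^\alpha r^\beta K_{\alpha\beta}$. Because the extrinsic curvature of $\Sigma$ retains the block-diagonal form $K_{\mu\nu}\dd y^\mu\dd y^\nu = K_{rr}\dd r^2 + K_{\phi\phi}(\dd\bm{x}^2+\dd\bm{z}^2)$ even in the sphere case, and because a constant-$r$ plane in $\Sigma$ is still intrinsically flat with area element $r^{d-1}$ times Euclidean measure, the computation (using ingredients from appendix~\ref{sec:codimqextremal}) reproduces
\begin{equation*}
\sqrt{2}\,\theta_\pm[\sigma] \;=\; \pm\,\frac{d-1}{r_0\sqrt{B(r_0)}} \;-\; \frac{(d-1)\,K_{\phi\phi}(r_0)}{r_0^{2}}.
\end{equation*}
From \eqref{eq:Kupper} the integral defining $K_{\phi\phi}(r_0)$ has zero range, so $K_{\phi\phi}(r_0)=0$ (assuming, as in the strip case, that $\mathcal{J}$ has no delta support precisely at the tip), and hence $\theta_+[\sigma]\geq 0$, $\theta_-[\sigma]\leq 0$. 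This part uses neither the DEC nor $C^2$ regularity of $g_{ab}$.

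For the second assertion, since $K_{\phi\phi}(r_0)=0$ the relation \eqref{eq:muomegacomp} reduces to $\mu[\sigma]=\omega[\sigma,\Sigma]=\omega(r_0)$, so it suffices to show $\mu[\sigma]\geq 0$. I would use AdS-hyperbolicity to extend $\sigma$ into a complete planar-symmetric spacelike hypersurface $\Gamma$ and then deform $\sigma$ inwards along $\Gamma$ while preserving planar symmetry. The monotonicity result of \cite{Fol22} then ensures, under the DEC, that $\mu$ is non-increasing along this inward flow while it stays in the untrapped region. Continuity of $\theta_\pm$ (from $g_{ab}\in C^{2}$) forces one of two terminations: either the flow reaches a marginally trapped plane, where $2\theta_+\theta_-=0$ makes $\mu$ manifestly non-negative via \eqref{eq:mudef}, or it approaches $r\to 0$, where Lemma~\ref{lem:mupos} supplies $\mu\geq 0$. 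Either way $\mu[\sigma]\geq 0$, completing the proof.

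The main obstacle is really just bookkeeping: confirming that the spherical boundary geometry does not spoil the clean identity $K_{\phi\phi}(r_0)=0$ or alter the structural form of $\sqrt{2}\theta_\pm[\sigma]$. The new ingredient $\chi(r)$ coming from the curvature of $\partial R$ enters only the embedding equation for $\Phi(r)$, not the momentum-constraint solution \eqref{eq:Kupper}; and the function $H(r)$ appearing there remains finite at the tip, as verified in Sec.~\ref{sec:matterfluxq}. Once those two facts are in hand, every step above is a direct transcription of the strip argument.
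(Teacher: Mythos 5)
Your proof is correct and follows essentially the same route as the paper, which likewise observes that \eqref{eq:Ktrgeneralq} leaves \eqref{eq:thetapmform} structurally unchanged, that $K_{\phi\phi}(r_0)=0$ still follows from the momentum-constraint solution, and then reruns the $\mu$-flow argument of Lemma~\ref{lem:omega0} verbatim. One small inaccuracy: $H(r)$ in fact diverges at the tip (the denominator $(r/r_0)^{2q+2}h(r)-1$ vanishes there since $h(r_0)=1$), but this is harmless --- it is precisely this divergence that forces $K_{\phi\phi}(r_0)=0$ once $F(r_0)$ is required to be finite, consistent with your zero-range-integral argument.
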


With this in hand, we readily obtain the spherical dimension--$(q+1)$ version of Theorem~\ref{thm:r0bound}:
\begin{thm}\label{thm:r0bound2}
    Let $(\mathcal{M},g_{ab})$ be a regular planar-symmetric asymptotically AdS$_{d+1\geq 3}$ spacetime
    satisfying the DEC. Let $X$ be a dimension $q+1$
    extremal surface anchored at a sphere of radius
    $\mathcal{R}$. Let be $r_0$ be the
    radius of the plane tangent to the tip of $X$. Then if
    \begin{equation}\label{eq:qbound}
    \begin{aligned}
    q \geq \frac{ d-2 }{ 2 },
    \end{aligned}
    \end{equation}
    we have
    \begin{equation}\label{eq:r0bndq}
    \frac{ L^2 }{ r_0 } \leq \frac{
        \Gamma\left(\frac{ 1 }{ 2(q+1) }
        \right)}{\sqrt{\pi}\Gamma\left(\frac{ q+2 }{ 2(q+1) } \right) }
    \mathcal{R}.
    \end{equation}
\end{thm}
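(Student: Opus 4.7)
The plan is to mimic the proof of Theorem~\ref{thm:r0bound}, replacing the strip HRT analysis with the general $(q+1)$-dimensional machinery developed in Sections~3.2--3.4. First I would invoke the preceding Lemma to conclude that $\omega(r_0) \geq 0$ under the DEC and regularity assumptions. Next, I would observe that the constraint equation \eqref{eq:omegaQ} has the form
\begin{equation}
(d-1)\frac{\omega'(r)}{r^{d-1}} = 2\mathcal{E}(r) + \frac{d-1}{r^4}K_{\phi\phi}(r)^2\bigl[2H(r) - d + 2\bigr],
\end{equation}
and that the DEC guarantees $\mathcal{E} \geq 0$. Using the inequality $H(r) \geq H_L(r)$ established just after \eqref{eq:Fsolq}, a short computation gives
\begin{equation}
2H_L(r) - d + 2 = \frac{(2q - d + 2)(r/r_0)^{2q+2} + (d-1)}{(r/r_0)^{2q+2} - 1},
\end{equation}
which is manifestly non-negative precisely when $q \geq (d-2)/2$, i.e.\ under the hypothesis \eqref{eq:qbound}. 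Hence $\omega'(r) \geq 0$ for all $r$, and combined with $\omega(r_0)\geq 0$ this yields $\omega(r) \geq 0$ everywhere on $\Sigma$, so that
\begin{equation}
B(r) = \left[\frac{r^2}{L^2} - \frac{\omega(r)}{r^{d-2}}\right]^{-1} \geq \frac{L^2}{r^2}.
\end{equation}

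The final step would be to plug this bound into the implicit solution \eqref{eq:Phiq}. Using $B(\rho) \geq L^2/\rho^2$ in the numerator and $h(\rho) \leq 1$ in the denominator, both of which strengthen the integrand in the same direction, I obtain
\begin{equation}
\frac{\mathcal{R}}{L} = \Phi(\infty) \geq L\int_{r_0}^{\infty}\frac{\dd\rho}{\rho^2\sqrt{(\rho/r_0)^{2q+2} - 1}} = \frac{L}{r_0}\int_1^{\infty}\frac{\dd u}{u^2\sqrt{u^{2q+2} - 1}}.
\end{equation}
The remaining integral is standard: the substitution $v = u^{-(2q+2)}$ expresses it as a Beta function, evaluating to $\sqrt{\pi}\,\Gamma\bigl(\tfrac{q+2}{2(q+1)}\bigr)/\Gamma\bigl(\tfrac{1}{2(q+1)}\bigr)$. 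Rearranging yields \eqref{eq:r0bndq}.

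The main obstacle is the verification that $\omega$ retains its monotonicity in this more general setting, since for $q=d-2$ (the strip/ball HRT case) monotonicity was automatic from extremality, whereas here the $K_{\phi\phi}^2$ coefficient in the constraint can in principle take either sign. The hypothesis $q\geq (d-2)/2$ is precisely what is needed to preserve positivity of this coefficient through the $H_L$ bound; without it, $\omega$ would not be a useful quasilocal mass, matching the dimensional counting observation in \eqref{eq:muvsomegaq2}. Once monotonicity is in hand, the rest of the argument is a direct adaptation of the proof of Theorem~\ref{thm:r0bound}, with the only genuine computational input being the Beta-function identity.
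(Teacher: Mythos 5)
Your proposal is correct and follows essentially the same route as the paper: DEC plus the preceding lemma give $\omega(r_0)\geq 0$, the condition $2q\geq d-2$ makes the $K_{\phi\phi}^2$ coefficient in \eqref{eq:omegaQ} non-negative via $H\geq H_L$ so that $\omega'\geq 0$ and hence $B(r)\geq L^2/r^2$, and then substituting this together with $h\leq 1$ into \eqref{eq:Phiq} and evaluating the Beta-function integral yields \eqref{eq:r0bndq}. The only difference is that you spell out the positivity of $2H_L-d+2$ explicitly, which the paper establishes in the discussion preceding the theorem rather than inside the proof.
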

\begin{proof}
    For $2q\geq d-2$, \eqref{eq:omegaQ} implies $\omega'(r)\geq 0$. Combining
    with $\omega(r_0)\geq0$, we get $\omega(r)\geq0$.
    Using now $h(r)<1$ and that $\omega(r)\geq0$ implies $B(r) \geq L/r$, we
    get
\begin{equation}
\begin{aligned}
    \mathcal{R} &= L \Phi(\infty) 
            = L \int_{r_0}^{\infty}\dd r \frac{ \sqrt{B(r)}
            }{ r\sqrt{(r/r_0)^{2q+2}h(r) - 1} } \\
            &\geq L^2  \int_{r_0}^{\infty}\dd r \frac{1 
            }{ r^2 \sqrt{(r/r_0)^{2q+2} - 1} } 
            = \frac{L^2}{r_0} \frac{ \sqrt{\pi} \Gamma\left(\frac{ q+2 }{ 2(q+1) } \right) }{
                \Gamma\left(\frac{ 1 }{ 2(q+1) } \right) }.
\end{aligned}
\end{equation}
\end{proof}

\subsection{Proofs}\label{eq:d2d4specialbnds}
\subsubsection*{Proof of bounds in $d=2$ and $d=4$}
Consider the special dimension 
\begin{equation}
\begin{aligned}
q = \frac{ d-2 }{ 2 },
\end{aligned}
\end{equation}
which can only happen when $d$ is even. As seen previously, we have that
$\omega(r) \geq 0$ in this case. Furthermore,  \eqref{eq:muvsomega} becomes
\begin{equation}
\begin{aligned}
    \mu(\infty) = \omega(\infty) + \left[\lim_{r \rightarrow
    \infty}r^{q-1}K_{\phi\phi} \right]^2,
\end{aligned}
\end{equation}
and so
\begin{equation}
\begin{aligned}
    \Big|\lim_{r \rightarrow \infty} r^{q-1}K_{\phi\phi}(\infty)\Big| \leq \sqrt{\mu(\infty)},
\end{aligned}
\end{equation}
which gives
\begin{equation}\label{eq:specialqbnd}
\begin{aligned}
    \Big|\frac{ \dd }{ \dd t }\norm{X_t}\Big|_{t=0} &\leq \frac{ \norm{\partial R} }{
        L^{q} } \sqrt{\mu(\infty)} 
    &= \norm{\partial R} L^{q+1} \sqrt{\frac{ 16\pi   }{ c_{\rm eff}(d-1)
    }\left< T_{tt} \right>}.
\end{aligned}
\end{equation}
This proves Theorem~\ref{thm:mainthm5}. For $q=0, d=2$, this is just the formula we used to derive \eqref{eq:mainthm1}.

The above result implies a bound on correlators that can be computed using the geodesic
approximation. The geodesic approximation says that the two-point function
of a CFT scalar operator $O$ of large scaling dimension $\Delta \gg
1$ can be computed as
\begin{equation}\label{eq:geoapprox}
\begin{aligned}
    \left<O(\bm{x}) O(0)\right>_{\rho(t)} = \eta
    e^{-\frac{ \Delta }{ L }\norm{X_t}_{\rm reg}},
\end{aligned}
\end{equation}
where $\eta$ is some constant, and $\norm{X_t}_{\rm reg}$ is the
regularized distance of a geodesic  anchored at the points $(t, \bm{x})$ and $(t, 0)$ on the conformal boundary.
We here adopted the Schr\"odinger picture. Combining \eqref{eq:specialqbnd} and
\eqref{eq:geoapprox} and taking $d=2$, we
get\footnote{Of course, we could have derived this in Sec.~\ref{sec:strip}, given that that 
geodesics coincide with the HRT surface when $d=2$.}
\begin{equation}\label{eq:finalCorrbnd}
\begin{aligned}
\Big|\frac{ \dd }{ \dd t }\log \left<O(x)
    O(0)\right>_{\rho(t)}\Big| &\leq \sqrt{\frac{96\pi \Delta^2 }{ c
    }\left<T_{tt}\right>}, \qquad d=2.
\end{aligned}
\end{equation}

Next, for $d=4$, \eqref{eq:specialqbnd} holds for $q=1$, where the $X_t$ are
two-dimensional worldsheets anchored at circles on the boundary. If $\mathcal{W}(C)$ is a
Wilson loop of a circle $C = S^{1}$, we have that \cite{Mal98,Rey98}
\begin{equation}\label{eq:Wformula}
\begin{aligned}
    \left<\mathcal{W}(C)\right>_{\rho(t)} = \eta e^{- \frac{ 1 }{ 2\pi \alpha'
    }\norm{X_t} },
\end{aligned}
\end{equation}
where $\alpha' = \ell_{\rm string}^{2}$ and $\eta$ again some
constant. Combining \eqref{eq:specialqbnd} and \eqref{eq:Wformula} we get
\begin{equation}\label{eq:W4eff}
\begin{aligned}
\Big|\frac{ \dd }{ \dd t  }\log\left<\mathcal{W}(C)\right>_{\rho(t)}\Big| \leq
    \text{Length}[C]\sqrt{\frac{ 4\lambda_{\rm eff} }{ 3\pi c_{\rm eff}
    }\left< T_{tt} \right>}, \qquad d=4,
\end{aligned}
\end{equation}
where $\text{Length}[C]=2\pi \mathcal{R}$. With the precise dictionary for the
duality between type IIB supergravity
on AdS$_5\times S^5$ and $\mathcal{N}=4$ super Yang-Mills with gauge group
$SU(N)$ and 't Hooft coupling $\lambda$, given by
\begin{equation}
\begin{aligned}
    \frac{ G_N }{ R^3 } = \frac{ \pi }{ 2N^2 }, \quad \sqrt{\lambda} = \frac{
        L^2 }{ \alpha' },
\end{aligned}
\end{equation}
\eqref{eq:W4eff} can be written as \eqref{eq:Wsym}.

\subsubsection*{Proof of bounds for small $\mathcal{R}$}
Next we prove bounds that are strong at small radii $\mathcal{R}$. We have:
\begin{lem}
    Consider the same assumptions as in Theorem~\ref{thm:r0bound2}. Assume furthermore that
    $\mathcal{R}^{d}\left<T_{tt}\right>/c_{\rm eff} \ll 1$ and $2q\geq d-2$.
    Then
    \begin{equation}\label{eq:mainlem1q}
    \begin{aligned}
        \Big| \lim_{r \rightarrow \infty} r^{d-3}K_{\phi\phi} \Big| \leq \frac{ L }{ 2r_0^{d-q-1} }
        \omega(\infty)\left[1 + \mathcal{O}\left(\frac{
            \mathcal{R}^{d}\left<T_{tt}\right> }{ c_{\rm eff} }\right)\right].
    \end{aligned}
    \end{equation}
\end{lem}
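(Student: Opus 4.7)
The plan is to mirror the proof of Lemma~\ref{lem:smallLmainlem} step by step, now using the extra technical inputs of Sec.~\ref{sec:matterfluxq} in place of the strip formulas. Let $W = -\lim_{r\to\infty} r^{q-1} K_{\phi\phi}(r)$, which is the natural finite quantity given the near-boundary falloff $K_{\phi\phi}\sim r^{1-q}$ read off from \eqref{eq:Kupper}. Without loss of generality assume $W>0$; otherwise reverse the time orientation. Taking $r\to\infty$ in \eqref{eq:Kupper} gives the upper bound
\begin{equation}
W \leq \frac{r_0^{q+1}}{d-1}\int_{r_0}^{\infty}\dd\rho\,|\mathcal{J}(\rho)|\sqrt{(\rho/r_0)^{2q+2}-1}.
\end{equation}
For the denominator, I will combine \eqref{eq:muboundq} with the fact that under our hypothesis $2q\geq d-2$ every term on the right of \eqref{eq:omegaQ} other than $2\mathcal{E}/(d-1)$ is non-negative, so $\omega(\infty)\geq \omega(r_0) + \tfrac{2}{d-1}\int_{r_0}^{\infty}\dd\rho\,\rho^{d-1}\mathcal{E}(\rho)$, and then drop $\omega(r_0)\geq0$ from Lemma~\ref{lem:omega0}.

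The next step is to invoke the DEC in the form \eqref{eq:Ecallower}, $\mathcal{E}\geq |\mathcal{J}|/\sqrt{B}$. Writing $B = [r^2/L^2 - \omega(r)/r^{d-2}]^{-1}$ and using monotonicity and positivity of $\omega$, at leading order $\sqrt{B}\approx L/r$, with subleading corrections of size $L^2\omega(r)/r^{d}$. This converts the lower bound on $\omega(\infty)$ into
\begin{equation}
\omega(\infty) \geq \frac{2}{(d-1)L}\int_{r_0}^{\infty}\dd\rho\,\rho^{d}|\mathcal{J}(\rho)|\left[1 - \mathcal{O}\!\left(L^2\omega(r)/\rho^{d}\right)\right].
\end{equation}
Using $\sqrt{(\rho/r_0)^{2q+2}-1}\leq (\rho/r_0)^{q+1}$ in the numerator and $\rho\geq r_0$ in the denominator in the resulting ratio, the $r_0^{q+1}$ factors cancel and one is left with
\begin{equation}
\frac{W}{\omega(\infty)} \leq \frac{L}{2}\,\frac{\int \rho^{q+1}|\mathcal{J}|\,\dd\rho}{\int \rho^{d}|\mathcal{J}|\,\dd\rho}\left[1 + \cdots\right] \leq \frac{L}{2 r_0^{d-q-1}}\left[1 + \cdots\right],
\end{equation}
which is precisely the claimed inequality at leading order.

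Finally, I would control the correction brackets by the same perturbative bookkeeping as in the strip case: $\omega(r)\leq \omega(\infty)$, so the relative error in $\sqrt{B}$ is bounded by $L^2\omega(\infty)/r_0^{d}$, and applying Theorem~\ref{thm:r0bound2} to replace $L^2/r_0$ by $\mathcal{O}(\mathcal{R})$ together with \eqref{eq:muvsomegaq2} to identify $\omega(\infty)$ with $\mu(\infty)\propto \langle T_{tt}\rangle$ (up to a finite shift when $2q=d-2$) produces the dimensionless expansion parameter $\mathcal{R}^{d}\langle T_{tt}\rangle/c_{\rm eff}$. The main obstacle I anticipate is really just bookkeeping: unlike the strip case, the $H(r)$ appearing in \eqref{eq:KQ} differs from its lower bound $H_L(r)$ by a positive amount controlled by $h(r)\leq 1$, and one must verify that the extra integrating-factor contribution is sign-definite in the direction needed for the inequality, and that substituting $H_L$ in place of $H$ in \eqref{eq:Kupper} really does upper-bound $|K_{\phi\phi}|$ (as already noted below \eqref{eq:Kupper}, replacing $|\mathcal{J}|\to -\mathcal{J}$ recovers the $H_L$ equation, making this step clean). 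Everything else is routine once one grants Lemmata~\ref{lem:omega0}--type positivity of $\omega(r_0)$ and the $r_0$ bound from Theorem~\ref{thm:r0bound2}.
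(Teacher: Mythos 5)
Your proposal is correct and follows essentially the same route as the paper's own proof: define $W=-\lim_{r\to\infty}r^{q-1}K_{\phi\phi}$, upper-bound it via \eqref{eq:Kupper}, lower-bound $\omega(\infty)$ using the $2q\geq d-2$ monotonicity of \eqref{eq:omegaQ} together with the DEC inequality $\mathcal{E}\geq|\mathcal{J}|/\sqrt{B}$, expand the $\sqrt{1-\omega L^2/r^d}$ factor perturbatively, and convert the expansion parameter with Theorem~\ref{thm:r0bound2}. The point you flag about $H$ versus $H_L$ is already resolved in the paper exactly as you anticipate (replacing $H\to H_L$ and $|\mathcal{J}|$ only strengthens the bound), so there is no gap.
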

\begin{proof}
   Define $W=-\lim_{r \rightarrow
    \infty}r^{q-1}K_{\phi\phi}$, and
    assume without loss of generality that $W>0$. Using \eqref{eq:Ksol} and \eqref{eq:omegaQ} and the exact
    same logic as in the proof of Lemma~\ref{lem:smallLmainlem}, we get
    \begin{equation}
    \begin{aligned}
        \frac{ W }{ \omega(\infty) } 
        &\leq  \frac{L\int_{r_0}^{\infty} \dd r r^{q+1}|\mathcal{J}(r)| }{
            2 \int_{r_0}^{\infty}\dd r r^{d}\sqrt{1 - \frac{ \omega(r)L^2 }{ r^{d}
            }}|\mathcal{J}(r)| } \\
        &\leq \frac{L\int_{r_0}^{\infty} \dd r r^{q+1}|\mathcal{J}(r)| }{
            2 \int_{r_0}^{\infty} \dd r r^{d}|\mathcal{J}(r)| }\left[1 +\frac{ L^2
        \omega(\infty) }{ 2r_0^{d} }
         + \ldots \right] \\
         &\leq \frac{ L }{ 2r_0^{d-q-1} }\left[1 + \mathcal{O}\left(\frac{
             \mathcal{R}^{d}\left<T_{tt}\right> }{ c_{\rm eff} } \right)
         \right].
    \end{aligned}
    \end{equation}
\end{proof}
\noindent Inserting now \eqref{eq:mainlem1q} into \eqref{eq:dXdt}, we get
\begin{equation}\label{eq:masterQpert}
\begin{aligned}
    \Big| \frac{ \dd }{ \dd t }\norm{X_t}\Big|_{t=0} 
    &\leq \frac{ 8\pi \norm{\partial R} L^{1+q}}{(d-1)c_{\rm eff}
    }  \frac{ L^{2(d-q-1)} }{ r_0^{d-q-1} }\left<
    T_{tt} \right> \\
    &\leq \frac{ \norm{\partial R} L^{1+q}}{c_{\rm eff}
    } \eta_{d,q} \mathcal{R}^{d-q-1}\left<
    T_{tt} \right> \\
\end{aligned}
\end{equation}
where
\begin{equation}
\begin{aligned}
    \eta_{d, q} = \frac{ 8\pi }{ d-1 }\left[\frac{ \Gamma\left(\frac{ 1 }{ 2(q+1) } \right) }{
        \sqrt{\pi}\Gamma\left(\frac{ q+2 }{ 2(q+1) } \right) }\right]^{d-q-1}.
\end{aligned}
\end{equation}
We can convert this to bounds on two-point functions and circular Wilson
loops. Combining \eqref{eq:masterQpert} with \eqref{eq:geoapprox} and
\eqref{eq:Wformula}, we get the bounds \eqref{eq:OOsmalx} and
\eqref{eq:Wsmallr}.

Finally, with $q=d-2$ and $d>2$, the entanglement entropy of small spheres is bounded as
\begin{equation}
\begin{aligned}
    \Big|\frac{ \dd S_R }{ \dd t } \Big| \leq
     \frac{2\sqrt{\pi} \Gamma\left(\frac{ 1 }{ 2(d-1) } \right)
     }{\Gamma\left(\frac{ d }{ 2(d-1) } \right) }
     \text{Vol}[R]\left<T_{tt}\right> + \ldots
     \label{eq:smallsphereent}
\end{aligned}
\end{equation}
where we used that $\text{Vol}[R]=\text{Area}[\partial R]\mathcal{R}/(d-1)$. The
proves the part of Theorem~\ref{thm:mainthm3} where $\partial R$ is a sphere.

\subsubsection*{Proof of bounds for thin-shell spacetimes}
Finally, let us prove our thin-shell results valid all for $\mathcal{R}$, assuming $2q\geq
d-2$. Since we already have general bounds for two-point
correlators in $d=2$ and Wilson loops in $d=4$, this section is mostly relevant for
entanglement in medium or large balls in general $d$, and for medium
and large Wilson loops in $d=3$. 
We consider the same setup as in
Sec.~\ref{sec:thinshellstrip}, and use the same notation. Again, we choose $r_0=L=1$. 

Now, let us consider the solutions \eqref{eq:omegaQ} and
\eqref{eq:KQ} with the replacement $H(r)\rightarrow H_L(r)$. As discussed in
Sec.~\ref{sec:matterfluxq}, this gives a smaller value for $\mu(\infty)$ and
larger value for $\Big|\lim r^{q-1}K_{\phi\phi}\Big|$ if $\mathcal{J}$ has a fixed sign,
which is the case here. Since we will consider
bounds of the form $\lim r^{q-1}K_{\phi\phi}\leq \# \mu(\infty)^{n}$ for 
$n>0$, the bounds we obtain with this replacement will be valid for the original spacetime.

Now, with a delta function shock, solution for $K_{\phi\phi}$ reads
\begin{equation}
\begin{aligned}
    K_{\phi \phi }(r) 
    &=  - \frac{ r^2 }{ d-1 }\eta \sqrt{ \frac{ \hat{r}^{2q+2} - 1
    }{ r^{2q+2} - 1 }  }\theta(r-\hat{r})
    = - \frac{ r^2 W }{ \sqrt{r^{2q+2} -  1}}\theta(r-\hat{r}).
\end{aligned}
\end{equation}
From \eqref{eq:muboundq} we get that the contribution to $\mu(\infty)$ from the
extrinsic curvature reads
\begin{equation}
\begin{aligned}
    Q(\infty) &\equiv \int_{r_0}^{\infty}\dd \rho \rho^{d-5}K_{\phi\phi}(r)^2
    \frac{ (d+2q)r^{2q+2} + (d-2) }{ r^{2q+2} - 1 } \\
    &= W^2 \frac{ \hat{r}^{d} }{ \hat{r}^{2q+2} - 1 }.
\end{aligned}
\end{equation}
The analysis of how the DEC changes across the shock is unchanged from the strip
case, except for a few exponents, and we find
\begin{equation}
\begin{aligned}
    \omega_+ = \omega_{-} + \frac{ \hat{r}^{d}
    }{\sqrt{\hat{r}^{2q+2}-1 } }W \left[\sqrt{ 1 - \frac{
        \omega_{-} }{ \hat{r}^{d} } }  - \frac{ W }{
        \sqrt{\hat{r}^{2q+2} - 1} }\right].
\end{aligned}
\end{equation}
By the same logic as in Sec.~\ref{sec:thinshellstrip} we get,
\begin{equation}
\begin{aligned}
    \frac{ W^{n} }{ \mu(\infty) } \leq \frac{ W^{n} }{
        \omega_{-} + W\frac{\hat{r}^{d}}{\sqrt{\hat{r}^{2q+2} -
        1}}\sqrt{1 - \frac{ \omega_{-} }{ \hat{r}^{d} }} } \equiv U_n
\end{aligned}
\end{equation}
where 
\begin{align}
    0 &\leq \omega_{-} \leq \hat{r}^{d}, \\
    W &\leq \sqrt{r^{2q+2} - 1}\sqrt{1 - \frac{ \omega_{-} }{
        \hat{r}^{d} }}. \label{eq:Wmaxq}
\end{align}
Again, it now suffices to take $\omega_-$ at the boundary of its allowed domain.
With $\omega_-=0$ and $1 \leq n \leq \frac{ d }{ q+1 }$ we get
\begin{equation}
\begin{aligned}
    U_n = \frac{ W^{n-1} \sqrt{\hat{r}^{2q-2}
    - 1}}{ \hat{r}^{d} } \leq \frac{ [r^{2q+2} - 1]^{n/2}  }{ \hat{r}^{d} } \leq
    1,
\end{aligned}
\end{equation}
and for $n=1$ we get the stronger bound
\begin{equation}
\begin{aligned}
    U_1 \leq \sqrt{\frac{ q+1 }{ d }}\left( \frac{ d }{ d-1-q } \right)^{\frac{
        q + 1 - d}{ 2(q+1) }} \equiv \alpha_{d, q}.
\end{aligned}
\end{equation}
For saturation of \eqref{eq:Wmaxq}, neglecting the first $\omega_-$ in the
denominator of $U_n$ and using that $W\leq  \sqrt{\hat{r}^{2q+q}-1}$, we get
\begin{equation}
\begin{aligned}
    U_n \leq \frac{ [r^{2q+2} - 1]^{n/2} }{
        \hat{r}^{d} } \leq 1.
\end{aligned}
\end{equation}
Restoring factors of $L, r_0$, we have the following general bounds
\begin{align}
    W &\leq \alpha_{q, d}\frac{ L }{ r_0^{d-q-1} }\omega(\infty),
    \label{eq:U1bndq}\\ 
    W &\leq L^{\frac{ 2q+2-d }{ d }} \omega(\infty)^{\frac{ q+1 }{ d }}.\label{eq:Unbndq}
\end{align}

Combining \eqref{eq:U1bndq} with \eqref{eq:dXdt} and \eqref{eq:qbound} now gives that
\begin{equation}
\begin{aligned}
    \Big|\frac{ \dd }{ \dd t }\norm{X_{t}}\Big| \leq \kappa_{d, q}\norm{\partial
    R} L^{1+q}\mathcal{R}^{d-q-1}\frac{ \left<T_{tt}\right> }{ c_{\rm eff} },
\end{aligned}
\end{equation}
where
\begin{equation}\label{eq:kappadq}
\begin{aligned}
    \kappa_{d, q} = 2\alpha_{d,q }\eta_{d,q} = \frac{16\pi}{d-1}\sqrt{\frac{ q+1 }{ d }}\left( \frac{ d }{ d-1-q } \right)^{\frac{
        q + 1 - d}{ 2(q+1) }}\left[\frac{ \Gamma\left(\frac{ 1 }{ 2(q+1) } \right) }{
        \sqrt{\pi}\Gamma\left(\frac{ q+2 }{ 2(q+1) } \right) }\right]^{d-q-1}.
\end{aligned}
\end{equation}
This shows that the type of bounds derived in the small $\mathcal{R}$ limit holds in
thin shell spacetimes for all $\mathcal{R}$, at price of a larger prefactor. 
For the entanglement entropy of balls, we get
\begin{equation}
\begin{aligned}
    \Big|\frac{ \dd S_R}{ \dd t }\Big| \leq 4\sqrt{\frac{ \pi(d-1)  }{
        d^{\frac{ d }{ d-1 }} }} \frac{
        \Gamma\left(\frac{ 1 }{ 2(d-1) } \right) }{
        \Gamma\left(\frac{ d }{ 2(d-1) } \right)
    }\text{Vol}[R]\left<T_{tt}\right>.
\end{aligned}
\end{equation}
For Wilson loops we get
\begin{equation}
\begin{aligned}
    \Big|\frac{ \dd }{ \dd t  }\log|\left< \mathcal{W}(C)\right>_{\rho(t)}|\Big|
    &\leq \frac{\sqrt{\lambda_{\rm eff}}}{c_{\rm eff}}\mathcal{R}^{d-1}\left<T_{tt}\right> \begin{cases}
        \frac{ \sqrt{128\pi}\Gamma\left(1/4\right) }{
            3^{3/4}\Gamma\left(3/4\right) } \approx 26  & d =3 \\
        \frac{ 8\Gamma(1/4)^2 }{ 3\Gamma(3/4)^{2} } \approx 23 & d =4 \\
\end{cases}.
\end{aligned}
\end{equation}

Next, consider \eqref{eq:Unbndq}. This gives us that
\begin{equation}\label{eq:specialqbnd2}
\begin{aligned}
    \Big|\frac{ \dd }{ \dd t }\norm{X_t}\Big|_{t=0} &\leq \norm{\partial R}
    L^{q+1}\left[\frac{ 16\pi}{ (d-1)c_{\rm eff}
    }\left<T_{tt}\right>\right]^{\frac{ q+1 }{  d}}
\end{aligned}
\end{equation}
For $q=d-2$, corresponding to entanglement for
ball subregions, this just gives \eqref{eq:shellnonlinearT},  verifying that it
holds for spheres as well, completing the proof of the part of Theorem~\ref{thm:mainthm4} concerning spherical $\partial
R$.

For $q=0, d=2$ and $q=1, d=4$ we just
reproduce the bounds of Sec.~\ref{eq:d2d4specialbnds}, which are anyway proven
with weaker assumptions there. For $q=1, d=3$ we get a new
bound on Wilson loops, given by \eqref{eq:d3Wilson}.

\section{Bounding Spatial Derivatives}\label{sec:spatialbnds}
The technology we have developed to bound time derivatives also lets us bound
spatial derivatives of extremal surface areas for strips. 

Consider a one parameter family of strips $R_{\ell}$ of variable width $\ell$ at
some fixed boundary time, given by
\eqref{eq:stripregiondef} with $t'$ now held fixed. Let $X_{\ell}$ be the corresponding one-parameter
family of HRT surfaces. A computation in appendix \ref{app:dSKphi} gives that
\begin{equation}\label{eq:dSr0}
\begin{aligned}
    \frac{ \dd }{ \dd \ell }\text{Area}[X_{\ell}]= \frac{ \text{Area}[\partial
    R] }{ L^{d-1} }r_0^{d-1}.
\end{aligned}
\end{equation}
For a strip, we thus see that the depth of the HRT surface tip uniquely determines
$\partial_{\ell} S$. Using our lower bound on $r_0$ given by \eqref{eq:r0bnd}, we now immediately get
the following:
\begin{thm}\label{thm:dSdl}
    Let $(\mathcal{M},g_{ab})$ be a regular asymptotically AdS$_{d+1\geq 3}$ spacetime with planar
    symmetry satisfying the DEC. If $X_{\ell}$ is the HRT surface of
    a strip $R_{\ell}$ of width $\ell$, then
    \begin{equation}\label{eq:dSdx}
        \begin{aligned}
        \frac{ \dd }{ \dd \ell }\left[\frac{ {\rm Area}[X_{\ell}] }{ 4G_N }
        \right] \geq  \frac{ c_{\rm eff} }{ 4\ell^{d-1} }{\rm Area}[\partial
        R]
        \left[\frac{ 2\sqrt{\pi} \Gamma\left(\frac{ d }{2(d-1) } \right) }{
            \Gamma\left(\frac{ 1 }{ 2(d-1) }\right) }\right]^{d-1} .
    \end{aligned}
    \end{equation}
\end{thm}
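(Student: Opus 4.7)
The plan is to combine two ingredients already established earlier in the paper: the explicit formula \eqref{eq:dSr0}, which expresses $\partial_\ell \text{Area}[X_\ell]$ purely in terms of the tip radius $r_0$, and the lower bound on $r_0$ given by Theorem~\ref{thm:r0bound}. Since \eqref{eq:dSr0} shows that $\partial_\ell \text{Area}[X_\ell]$ depends monotonically (and to the $(d-1)$-th power) on $r_0$, with all other factors being positive constants, a lower bound on $r_0$ in terms of $\ell$ immediately translates into the desired lower bound on $\partial_\ell \text{Area}[X_\ell]$. The proof thus reduces to straightforward algebra.

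Concretely, first I would invert the bound \eqref{eq:r0bnd} to read
\begin{equation*}
r_0 \;\geq\; \frac{2L^2 \sqrt{\pi}\,\Gamma\!\left(\tfrac{d}{2(d-1)}\right)}{\ell\,\Gamma\!\left(\tfrac{1}{2(d-1)}\right)}.
\end{equation*}
Since both sides are positive, I may raise them to the $(d-1)$-th power without flipping the inequality and substitute into \eqref{eq:dSr0}, obtaining
\begin{equation*}
\partial_\ell \text{Area}[X_\ell] \;\geq\; \frac{\text{Area}[\partial R]\,L^{d-1}}{\ell^{d-1}}\left[\frac{2\sqrt{\pi}\,\Gamma\!\left(\tfrac{d}{2(d-1)}\right)}{\Gamma\!\left(\tfrac{1}{2(d-1)}\right)}\right]^{d-1}.
\end{equation*}
Dividing by $4 G_N$ and using the definition $c_{\rm eff} = L^{d-1}/G_N$ then reproduces the claimed inequality \eqref{eq:dSdx} exactly.

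In this sense there is no main obstacle internal to the proof of Theorem~\ref{thm:dSdl} itself: the substantive work has already been done in establishing \eqref{eq:dSr0} (a boundary-term computation analogous to the time-derivative case \eqref{eq:Neta}, now with the spatial translation vector generating the variation of the strip width replacing $\eta^a = (\partial_t)^a$) and in establishing the tip bound \eqref{eq:r0bnd}, whose proof relied on the delicate Lorentzian/Riemannian Hawking mass interplay culminating in Lemma~\ref{lem:omega0}. Given those two ingredients, Theorem~\ref{thm:dSdl} is essentially a corollary, with the only nontrivial observation being that the dependence of $\partial_\ell \text{Area}[X_\ell]$ on the spacetime geometry is entirely captured by $r_0$, so that pointwise control of $r_0$ suffices.
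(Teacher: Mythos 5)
Your proposal is correct and follows exactly the paper's own argument: the paper derives \eqref{eq:dSr0} in appendix~\ref{app:dSKphi} and then states that Theorem~\ref{thm:dSdl} follows immediately from the tip bound \eqref{eq:r0bnd}, which is precisely the algebra you carry out. The substitution, the power $(d-1)$, and the identification $c_{\rm eff}=L^{d-1}/G_N$ all check out.
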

\noindent The lower bound is equal to
$\partial_{\ell}S_{\rm vacuum}$, and so we get
\begin{equation}\label{eq:dDSdx}
\begin{aligned}
    \frac{ \dd }{ \dd \ell }\Delta S[\rho_{{R_{\ell}}}]\geq 0
\end{aligned}
\end{equation}
where $\Delta S$ is the vacuum subtracted entropy. Since we get the vacuum
entanglement entropy in the limit $\ell \rightarrow 0$, this implies that
\begin{equation}\label{eq:dDSdx2}
\begin{aligned}
\Delta S \geq 0.
\end{aligned}
\end{equation}
It is easy to see that \eqref{eq:dDSdx} and \eqref{eq:dDSdx2} applies to a
subregion $R$ corresponding to a union of any number of finite width strips, with
$\partial_{\ell}$ now interpreted as the derivative with respect to increasing
width of one or more of the connected components.

For $d=2$, we also get a bound on correlators of heavy scalar single trace primaries. Working at a fixed moment of time with a
homogeneous state $\rho$, the combination of \eqref{eq:dSr0}, \eqref{eq:r0bnd} and
\eqref{eq:geoapprox} for $x>0$ gives
\begin{equation}\label{eq:dlnoo}
\begin{aligned}
    \frac{ \dd }{ \dd x }\ln \left<\mathcal{O}(x)\mathcal{O}(0) \right>_{\rho} 
    \leq \frac{ \dd }{ \dd x
    }\ln \left<\mathcal{O}(x) \mathcal{O}(0) \right>_{\rm vacuum} = -\frac{
        2\Delta }{ x}
\end{aligned}
\end{equation}
which means that correlations must die of faster than the vacuum for the states
and operators covered by our assumptions. This in particular implies that
\begin{equation}
\begin{aligned}
    \left<\mathcal{O}(x)\mathcal{O}(0)\right>_{\rho} \leq \left<\mathcal{O}(x)
    \mathcal{O}(0) \right>_{\rm vacuum},
\end{aligned}
\end{equation}
since we just get the vacuum correlator as $x\rightarrow 0$.

\section{Evidence for Broader Validity of Bounds}\label{sec:numerics}
In the previous sections, we have shown that the DEC allows us to prove several
general bounds on the growth of entanglement, correlators and Wilson loops.
However, the proofs crucially relied on the dominant energy condition. 
While the dominant energy condition holds in type IIA, IIB and
eleven-dimensional supergravity (see for example the appendix of \cite{EngFol21b}), it is typically violated after dimensional
reduction \cite{HerHor03}. The prototypical example is a scalar field dual to a relevant
CFT operator. This field has negative mass squared, leading to DEC violation. 

Even though our proofs assumed the DEC, we will now provide strong evidence for
a subset of the bounds that they hold when the DEC is violated 
in reasonable ways. That is, we provide evidence in scalar theories that
violate the DEC, but which have proven positive mass theorems \cite{Bou84,Tow84} (so pure AdS is
stable) and respect the null energy condition (NEC). This is evidence that our bounds are true 
even in CFTs with DEC-violating bulks, since the NEC and a stable vacuum are
both necessary conditions for sensible bulk theories.\footnote{If we choose
completely arbitrary bulk scalar theories, we have no positive mass theorem, so we
can have $\left<T_{tt}\right><0$ in a homogeneous state, and the
bounds are clearly violated. But in these situations there is no sensible
holographic dual. }

In fact, we provide evidence not just that
\begin{equation}
\begin{aligned}
    \mathfrak{R} \leq 1,
\end{aligned}
\end{equation}
but also that when $d>2$,
\begin{equation}
\begin{aligned}
    \mathfrak{R} \leq v_E^{\rm (SAdS)}+\delta v_E < 1
\end{aligned}
\end{equation}
for some small $\delta v_E$ that seems to depend on the scalar potential. Here
\begin{equation}\label{eq:vEsAdS}
\begin{aligned}
    v_{E}^{\rm (SAdS)} = \sqrt{ \frac{ d }{ d-2 } }\left(\frac{ d-2 }{ 2(d-1) }
    \right)^{\frac{ d-1 }{ d }}
\end{aligned}
\end{equation}
is the entanglement velocity computed in a quench in holography, with the final
state being neutral, dual to the AdS-Schwarzschild type black brane \cite{LiuSuh13a, LiuSuh13b}.

The theories we will consider are neutral scalars minimally coupled to gravity,
\begin{equation}
\begin{aligned}
    8\pi G_N\mathcal{L} = \frac{ 1 }{ 2 }R - \frac{ d(d-1) }{ 2L^2 } - \frac{ 1
    }{ 2 }|\dd \phi|^2 - V(\phi),
\end{aligned}
\end{equation}
where $V$ is negative somewhere, leading to violation of the DEC (but not the
NEC). These theories are common in consistent truncations and dimensional reductions of type IIA, IIB, and
eleven-dimensional supergravity \cite{CveDuf99,LuPop99,LuPop99b,CveLu99}. We consider these theories because, for standard forms of minimally coupled bosonic matter, 
neutral scalars appear to pose the biggest risk to our bounds. This is because gauge fields give
no direct contribution to $\mathcal{J}$, and they have a manifestly positive contributions to
the mass (they respect the DEC). 

For free theories where $V=\frac{ 1 }{ 2 }m^2\phi^2$, in order
to maximize the chance of violating our bounds, we choose potentials that are
close to ``maximally negative'', meaning  we pick $m^2$ just slightly above the Breitenlohner-Freedman
\cite{BreFre82,BreFre82b} bound:
\begin{equation}
\begin{aligned}
    m^2 L^2 \geq m_{\rm BF}^2 L^2 \equiv -(d/2)^2.
\end{aligned}
\end{equation}
It is known that if $m^2<m_{\rm BF}^2$, AdS is unstable, and so these theories cannot
be dual to CFTs with a Hamiltonian that is bounded from below. 
Additionally, to have an example of an interacting potential, in $d=3$ we
consider a top down potential that becomes exponentially negative for large
$|\phi|$.

\subsection*{The numerical method}
Let us now explain our procedure. For a given $V(\phi)$ and spacetime
dimension, we will construct an $n$-parameter family of initial data,
parametrized by coefficients $\{f_i\}_{i=1}^{n}$. The data will be provided on
an extended homology hypersurface of some HRT surface.
Then we will define the
function $\mathcal{A}\left(\{f_i\}\right)$ to be equal to the
ratio 
\begin{equation}
\begin{aligned}
\left|\lim_{r \rightarrow \infty}r^{d-3}K_{\phi\phi} \right|/\mu(\infty)^{\frac{ d-1 }{ d }}
\end{aligned}
\end{equation}
in the initial dataset specified by parameters $\{f_i\}$. Different initial
datasets correspond to different moments of time in different spacetimes (with
different sizes of $R$). The value of $\mathcal{A}$
in some particular initial dataset corresponds to the instantaneous entanglement
velocity $\mathfrak{R}$ in that configuration, and we will do a numerical
maximization of $\mathcal{A}$ with respect to the parameters $\{f_i\}$.
If we find that $\mathcal{A}$ is upper bounded, and that the upper bound is $\mathcal{A}_{\rm max}$, we have provided evidence that
\begin{equation}
\begin{aligned}
    \qty|\frac{ \dd  S_R}{ \dd t }| \leq \frac{ 1 }{ 4 } \mathcal{A}_{\rm max} {\rm Area}[\partial 
    R]c_{\rm eff} \left[\frac{ 16\pi }{ (d-1)c_{\rm eff}
    }\left<T_{tt}\right> \right]^{\frac{ d-1 }{ d }}.
\end{aligned}
\end{equation}
If $\mathcal{A}$ is not upper bounded, or if $\mathcal{A}_{\rm max}>1$, we have
a counterexample to $\mathfrak{R}\leq 1$ in the theory under consideration.

We will also evaluate the function $\mathcal{B}(\{f_i\})$, which we define as the value of
\begin{equation}
\begin{aligned}
    \frac{ |\partial_t S_R|}{ \text{Vol}[R]\left<T_{tt}\right> } 
= \frac{ 4\pi }{ d-1 }\frac{ L }{ \ell } \frac{ \qty|\lim r^{d-3}K_{\phi\phi} |
    }{ \mu(\infty)}
    \times
    \begin{cases}
    2 & d=2  \\
    1 & d>2 
    \end{cases}
\end{aligned}
\end{equation}
for any given initial dataset. By
the same logic as earlier, if $\mathcal{B}$ is upper bounded by $\mathcal{B}_{\rm max}$, we
have evidence that
\begin{equation}
\begin{aligned}
 \qty|\frac{ \dd }{ \dd t }S[\rho_R(t)]|\leq \mathcal{B}_{\rm max}
    \text{Vol}[R]\left<T_{tt}\right>.
\end{aligned}
\end{equation}
If $\mathcal{B}$ is not upper bounded, then our volume-type bounds break without
the DEC.

Assuming we work with strips $R$, for a single evaluation of $\mathcal{A}$ and $\mathcal{B}$, we need to numerically solve the ODEs 
given by \eqref{eq:omegaeq} and \eqref{eq:Keq}. For simplicity we will
restrict to strips, since for spheres we cannot solve for $\Phi(r)$
analytically. In this case we would need to solve a set of three coupled equations
instead. 

Let us now specify our family of initial data. An explicit computation gives that
\begin{equation}
\begin{aligned}
    \mathcal{E} &= \frac{ 1 }{ 2 }\dot{\phi}(r)^2 + \frac{ 1 }{ 2 }\left(\frac{ r^2 }{ L^2 } -
    \frac{ \omega(r) }{ r^{d-2} }\right)\phi'(r)^2 + V(\phi), \\
    \mathcal{J} &= \dot{\phi}(r) \phi(r),
\end{aligned}
\end{equation}
where $\dot{\phi} = t^a \nabla_a \phi|_{\Sigma}$. Specifying an initial dataset
now corresponds to specifying the two profiles $\phi(r)$ and $\dot{\phi}(r)$,
together with the initial value of $\omega(r_0)$. Letting 
\begin{equation}
\begin{aligned}
    \Delta = d/2+\sqrt{(d/2)^2 + m^2 L^2}
\end{aligned}
\end{equation}
be the scaling dimension of the CFT operator dual $O$ to $\phi$, the profiles we consider are 
\begin{equation}
\begin{aligned}
    \phi &= f_1 \exp[-\left(\frac{ r-f_2 }{ f_3 }\right)^2] + \frac{ f_4 }{
        r^{\Delta} } +\frac{ f_5 }{ r^{\Delta+2} }\\
    \dot{\phi} &= f_6 \exp[-\left(\frac{ r-f_7 }{ f_8 }\right)^2] + \frac{ f_9 }{
        r^{\Delta+1} } + \frac{ f_{10} }{
        r^{\Delta+3} } \\
\end{aligned}
\end{equation}
which gives a ten-parameter family of initial data. The gaussians give
localized lumps of matter, while the power law falloffs ensures that we can turn on
a VEV of $O$ in the CFT, with $\left<O\right>\propto
f_4$ and $\left<\partial_t O\right>\propto f_8$. Note that the
seemingly unusual $1/r^{\Delta+1}$ falloff in $\dot{\phi}$ is just caused by the fact that the
time derivative is with respect to a unit normal rather than the more standard
global time coordinate near the conformal boundary.

What remains is to pick $\omega(r_0)$. To minimize the CFT energy, we want $\omega(r_0)$ small. 
When the DEC holds, we know that AdS hyperbolicity implies that $\omega(r_0)\geq
0$, as proven in Lemma \ref{lem:omega0}.
However, without the DEC we can have that $\omega(r_0)$ is negative, but not arbitrarily
negative. If we pick $\omega(r_0)$ too negative, it will forbid an embedding of $\Sigma$ in a complete slice.
The difficulty is that how negative $\omega(r_0)$ can be depends on $\phi(r_0)$,
$\dot{\phi}(r_0)$, and $V(\phi)$. We will thus restrict to $\omega(r_0)=0$ and
relegate a more complete study of the future. Even with $\omega(r_0)=0$, it is
far from obvious if our results survive breaking of the DEC, as we can easily
obtain large regions of $\omega < 0$ even with $\omega(r_0)=0$.

Finally, we need to deal with invalid datasets. For a given scalar profile,
it could be that $\omega(r)$ overshoots $r^{d}/L^2$. In this case, the relevant
solution does not correspond to a spacelike hypersurface, and so it must be
discarded. In this case we conventionally define $\mathcal{A}=\mathcal{B}=0$.
Consequently, the functions we are maximizing will have discontinuities. 

We are now ready to proceed to the numerical results.

\subsection*{$d=2$}
We now consider a free massive scalar field with mass
\begin{equation}
\begin{aligned}
m^2 L^2 = 0.9m_{\rm BF}^2 L^2 = -0.9,
\end{aligned}
\end{equation}
dual to a relevant operator with $\Delta \approx 1.32$.
We do not consider saturation of the BF bound, since this requires modification
of the mass formula, and additionally causes $|\partial_t S|$ to be divergent
(for any $d$). Furthermore, we do not want to go too close to the BF bound, since then
$\omega(r)$ converges slowly at large $r$, and so the numerical maximization procedure
becomes prohibitively expensive.

Using now Mathematica's built in \Verb|NMaximize| function, trying all methods for
nonconvex optimization implemented in Mathematica and picking the best result, we find that
\begin{equation}
\begin{aligned}
    \mathcal{A}_{\rm max} \approx 0.999 \leq v_E|_{d=2} = 1.
\end{aligned}
\end{equation}
Thus, in $d=2$ we have evidence that \eqref{eq:mainthm1} holds without the DEC -- at
least in free tachyonic scalar theories.

Next, maximizing $\mathcal{B}$, we find that
\begin{equation}
\begin{aligned}
\mathcal{B}_{\rm max} \approx 3.29 \leq \kappa_{d=2} = 2\pi.
\end{aligned}
\end{equation}
This provides evidence that \eqref{eq:entgrowthbound2} holds when the DEC is
violated, and that the
$\mathcal{O}\left(\ell^{d}\left<T_{tt}\right>/c_{\rm eff}\right)$ corrections
are not needed,
even though we could not prove their absence outside thin
shell spacetimes. In fact, given the large gap between $\mathcal{B}_{\rm max}$
and $\kappa_{d=2}$, the numerical results suggest that our proofs might possibly be
sharpened.

\subsection*{$d=3$}
Now we consider two potentials:
\begin{equation}
\begin{aligned}
    V_{\rm I}(\phi) &= \frac{ 1 }{ 2 }\left(0.9m_{\rm BF}^2 \right)\phi^2, \\
    V_{\rm II}(\phi) &= 1 -\cosh\sqrt{2}\phi,
\end{aligned}
\end{equation}
with $\phi$ dual to operators with scaling dimensions $\Delta_{\rm I}\approx 1.97$ and
$\Delta_{\rm II} = 2$, respectively. 
The potential $V_{\rm II}$ comes from a consistent truncation and dimensional
reduction of
eleven-dimensional SUGRA on AdS$_{4} \times S^7$ \cite{CveDuf99}. We find
\begin{equation}
\begin{aligned}
    \mathcal{A}_{\rm I, max} &\approx 0.693, \\
    \mathcal{A}_{\rm II, max} &\approx 0.702. \\
\end{aligned}
\end{equation}
In both cases $\mathcal{A}_{\rm max}<1$, and so we have evidence that the conjectured bound
\eqref{eq:shellnonlinearT} is true -- even without the DEC and outside thin-shell spacetimes. 

Now, we have that
\begin{equation*}
\begin{aligned}
    v_E^{\rm (SAdS)} = \frac{ \sqrt{3} }{ 2^{4/3} } = 0.687\ldots
\end{aligned}
\end{equation*}
In both cases $\mathcal{A}_{\rm max}$ is close to $v_E^{\rm (SAdS)}$, although
it is slightly larger.  It seems possible that a stronger bound
\begin{equation}
\begin{aligned}
\mathfrak{R}\leq v_E^{\rm (SAdS)}+\delta v_E
\end{aligned}
\end{equation}
is true for some small $\delta v_E$ that potentially depends on the scalar
potential.

For $\mathcal{B}$ we find
\begin{equation}
\begin{aligned}
    \mathcal{B}_{\rm I, max} &\approx 1.71 \leq \kappa_{d=3}\approx 2.62, \\
    \mathcal{B}_{\rm II, max} &\approx 1.72.
\end{aligned}
\end{equation}
Again, there is a significant gap, with the implications being the same as for
$d=2$.

\subsection*{$d=4$}
We now consider
\begin{equation}
\begin{aligned}
    V(\phi) &= \frac{ 1 }{ 2 }\left(0.9m_{\rm BF}^2 \right)\phi^2.
\end{aligned}
\end{equation}
and find
\begin{equation}
\begin{aligned}
    \mathcal{A}_{\rm max} &\approx 0.643.
\end{aligned}
\end{equation}
Again we find evidence that \eqref{eq:introthm2} is true without the DEC
or outside thin-shell spacetimes.
We have
\begin{equation}
\begin{aligned}
    v_E^{\rm (SAdS)} = \frac{ \sqrt{2} }{ 3^{3/4} } = 0.620\ldots,
\end{aligned}
\end{equation}
and so that the instantaneous growth can be above $v_{E}^{(\rm SAdS)}$, but possibly only slightly so.

We also find
\begin{equation}
\begin{aligned}
\mathcal{B}_{\rm max} = 1.91 \leq \kappa_{d=4} \approx 2.43.
\end{aligned}
\end{equation}
Again, there is a significant gap, with the implications being the same as for
$d=2$.

\section{Discussion}\label{sec:discussion}
\begin{table}[t!]
\caption{Proven bounds on entanglement, spatial Wilson loops and equal-time correlators. 
    We suppress $O(1)$ numerical constants in the table. Dots mean corrections scaling as
    $\mathcal{O}(\ell^{d}\left<T_{tt}\right>/c)$ where $\ell$ is the
    relevant characteristic length scale, corresponding to strip width or ball
    radius. We abbreviate the effective central charge and 't Hooft coupling as
    $c$ and $\lambda$, respectively. For proof validity equal to quench+, we mean proofs valid for states dual to spacetimes 
    with thin-shell matter, which includes quenches as a subset. 
    }
    \label{tab:bounds}
\centering
    \begin{tabular}{ |l | l | l | l | l |}
    \hline
        {\boldmath $\partial_t S \leq $} & {\boldmath $d$} & {\boldmath
        \textbf{Region} $R$} &
        \textbf{Proof validity} & \textbf{Eq.}\\
    \hline
        \(\sqrt{\left<T_{tt}\right>/c}\) & $2$ & $n$ intervals & general & \eqref{eq:mainthm1}\\
        \(\text{Vol}[R]\left<T_{tt}\right>+\ldots\) & $\geq 2$ & small strip or ball & general & \eqref{eq:entgrowthbound} \\
        \(\text{Area}[\partial R][\left<T_{tt}\right>/c]^{(d-1)/d}\) & $ \geq  2$ & $n$ strips & quench+ & \eqref{eq:shellnonlinearT} \\
        \(\text{Area}[\partial R] [\left<T_{tt}\right>/c]^{(d-1)/d}\) & $\geq 2$ & ball & quench+ &  \eqref{eq:shellnonlinearT}\\
        \(\text{Vol}[R]\left<T_{tt}\right>\) & $\geq 2$ & strip or ball & quench+ & \eqref{eq:entgrowthbound2}\\
    \hline
        {\boldmath $\partial_{\ell} S \geq $} &  & &  &  \\
    \hline
        $\partial_{\ell}S_{\rm vacuum}$ & $\geq 2$ & $n$ strips & general &  \eqref{eq:dSdx}\\
    \hline
        {\boldmath$\partial_t  \ln\left<\mathcal{W}(C)\right> \leq $} & & \textbf{Length}{\boldmath$[C]$}  & & \\
    \hline
        \(\sqrt{\lambda} \text{Length}[C] \left[\left<T_{tt}\right>/c\right]^{1/2} \) & $4$ & any & general &  \eqref{eq:W4eff} \\
        \(\sqrt{\lambda} \text{Length}[C]^{d-1}\left<T_{tt}\right> + \ldots \) & $3,4 $ & small & general & \eqref{eq:Wsmallr} \\
        \( \sqrt{\lambda} \text{Length}[C]\left[\left<T_{tt} \right>/c\right]^{2/3} \) & $3$ & any & quench+ & \eqref{eq:d3Wilson} \\
    \hline
       {\boldmath $\partial_t
        \ln\left<O(x)O(0)\right> \leq $ } &
         & {\boldmath $|x|$}  & & \\
    \hline
        \( \Delta \sqrt{\left<T_{tt}\right>/c} \) & $2$ & any & general
        & \eqref{eq:finalCorrbnd}\\
        \( \Delta| x|\left<T_{tt}\right>/c+\ldots \) &
        $2$ & small & general & \eqref{eq:OOsmalx} \\
    \hline
       {\boldmath $\partial_{x}
        \ln\left<O(x)O(0)\right> \leq $ } &
         & & & \\
    \hline
       \(\partial_{x} \ln\left<O(x)O(0)\right>_{\rm vacuum}\) & $2$ & any & general
        & \eqref{eq:dlnoo} \\
    \hline
\end{tabular}
\end{table}
In this work we have proven several new upper bounds on the rate of change of
entanglement entropy, spacelike Wilson loops, and equal-time two-point functions
of heavy operators. The proofs apply for spatially homogeneous and isotropic states in strongly coupled CFTs with a holographic dual.
We summarize our bounds in table \ref{tab:bounds}.
We have also provided numerical evidence that the
bounds have broader validity than our proofs. We will now discuss our findings 
and possible future directions.

\textbf{A 2d QWEC:}
The bound \eqref{eq:introthm1} can also be seen as a quantum weak energy
condition (QWEC). Let $S$ be the entropy of be a single interval as a function of one
of the endpoints $p$, so that $\partial_t S$ now refers to
the change of $S$ under the perturbation of this single interval endpoint, rather than both. Then we have
\begin{equation}\label{eq:QWEC}
\begin{aligned}
    \left<T_{tt}\right> \geq \left<T_{tt}\right>_{\rm vac} + \frac{ 3 }{ 2\pi c
    }\left(\partial_t S \right)^2,
\end{aligned}
\end{equation}
while the classical weak energy condition implies that $T_{tt}\geq0$.
Equation~\eqref{eq:QWEC} closely resembles the conformal quantum null energy condition
(QNEC) \cite{Wal11,BouFis15,BouFis15b,AkeKoe16,KoeLei16,BalFau17} in two
dimensions.\footnote{See \cite{MezVir19} for a study on how the QNEC
constrains entanglement growth.}  Consider $2d$ Minkowski space, where $\left<T_{tt}\right>_{\rm
vac}=0$.
Letting $x^{\pm}$ be null coordinates, the conformal QNEC says that \cite{KoeLei16,Wal11}
\begin{equation}\label{eq:QNEC}
\begin{aligned}
    \left<T_{++}\right>|_{p} \geq \frac{ 1 }{ 2\pi }\partial_{+}^2 S + \frac{ 3 }{
        \pi c
    }\left(\partial_{+}S\right)^2.
\end{aligned}
\end{equation}
The structural similarity between \eqref{eq:QWEC} and \eqref{eq:QNEC} is obvious. While \eqref{eq:QWEC} 
does not contain a second derivative, it is in principle possible that \eqref{eq:QWEC} 
could be true also for inhomogeneous states, provided 
we include a term $a \partial_{t}^2 S$ to the right hand side for some fixed
constant $a$. In fact, the conformal QNEC suggests that
$a=(4\pi)^{-1}$, since in the special case of a half-space in a homogeneous state, 
where $\partial_x S=0$, the conformal QNEC and $T\indices{_\mu^{\mu}}=0$ implies
\begin{equation}
\begin{aligned}
    \left<T_{tt}\right>|_{p}\geq \frac{ 1 }{ 4\pi }\partial_t^2 S + \frac{ 3 }{ 2\pi
    c }\left(\partial_t S \right)^2.
\end{aligned}
\end{equation}

\textbf{Why do things fall?}
In \cite{Sus18} it was proposed that the process of gravitational attraction is
dual to the increase of complexity in the CFT. Assuming the complexity=volume
conjecture \cite{StaSus14}, this was given a precise realization in
\cite{BarJos19,BarMar20,BarMar21} (see also \cite{EngFol21}), where it was shown
 that the rate of change of the volume of a maximal volume
slice is given by the momentum integrated on the slice. However, our
formula 
\begin{equation}
\begin{aligned}
\frac{ \dd S_R }{ \dd t } = \int_{X}G t^{a}n^{b}\mathcal{T}_{ab}
\end{aligned}
\end{equation}
shows that change in entanglement can also be seen as directly responsible
for the radial momentum of matter. Thus, at present, ``the increase of entanglement''
seems like an equally good explanation for why things fall. 

\textbf{Relevant scalars, compact dimensions, and DEC breaking:}
Our proofs rely critically on the dominant energy condition -- almost all steps
of the proofs break without it. This rules out having scalars with negative
squared mass, which are dual to relevant operators in the CFT. 
Nevertheless, we found numerical evidence that the bounds hold true without the
DEC, as long as the scalar theories we consider allow a positive mass theorem,
so that AdS is stable and $\left<T_{tt}\right>$ is guaranteed to be positive.

However, there are other reasons to believe that our bounds remain true for these
theories beyond our numerical findings -- at least when working with top-down
theories. Consider working with a theory that is a dimensional reduction and
consistent truncation of type IIA, IIB, or eleven-dimensional SUGRA,
so that any solution can be lifted to solutions on asymptotically AdS$_{d+1}\times K$
spacetimes for some compact manifold $K$. These solutions will typically be
warped products rather than direct products, but there exists significant
evidence \cite{JonMar16} that the entropy computed by the HRT formula in the uplifted spacetime
agrees with the one computed in the dimensionally reduced spacetime -- even when
the product is not direct. But in the uplifted spacetime the DEC holds, since it holds for type II and
eleven-dimensional SUGRA. Thus, if our methods can be generalized to
work for warped compactifications over spherically symmetric AAdS$_{d+1}$ bases, this
appears to be an avenue to prove our bounds even with relevant scalars turned
on. The drawback is that the proofs might have to be carried out separately for
each family of compactifications.

\textbf{Strengthened bounds:}
Our proof that
\begin{equation}\label{eq:outrothm}
\begin{aligned}
    \Big|\frac{ \dd S_R }{ \dd t } \Big| \leq 
    \frac{1}{4}\text{Area}[\partial R] c_{\rm eff}\left[\frac{ 16\pi }{
        (d-1)c_{\rm eff} }\left<T_{tt}\right> \right]^{\frac{ d-1 }{ d }}, 
\end{aligned}
\end{equation}
which implies that $\mathfrak{R}\leq 1$ in neutral states, only applied to thin-shell spacetimes, which are dual to CFT states 
where all dynamics happen at a single energy scale (that evolves with time).
However, we gave numerical evidence that this bound also holds in
general planar symmetric spacetimes with extended matter profiles. A
natural extension of this work is trying to generalize the proof to include
this. This will likely require a better understanding of non-linearities of the
Einstein constraint equations.  

Next, we found that in our numerical maximization of $\mathfrak{R}$ over a ten-parameter family of initial
datasets in $d=3, 4$, that
\begin{equation}
\begin{aligned}
    \mathfrak{R} \leq v_E^{\rm (SAdS) } + \delta v_E,
\end{aligned}
\end{equation}
where 
\begin{equation}
\begin{aligned}
    v_{E}^{\rm (SAdS)} = \sqrt{ \frac{ d }{ d-2 } }\left(\frac{ d-2 }{ 2(d-1) }
    \right)^{\frac{ d-1 }{ d }}
\end{aligned}
\end{equation}
is the entanglement velocity computed in a holographic quench having a neutral
final state, and $\delta v_E$ a
small correction that seemed to depend on the scalar potential, but which was always
small for the theories we studied (less than $0.03$). This hints that it might be
possible to strengthen the prefactor in \eqref{eq:outrothm}.
Similarly, our numerics suggested that the prefactors of \eqref{eq:introthm3}
could be strengthened, and furthermore that this bound is true without
$\mathcal{O}\left(\ell^{d}\left<T_{tt}\right>/c_{\rm eff}\right)$ corrections.

\textbf{$1/N$ corrections:} 
It seems quite likely that $\mathfrak{R}\leq 1$ remains true with perturbative $1/N$
corrections. In fact, the pure
QFT proofs of $\mathfrak{R}\leq 1$ for large subregions \cite{HarJed15,CasLiu15} made no assumption
about large-$N$, so only intermediate and small subregions could be sources
violation. But for small subregions we showed that $\mathfrak{R}\leq
\mathcal{O}(\ell/\beta) \ll 1$
for $\beta$ the effective inverse temperature, which means perturbative $1/N$ corrections are unlikely to
pose a of danger.\footnote{The discovery of quantum extremal surfaces (QES) \cite{EngWal14} far from
classical extremal surfaces \cite{Pen19,AlmEng19} might make this argument
somewhat less convincing, but it seems unlikely that these dominate/exist for small
subregions, whose QES reside close to the conformal boundary.}
For intermediate sized regions things are less clear, but for $d=3,4$ we
numerically did not manage to push $\mathfrak{R}$ close to $1$, hinting that
$1/N$ corrections do not pose a danger in these dimensions.

\textbf{Disentangling the vacuum:} 
Under our assumptions, we have proven for strips that the vacuum-subtracted
entanglement is positive. Thus it is impossible to disentangle the vacuum in a
spatially uniform way without either breaking the existence of a holographic
dual, or turning on operators dual to fields that violate the DEC in the bulk.
In the scenario where classical DEC violation is caused by tachyonic
scalars only, this implies that all uniform states with less entanglement
than the vacuum must have non-zero VEV for some relevant scalar single trace primaries. 

\textbf{Finite coupling:}
Our bounds were proven at strong coupling, but it seems possible that our bounds
survive for arbitrary coupling. In \cite{CasLiu15} it was found that the entanglement
velocity of a free theory (for $d>2$) is strictly smaller than the holographic
strong coupling result, suggesting that dialing up the coupling increases the
capability of generating entanglement. 

\textbf{Primaries close to the unitarity bound:}
In order to turn on bulk fields dual to relevant CFT operators with scaling
dimensions $\Delta$ in the window
\begin{equation}
\begin{aligned}
\frac{ d-2 }{ 2 } \leq \Delta < \frac{ d }{ 2 },
\end{aligned}
\end{equation}
we must consider scalars with masses
\begin{equation}
\begin{aligned}
m_{\rm BF}^2 \leq m^2 < m_{\rm BF}^2 + 1/L^2,
\end{aligned}
\end{equation}
and turn on the slow falloffs rather than fast falloffs (see for example
\cite{HerHor04,HenMar04,HenMar06,HerMae14}). This leads to violation of the falloff assumptions \eqref{eq:falloffs}, and causes 
the ordinary definition of the spacetime mass to be divergent. Then neither of 
the Hawking masses reduce to the CFT energy at conformal infinity. 
Consequently, significant modifications of our proofs would be required. 
The same holds if we turn on sources that perturb us away from
a CFT. Things can get even more challenging, given that for some falloffs
$\partial_t S$ itself might become divergent \cite{MarWal16}. In this case we should only try to
bound finite quantities, like the mutual information or the renormalized entanglement
entropy \cite{LiuMez12,LiuMez13}, where these divergences cancel.

\textbf{Hartman-Maldacena surfaces and half-spaces:}
Our bounds are restricted to subregions that lie in a single
component of the conformal boundary. Our formalism can however readily be
modified to deal with HRT surfaces
corresponding to a CFT region $R$ with connected components in different CFTs,
so that the corresponding HRT surface threads a wormhole, like those studied in
\cite{Har13}.\footnote{See \cite{LiYan22} for a discussion of speed limits for
these surfaces in static black holes.} It would be interesting to see if the bounds remain unchanged. We also only considered strips of finite width, but it is clear
that our formalism can also handle half-spaces, where $\ell=\infty$. Since
\eqref{eq:introthm1} and \eqref{eq:introthm2} do not depend on $\ell$, they
almost certainly remain true for half-spaces.

\textbf{End-of-the-world branes:} Our bounds are proven with the assumption that
the Cauchy slices are complete manifolds, which rules out spacetimes with
end-of-the-world (EOW) branes. However, for HRT surfaces that do not end on the
EOW brane, our proofs survive as long as we assume that the Hawking mass of the brane is
positive, which ensures that Lemma~\ref{lem:omega0} remains true. In the case where the
surfaces end on the EOW brane, we expect the analysis to look similar to the
analysis for HRT surfaces threading a wormhole. 

\textbf{Vaidya has optimal entanglement growth: } There is a sense in
which Vaidya spacetimes are maximizing the growth of entanglement for some given
distribution of bulk energy density $\mathcal{E}(r)$, within the class DEC respecting planar spacetimes. In \eqref{eq:Ecallower}, we saw that the DEC imposed that the energy
density at a point was lower bounded by the momentum-density the same
point (up to a factor). As is clear from our proofs, excess energy density above the lower bound contributes to increasing
the CFT energy without increasing $\partial_t S$, and so we see that
spacetimes saturating the DEC have the greatest $\partial_t S$ compatible with
their distribution of energy density. But a direct computation shows that all 
Vaidya spacetimes saturate the DEC. This makes sense, since we have shown
that infalling matter leads to entanglement growth and outgoing matter
gives entanglement decrease, and in Vaidya no matter is wasted on being outgoing. While
this is no proof, it suggests that speed limits that hold in all Vaidya
spacetimes also hold in other planar-symmetric spacetimes respecting
the DEC.

\textbf{Inhomogeneous states:}
It would be interesting to prove bounds for inhomogeneous states,
but this appears to be a difficult task, given how crucial the bulk planar
symmetry was for our proofs. Furthermore, our bounds imply a strengthening of 
the positive mass theorem (PMT) for planar symmetry, and if generalizations of our results exist
without the uniformity assumption, it appears likely that these result will
strengthen the general PMT for AAdS spacetimes. Given how challenging it was to prove the
PMT \cite{SchYau81,Wit81}, and how the Penrose inequality, which is the most
famous strengthening of the PMT, still does not have a
general proof, this is a daunting task. 

\textbf{Bounds with charge:} 
It is a persistent finding that $U(1)$ gauge fields tend to slow down the growth 
of extremal surfaces of various dimensions \cite{LiuSuh13a,LiuSuh13b,Mez16,CarCha17}.
It thus seems plausible that our bounds can be strengthened by taking into
account nonzero charges in the CFT. 
It is suggestive that, in spherical symmetry, $U(1)$ gauge fields
contribute energy density, but they have no pure contribution to the momentum
density -- that is -- they only contribute to $\mathcal{J}$ through gauge covariant
derivatives acting on other matter fields. 

\textbf{Approximate bounds for large regions:}
To derive our bounds scaling with the volume of the region $R$, we used the
inequality
\begin{equation}\label{eq:r0uboundappendix}
\begin{aligned}
\frac{ L^2 }{ r_0 } \leq \eta_d \ell,
\end{aligned}
\end{equation}
where $\eta_d$ is a numerical constant. Assume now that we consider a state that has
volume law scaling for the entropy when $\ell$ is roughly above some length
scale $\beta$. For $\ell \gg \beta$, as we increase $\ell$, we expect $r_0$ to saturate at some
radius $r_{\rm barrier}$, and so \eqref{eq:r0uboundappendix} becomes a very poor bound. 
It seems likely that we can get stronger approximate bounds in this limit, by
relating $r_{\rm barrier}$ to an effective inverse temperature $\beta_{\rm eff}$, so that
we effectively get a bound like $|\partial_t S|\leq \# \text{Area}[\partial
R]\beta_{\rm eff} \left<T_{tt}\right>$. This bound might or might not reduce to \eqref{eq:outrothm}. 

\textbf{A numerical laboratory:} Our formalism makes it very easy to numerically
compute properties HRT surfaces in planar symmetric spacetimes and test various hypothesis about their
behavior without constructing full spacetimes and having to deal with the
evolution of the Einstein equations. We imagine this can be used as a laboratory
to learn more about the properties of HRT surfaces. 

\textbf{Other boundary geometries:} Except for $d=2$, our proofs always assumed
Minkowski space on the boundary. However, our bounds survive if we
compactify on a torus, so the boundary geometry is $\mathbb{R}\times T^{d-1}$,
provided we make a few additional assumptions. For the bounds of the type $|\partial_t S|
\leq \kappa \text{Vol}[R]\left<T_{tt}\right>+\ldots$ (and the similar bounds for
Wilson loops), we should always consider regions less than half the system size --
otherwise $\text{Vol}[R]$ should be replaced with the volume of the complement. For
bounds with multiple strips, if we have a torus, we need to make sure that the
entangling surfaces are all parallel, which happens automatically in Minkowski
due to the parallel postulate. 

Next, our proofs for single regions should imply
growth bounds for CFTs on the static cylinder $\mathbb{R}\times S^{d-1}$, as long as we take 
the regions to be very small compared to the curvature radius of the boundary sphere.
The results for balls will translate to results for small caps,
while the result for strips will translate to results for thin belts around the
equator.

\section*{Acknowledgments}
It is a pleasure to thank Sean Colin-Ellerin, Netta Engelhardt, Gary Gibbons,
Matt Headrick, Hong Liu, Dan Roberts, Jon Sorce, and Brian Swingle for
discussions. A.D. was supported by the University of Minnesota Doctoral
Dissertation Fellowship and by the National Science Foundation Graduate Research
Fellowship under Grant No. 00039202. The research of \AA{}.F. is
supported in part by the John Templeton Foundation via the 
Black Hole Initiative, NSF grant no. PHY-2011905, and an Aker
Scholarship. The research of \AA{}.F. was also supported in part by the
Heising-Simons Foundation, the Simons Foundation, and NSF grant no. PHY-1748958.

\appendix

\section{Appendix}\label{sec:appendix}
\subsection{The mean curvature of $X$ in $\Sigma$}\label{app:meancurv}
Let now $A, B, \ldots$ be indices for tensors on $X$, and $\alpha, \beta,
\ldots$ be indices for tensors on $\Sigma$, and consider intrinsic coordinates on $\Sigma$ and $X$
from Sec.~\ref{sec:Xlocsol}. The induced metrics are
\begin{equation}
\begin{aligned}
    \dd s^2|_{\Sigma} &= H_{\mu\nu}\dd x^{\mu}\dd x^{\nu} = B(r)\dd r^2 + r^2\left[\dd\phi^2 + \delta_{ij}\dd
    x^i \dd x^{j}\right] \\
    \dd s^2|_{X} &= \gamma_{AB}\dd y^A \dd y^B = \left[B(r)+r^2 \Phi'(r)^2\right]\dd r^2 + r^2 \delta_{ij}\dd
    x^i \dd x^{j}.
\end{aligned}
\end{equation}
A basis of tangent vectors to $X$ in $\Sigma$ is $\{e^{\alpha}_A\}$, with coordinate expressions
\begin{equation}\label{eq:tangents}
\begin{aligned}
    e^{\mu}_r &= (1, \Phi(r), 0), \\
    e^{\mu}_{i} &= (0, 0, \delta^{\alpha}_{i}).
\end{aligned}
\end{equation}
The normal to $X$ inside $\Sigma$ reads
\begin{equation}\label{eq:normalExpr}
\begin{aligned}
    n_{\mu} = \sqrt{ \frac{ B r^2 }{ B + r^2 (\Phi')^2 } }\left(\Phi'(r), -1,
    0\right)
\end{aligned}
\end{equation}
With this in hand, we can compute the mean curvature of $X$ in $\Sigma$:
\begin{equation}
\begin{aligned}
    \mathcal{K} &= \gamma^{AB}e^{\mu}_A e^{\nu}_B \nabla_{\mu}
    n_{\nu} \\
    &= \frac{ 1 }{ r \sqrt{B}\left[B+r^2 (\Phi')^2\right]^{3/2}}
    \left[ (d-1)r^3 \Phi'(r)^3 + \left(d B - \frac{ 1 }{ 2 }r B'\right)r \Phi' +
    r^2B\Phi''\right].
\end{aligned}
\end{equation}

\subsection{Explicit form of $K-n^{\alpha}n^{\beta}K_{\alpha\beta}$}\label{app:tExtremality}
Noting that
\begin{equation}
\begin{aligned}
    n^{\mu} = \sqrt{ \frac{ B r^2 }{ B + r^2 (\Phi')^2 }
    }\left(\frac{\Phi'(r)}{B(r)}, -\frac{ 1 }{ r^2 },
    0\right)
\end{aligned}
\end{equation}
we get
\begin{equation}
\begin{aligned}
    K &= H^{\alpha\beta}K_{\alpha\beta} = \frac{ 1 }{ B }K_{rr} + \frac{ d-1 }{
        r^2
    }K_{\phi\phi}, \\
    K_{\alpha\beta}n^{\alpha}n^{\beta} &= \frac{ Br^2 }{ B+r^2\Phi'(r)^2 }\left[
        \frac{ \Phi'(r)^2 }{ B^2 }K_{rr} +
    \frac{ 1 }{ r^4 }K_{\phi\phi}\right].
\end{aligned}
\end{equation}
Inserting $K_{rr}(r) = B(r) F(r)$ and doing some algebra,
$K-n^{\alpha}n^{\beta}K_{\alpha\beta}=0$ becomes \eqref{eq:extremalitycond}.

\subsection{Deriving formulas for $\partial_t S$ and $\partial_{\ell}S$ }\label{app:dSKphi}
\subsubsection*{A formula for $\partial_t S$}
In this section, we show that the entropy growth is proportional to the infalling matter flux. We will first need to prove that
\begin{equation}
    \begin{aligned}
        \dv{\text{Area}[X_t]}{t}\eval_{t=0} &= \int_{\partial X}N^a\eta_a = -\frac{\text{Area}[\partial R]}{L^{d-2}}\lim_{r\to\infty}r^{d-3}K_{\phi\phi}
    \end{aligned}
\end{equation}
For the calculation, we will construct the vector $\eta^a$, which is
tangent to the boundary $\partial \mathcal{M}$, and $N^a$, the outwards unit
normal to $\partial\Sigma=\partial\mathcal{M}\cap\Sigma$ in $\Sigma$, in a
coordinate system. To do so, introduce the ADM coordinates adapted to the
extended homology hypersurface
\begin{equation}\label{eq:AdMcoords}
    \dd s^2|_{\mathcal{M}} =  -r^2\dd \tau^2 +
    H_{\mu\nu}(\tau,x)\dd
    x^\nu\dd x^\nu,
\end{equation}
where we took the shift to be vanishing, and the lapse to be $r$. $x^{\mu}=(r,
\bm{x})$ are the coordinates on $\Sigma$, and $H_{\mu\nu}(\tau=0, x)$ its induced
metric, given by \eqref{eq:cancoords2}. The extinsic curvare of $\Sigma$ reads
\begin{equation}
\begin{aligned}
K_{\alpha\beta} = \frac{ 1 }{ 2r }\partial_{\tau}H_{\alpha\beta}|_{\tau=0}.
\end{aligned}
\end{equation}

Imagine now we have the coordinates $z^i=(t,{\bm x})$ on $\partial \mathcal{M}$
and take $\partial\Sigma$ to be located at $(r=r_c, t=\tau=0)$ with a temporary
cutoff $r=r_c$. We want to find embedding coordinates $(r(t),\tau(t))$ for $\partial \mathcal{M}$ such that the induced metric reads
\begin{align}
    \dd s^2|_{\partial \mathcal{M}} = h_{ij}\dd z^i \dd z^j =
    \frac{r_c^2}{L^2}\left[-\dd t^2 + L^2\dd \phi^2 + L^2\dd{\bm x}^2\right]
\end{align}
The components of the induced metric then satisfy 
\begin{align}
    h_{tt} &= g_{rr}\dot{r}^2 - r^2 \dot{\tau}^2 = -\frac{r_c^2}{L^2},\\
    h_{\phi\phi} &= g_{\phi\phi} = r_{c}^2.
\end{align}
Taking the derivative of the second equation, and then setting $t=0$ gives a
system of equations that is easily solved to give (see the appendix of \cite{EngFol21})
\begin{align}
    \dot{\tau}(0) &= \frac{1}{L\sqrt{1-B K_{\phi\phi}^2 r^{-2}}}\eval_{r=r_c},\\
    \dot{r}(0) &= -\frac{K_{\phi\phi}}{L\sqrt{1-B K_{\phi\phi}^2
    r^{-2}}}\eval_{r=r_c},
\end{align}
where we have chosen the branch $\dot{\tau}>0$. Thus, $\eta^a$ in our
ADM coordinate system reads
\begin{equation}
    \eta^a = (\partial_t)^a = \dot{\tau}(0) (\partial_t)^a + \dot{r}(0)(\partial_r)^a.
\end{equation}
Now, the tangents to $\partial X$ are tangent to $\partial R$, and the sole
remaining tangent vector $e_r^\alpha$ in \eqref{eq:tangents} is then the normal
to $\partial\Sigma$. Hence, up to a normalization $C$, 
\begin{equation}
\begin{aligned}
    N^{\mu} = C e^{\mu}_r = C(1, \Phi', 0),
\end{aligned}
\end{equation}
which can be unit normalized and pushed
forward to a spacetime vector yielding (in the coordinates \eqref{eq:AdMcoords}),
\begin{equation}\label{eq:Naexplicit}
\begin{aligned}
    N^{a}= \frac{ 1 }{ \sqrt{B(r) + r_c^2 \Phi'(r)^2} }(0, 1,
    \Phi'(r), 0)|_{r=r_c}.
\end{aligned}
\end{equation}
We can now compute the integral on the cutoff regulated $\partial X$:
\begin{equation}\label{eq:etaNcomp}
    \begin{aligned}
    \int_{\partial X}\eta^a N_a 
    &= - \frac{ K_{\phi\phi} }{ L\sqrt{1 - B(r_c) K_{\phi\phi}^2/r_c^2} } \times
        \frac{B(r_c)}{\sqrt{B(r_c) + r_c^2 (\partial_r \Phi)^2} } \times
        r_c^{d-2} \int \dd^{d-2}\bm{x} \\
    &= -\frac{ K_{\phi\phi}\sqrt{B(r_c)} }{ L\sqrt{1 - B(r_c)
        K_{\phi\phi}^2/r_c^2} } \times   \sqrt{1-(r_0/r_c)^2} \times r_c^{d-2}
        \frac{\text{Area}[\partial R]}{L^{d-2}},
    \end{aligned}
\end{equation}
where we have used the differential equation for the embedding function $\Phi(r)$. In the large $r$ limit, the asymptotic behaviors are
\begin{equation}
    B(r) \sim \mathcal{O}(r^{-2}), \hspace{.2in} K_{\phi\phi}\sim\mathcal{O}(r^{-(d-3)}).
\end{equation}
Taking the cutoff to the boundary, one finds the area growth to be given by
\eqref{eq:dAdt}.

\subsubsection*{A covariant formula for $\partial_t S$} 
Now let us write the intergral formula for $\partial_t S$ in a covariant way.
Letting $t^a$ be the future unit normal to $\Sigma$ and $n^a$ the outwards
normal to $X$ tangent to $\Sigma$, we have for some function $G$ on $X$ that only depends on $r$:
\begin{equation}
\begin{aligned}
    8\pi G_N \int_{X} G(r) t^{a}n^{b}\mathcal{T}_{ab} &= \int \dd^{d-2}\bm{x} \int_{r_0}^{\infty}\dd r
    r^{d-2}\sqrt{B + r^2 (\Phi')^2} G(r) n^{r} \mathcal{J} \\
    &= \frac{ \text{Area}[\partial R] }{ L^{d-2} }\int\dd rr^{d-2}\frac{
        \sqrt{B}r }{ B + r^2 (\Phi')^2 }\Phi' G(r) \mathcal{J} \\
    &= \frac{ \text{Area}[\partial R] }{ L^{d-2} }\int\dd rr^{d-2} \left(\frac{
        r_0}{ r } \right)^{2d-2} G(r) \mathcal{J}\sqrt{\left(r/r_0\right)^{2d-2} -1} \\
    &= \frac{ \text{Area}[\partial R] }{ L^{d-2} }\int\dd r \frac{ r_0^{d-1} }{
        r^{d}} G(r) \mathcal{J}\sqrt{r^{2d-2} -r_0^{2d-2}},
\end{aligned}
\end{equation}
where we used \eqref{eq:normalExpr} and \eqref{eq:phisol}. Letting
\begin{equation}
\begin{aligned}
G(r) = \frac{2\pi r^{d}}{(d-1)r_0^{d-1}},
\end{aligned}
\end{equation}
we get a covariant formula for the entropy growth
\begin{equation}
    \dv{S_R}{t} = \int_X G t^an^b\mathcal{T}_{ab}
\end{equation}

\subsubsection*{A formula for $\partial_{\ell}S$}
Consider a one-parameter family of HRT surfaces $X_{\ell}$
anchored at the strip region region $R_{\ell}$, given by \eqref{eq:stripregiondef}, but
letting now $\ell$ vary, holding $t$ fixed. Taking the vector field $\eta^a$
generating the flow of $\partial X_{\ell}$ to be $\eta^a|_{\partial
\mathcal{M}} = \frac{ 1 }{ L }(\partial_{\phi})^a$, \eqref{eq:Neta} becomes
\begin{equation}
\begin{aligned}
    \frac{ \dd }{ \dd \ell }\text{Area}[X_{\ell}] = \frac{ \text{Area}[\partial R] }{
    L^{d-1}} \lim_{r \rightarrow \infty} r^{d-2} g_{\phi\phi}N^{\phi}.
\end{aligned}
\end{equation}
Using \eqref{eq:Naexplicit}, \eqref{eq:AdMcoords}, and \eqref{eq:phisol}, this
evaluates to
\begin{equation}
\begin{aligned}
    \frac{ \dd }{ \dd \ell }\text{Area}[X_{\ell}] &= \frac{ \text{Area}[\partial R] }{
        L^{d-1}} \lim_{r \rightarrow \infty } r^{d} \frac{ \Phi' }{ \sqrt{B + r^2 (\Phi')^2}  }
    &= \frac{ \text{Area}[\partial R] }{
        L^{d-1}}r_0^{d-1}.
\end{aligned}
\end{equation}

\subsection{Expression for $\partial_t \norm{X_t}$}\label{app:dSKphi2}
The derivation of \eqref{eq:dXdt} is almost identical to the derivation in
Sec.~\ref{app:dSKphi}. Let us just highlight what must be changed. First, we do not
have an explicit formula for $\Phi'(r)$, but this does not matter, since
everything we need is its rate of falloff, which we can read off to be
$\Phi'(r) \sim \mathcal{O}\left(1/r^{q+2}\right)$ from \eqref{eq:Phiq}. Next, in
\eqref{eq:etaNcomp}, it is sufficient to replace $r_c^{d-2} \rightarrow
r_c^{q}$. After doing that, and taking into account that $K_{\phi\phi}$ now has
falloff $\mathcal{O}(1/r^{q-1})$, the $r_c \rightarrow \infty$ limit of
\eqref{eq:etaNcomp} with these modifications gives \eqref{eq:dXdt}.

\subsection{Geometric properties of $X$ and $\Sigma$}\label{app:nothroat}
Let us now prove various properties of extended homology hypersurfaces. We give
the proof for HRT surfaces of strips and comment how the proofs are modified for
$(q+1)$-dimension extremal surfaces anchored at spheres.

\begin{lem}\label{lem:appnothroat}
    The extended homology hypersurface $\Sigma$ of a strip region in a spacetime
    with planar symmetry cannot have a throat in its interior, i.e., a radius
    where $B(r)$
    diverges. 
\end{lem}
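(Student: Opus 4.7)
The plan is to exploit the $\phi$-translation Killing symmetry of the bulk to derive a first integral of the extremality equations on $\Sigma$, and then to use this conservation law to show that the radial coordinate $r$ is monotonic along the one-parameter family of planes making up $\Sigma$, which is equivalent to the absence of throats. The first step is to parameterize $X$ inside $\Sigma$ by an affine parameter $\lambda$ along its non-$\bm{x}$ direction and compute its area functional from the induced metric \eqref{eq:cancoords}; up to the trivial $\bm{x}$-volume the resulting Lagrangian is $\mathcal{L}\propto r^{d-2}\sqrt{B(r)(r')^2+r^2(\phi')^2}$, which is manifestly independent of $\phi$, so the Noether charge $p_\phi = r^d\phi'/\sqrt{B(r')^2+r^2(\phi')^2}$ is conserved along $X$. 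Evaluating $p_\phi$ at the tip, where $r=r_0$ and $r'=0$, fixes $|p_\phi|=r_0^{d-1}$, and solving the conservation law for $(r')^2$ yields
\begin{equation}
B(r)(r')^2 = r^2\bigl[(r/r_0)^{2d-2}-1\bigr](\phi')^2,
\end{equation}
whose right hand side is non-negative only when $r\ge r_0$. Hence $r\ge r_0$ everywhere on $X$, with equality precisely at the points where $r'=0$, i.e.\ at tips.

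Next I would show that $X$ has a unique tip. Since $p_\phi\neq 0$ forces $\phi'\neq 0$ along all of $X$, one may use $\phi$ as a parameter and integrate the conservation law outward from the tip, reproducing exactly the explicit expression \eqref{eq:phisol}, which is strictly monotonic in $r$ for $r>r_0$. Uniqueness of solutions to this first order ODE, together with the $\phi\mapsto -\phi$ reflection symmetry that $X$ inherits from the symmetric strip $\partial R$, forces $r(\phi)$ to be an even function with a single global minimum at $\phi=0$. Any hypothetical second tip at some $\phi_*\neq 0$ would, by the same first integral and the boundary condition $\Phi'=\infty$ there, launch a branch identical to the initial one; this would produce either a closed extremal surface with no boundary or additional connected components of $X$, both incompatible with the hypothesis that $X$ is an HRT surface anchored at a single strip on one connected component of $\partial\mathcal{M}$.

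Finally I would translate monotonicity along $X$ into absence of throats on $\Sigma$. By the uniqueness of $\Sigma$ established by planar symmetry (recalled in the footnote following the definition of the extended homology hypersurface), the leaves of $\Sigma$'s planar foliation are in bijection with the points of $X$ modulo $\phi\mapsto -\phi$, through the map sending each point of $X$ to the unique planar leaf through it. Monotonicity of $r$ along $X$ therefore becomes monotonicity of $r$ on the one-parameter family of planes making up $\Sigma$, so $r$ is a global coordinate on $\Sigma$ and $B(r)$ remains finite throughout the interior. The main obstacle is the uniqueness argument for the tip: one must cleanly exclude oscillatory multi-tip solutions of the extremal surface equation, which requires a careful local analysis near the regular singular point $r=r_0$ of the ODE for $\Phi$ and makes essential use of the explicit first integral to rule out matchings that would launch additional branches away from the original tip.
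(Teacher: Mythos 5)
Your first integral $B(r)(r')^2 = r^2\bigl[(r/r_0)^{2d-2}-1\bigr](\phi')^2$ is correct and is exactly the relation the paper's proof rests on (it is equivalent to $\Phi'(r)^2 = B(r)/\bigl(r^2[(r/r_0)^{2d-2}-1]\bigr)$, i.e.\ to \eqref{eq:phisol}). But the way you deploy it does not prove the lemma. The gap is in your final step: monotonicity of the area radius $r$ along $X$, or along the planar foliation of $\Sigma$, does not imply that $B$ is finite. A throat is precisely a locus where $r$ is momentarily stationary with respect to proper distance on $\Sigma$ while remaining weakly monotonic, i.e.\ where $dr/ds\to 0$ and hence $B=(ds/dr)^2\to\infty$; ``$r$ is a global coordinate'' and ``$B$ is finite'' are not the same statement. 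Worse, your intermediate claim that $r'=0$ on $X$ precisely where $r=r_0$ already presupposes the conclusion: at a hypothetical throat radius $r_T>r_0$ the product $B(r')^2$ can be finite and nonzero with $r'=0$ and $B=\infty$, so the first integral alone does not localize the zeros of $r'$ at the tip. Nowhere do you assume $B(r_T)=\infty$ for some interior $r_T$ and derive a contradiction, which is what the lemma requires.

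The missing idea is to read the first integral in the other direction. Suppose a throat exists at $r_T$ in the interior of $\Sigma$. Since $\Sigma$ by definition terminates at the plane tangent to the tip of $X$, the tip lies at $r_0<r_T$, so $X$ must pass through the throat, and a crossing must contain a point of $X\cap T$ where $X$ is not tangent to the plane $r=r_T$, i.e.\ where $|\Phi'(r_T)|<\infty$. But your own conservation law gives $\Phi'(r)^2=B(r)/\bigl(r^2[(r/r_0)^{2d-2}-1]\bigr)$, whose denominator is finite and nonzero at $r_T>r_0$, so $B(r_T)=\infty$ forces $|\Phi'(r_T)|=\infty$: $X$ can only meet the throat tangentially. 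That is the contradiction. (Separately, your uniqueness-of-the-tip discussion belongs to the subsequent lemma on turning points rather than to this one, and as written it leans on a heuristic about ``launching identical branches'' rather than the local sign analysis of $\Phi''$ against $\Phi'$ that actually rules out interior maxima of $r(\phi)$.)
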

\begin{proof}
Assume for contradiction that $\Sigma$ has a throat $T$ in its interior  -- if there are multiple, take $T$ to be the outermost one. 
Since $\Sigma$ by definition terminates at the plane tangent to the tip of $X$, this means that $X$ must pass beyond the throat. 
But if $X$ crosses the throat, there must be a point on $X \cap T$ where $X$ is not tangent to $T$. 
Let now $U \subset \Sigma$ be the region outside $T$, which we
can always cover with a coordinate system
\begin{equation}
\begin{aligned}
    \dd s^2|_{U} = B(r)\dd r^2 + r^2 \dd \bm{x}^2, \quad r \in [r_T, \infty), \quad B(r_T)=\infty, \quad B(r>r_T) < \infty.
\end{aligned}
\end{equation}
where the throat is at $r=r_T$. The fact that $X$ is not tangent to $T$ means that $|\Phi'(r_T)|<\infty$. 
    But the solution for the extremal surface reads
\begin{equation}
\begin{aligned}
    \Phi'(r)^2 = \frac{ B(r) }{ r^2(r^{2d-2} c - 1) }
\end{aligned}
\end{equation}
for some constant $c$, and so $|\Phi'(r_T)|=\infty$, which is a contradiction. Hence
$T$ cannot exist in the interior of $\Sigma$. 
\end{proof}
This proof goes through the case of $X$ anchored at a dimension $q$
sphere, as discussed in Sec.~\ref{sec:generaldim}, simply taking $\Phi'$ to be computed from \eqref{eq:Phiq}.

Next, we have the following:
\begin{lem}
The HRT surface $X$ of a strip region $R$ can only have one turning point.
    \end{lem}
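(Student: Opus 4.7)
The plan is to exploit the first integral of the extremality ODE \eqref{eq:planarAeq} that already underlies the explicit embedding formula \eqref{eq:phisol}. By Lemma~\ref{lem:appnothroat} together with Lemma~\ref{lem:cancoords}, all of $\Sigma$ is covered by a single coordinate patch of the form \eqref{eq:cancoords2} with $B(r)$ finite and positive throughout the interior, so any conservation law derived in these coordinates is globally valid on the image of $X$. Parametrising the projection of $X$ to the $(r,\phi)$ plane by $\phi$ and inverting the expression for $\Phi'(r)$ in \eqref{eq:phisol} yields the first-integral relation
\begin{equation}
(r'(\phi))^{2} \;=\; \frac{r^{2}\bigl[(r/r_{0})^{2d-2} - 1\bigr]}{B(r)},
\end{equation}
where the constant of motion $r_{0}$ is identified with the value of $r$ at any turning point (obtained as the continuous limit in which $\Phi' \to \infty$).

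A turning point of $r(\phi)$ is by definition a zero of $r'(\phi)$. Since $B(r)$ is finite and positive in the interior of $\Sigma$, the right-hand side above vanishes only at $r = r_{0}$, so every turning point of $X$ must sit at this same radius. For $r > r_{0}$ the right-hand side is strictly positive, so $r'(\phi)$ is nowhere vanishing off $r = r_{0}$ and, by continuity of $\sqrt{r'(\phi)^{2}}$, cannot change sign without first returning to $r_{0}$. Consequently, once $r(\phi)$ leaves the minimum value $r_{0}$ in a given direction, it continues monotonically in that direction all the way to the conformal boundary at $r = \infty$, and can never reverse to produce a second turning point.

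Combined with the $\phi \to -\phi$ reflection symmetry built into the coordinate choice for $\hat{\Sigma}$, which forces a turning point at $\phi = 0$, this establishes that $r(\phi)$ has a single minimum at $\phi = 0$ with $r = r_{0}$ and increases monotonically on each side out to the boundary. The step that I expect requires the most care is justifying that the first integral is globally valid along $X$ --- i.e.\ ruling out the possibility of an oscillatory solution in which $r$ revisits $r_{0}$ several times, which would require intermediate turning points at some $r > r_{0}$ --- but this is exactly what the strict positivity of the right-hand side above for $r > r_{0}$ forbids, given the global smoothness of $B(r)$ guaranteed by Lemma~\ref{lem:appnothroat}.
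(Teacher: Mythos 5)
Your proof is correct, but it takes a genuinely different route from the paper's. The paper argues locally: it supposes an interior turning point that is a local maximum of $r(\phi)$, inverts to $\Phi(r)$ in a one-sided neighbourhood, and reads off from the dominant balance of \eqref{eq:planarAeq} near $\Phi'\to\infty$ that $\Phi''$ and $\Phi'$ must have opposite signs there, contradicting the assumed maximum. You instead use the global first integral of \eqref{eq:planarAeq} --- the Noether charge $r^{d}/\sqrt{B\,r'^2+r^2}$ of $\phi$-translation invariance of the area functional on $\Sigma$ --- whose value at any turning point is $r_{\rm turn}^{d-1}$, forcing all turning points to a common radius and making $(r')^2$ strictly positive for $r>r_0$. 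Your approach is arguably cleaner for the strip: it handles degenerate turning points with $r'=r''=0$ automatically (any zero of $r'$ must sit at $r=r_0$, and $r\equiv r_0$ is not extremal absent a throat), whereas the paper has to defer that case to \cite{EngFis15}. The trade-off is that the conserved quantity exists only because of the exact $\phi$-translation symmetry of the strip problem; the paper's local sign argument generalizes directly to $(q+1)$-dimensional surfaces anchored on spheres, where the $\chi$-term in \eqref{eq:qexeq} destroys the first integral, and the paper explicitly relies on that generalization. Two small points to tighten: citing Lemma~\ref{lem:cancoords} is circular, since the single-turning-point claim is part of that lemma --- only the no-throat Lemma~\ref{lem:appnothroat} is needed and legitimately available; and the statement that the constant of motion is the same on every graph branch of $X$ deserves one explicit sentence (the charge is constant on each branch by the Euler--Lagrange equation and continuous across turning points, hence globally constant), since \eqref{eq:phisol} as written already presupposes the single-turning-point boundary condition you are trying to establish.
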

\begin{proof}
Assume for contradiction that $X$ has multiple turning points. Then there must be at least one turning
point of $X$ in the interior of $\Sigma$ that is a local maximum of the
    embedding function $r(\phi)$. Let $\phi=\phi_t$ where
this turning point occurs, and let us restrict our attention to a neighbourhood $\phi \in \mathcal{O}_{\epsilon} = (\phi_t, \phi_t +
    \epsilon)$, where we can invert $r(\phi)$ to get $\Phi(r)$, describing the
    embedding in $\mathcal{O}_{\epsilon}$. We see that $\Phi'(r)<0$ and $\Phi''(r)<0$ 
    in this neighbourhood. Now, the equation for the HRT surface in the neighbourhood $\mathcal{O}_{\epsilon}$ is
    \begin{equation}
        (d-1) r^3 \Phi'^3 + r^2 B \Phi ''+r \Phi ' \left(d B-\frac{r}{2} B'\right) = 0.
    \end{equation}
    Since we are in the interior of $\Sigma$, $B$ is bounded on
    $\mathcal{O}_{\epsilon}$ by Lemma~\ref{lem:appnothroat}. Since $\Phi'$ diverges at the
    turning point, the equation for $\Phi'$ near the turning point reads
    \begin{equation}\label{eq:turningMastereq}
    \begin{aligned}
        r^2 B \Phi'' = -(d-1) r^3 \Phi'^3\left[1 + \mathcal{O}\left( \frac{ 1 }{ (\Phi')^2 }\right)\right],
    \end{aligned}
    \end{equation}
    where the correction can be neglected to arbitrarily good precision since $B$ is bounded. But this implies that 
    $\Phi''$ and $\Phi'$ must have opposite signs in $\mathcal{O}_{\epsilon}$ for sufficiently small $\epsilon$, which
    is a contradiction. Hence $X$ can only have one turning point.
\end{proof}
We did not consider the case where $r'(\phi_t)=r''(\phi_t)=0$, but this was shown to
be ruled out by \cite{EngFis15}. Also, note that this proof survives the case of
spherical boundary anchoring and a $(q+1)$-dimensional extremal surface, since the new term
$(\Phi')^3 \chi(r)$ in the extremality equation \eqref{eq:qexeq} is subleading
at the prospective turning point, since it scales like $\chi \sim 1/\Phi'$.
Thus, \eqref{eq:turningMastereq} remains true (up to a numerical factor and an
$\mathcal{O}(1/\Phi')$ correction).

\subsection{General extremality conditions}\label{sec:codimqextremal}
Let
\begin{equation}
\begin{aligned}
    H_{ab} &= g_{ab} + t_{a}t_{b}
\end{aligned}
\end{equation}
be the induced metric on $\Sigma$. Then we have $h^{ab} = H^{ab}-
\delta^{IJ}n_{I}^a n_{J}^{b}$, and so 
\begin{equation}
\begin{aligned}
    \mathcal{K}^{0} &= (H^{ab} - \delta^{IJ}n_{I}^a n_{J}^{b}) \nabla_a t_{b} \\
    &= K  - \delta^{IJ}n_{I}^a n_{J}^{b} \nabla_{(a} t_{b)} \\
    &= K  - \delta^{IJ}n_{I}^a n_{J}^{b} K_{ab} \\
\end{aligned}
\end{equation}
where we in the last line used that $n^a_{I} n^b_{J}$ is tangent to $\Sigma$, which projects out the difference
$\nabla_{(a}t_{b)} - K_{ab}$. Now, a collection of tangents $e^{\mu}_I$ to $X$ are
\begin{equation}
\begin{aligned}
    e\indices{^{\mu}_r} &= (1, \Phi'(r), 0, 0), \\
    e\indices{^{\mu}_{i}}&= (0, 0, \delta^{\mu}_{i}, 0),
\end{aligned}
\end{equation}
where $i$ runs over sphere directions. The second to last slot here runs over
the sphere directions, with the last slot runs over the $\bm{z}$-directions.
The unit normals to $X$ (in $\Sigma$) are
\begin{equation}
\begin{aligned}
    n\indices{^{r}_{\mu}} &= (\alpha, \beta, 0, 0),  \\
    n\indices{^{i}_{\mu}} &= (0,0,0, r\delta_{\mu}^{y^{i}}),
\end{aligned}
\end{equation}
for some $\alpha, \beta$ we now work out, and where the coordinates are with
respect to the index $\mu$ on $\Sigma$. $r, i$ should be view as indices in the orthonormal
tangent basis labeled by $I$ on $X$.
Now
\begin{equation}
\begin{aligned}
    0 &= n\indices{^{r}_{\mu}}e\indices{^{\mu}_r} = \alpha + \beta \Phi'(r), \\
    1 &= H^{\mu\nu}n\indices{^{r}_{\mu}}n\indices{^{r}_{\nu}} =  \frac{ \alpha^2 }{ B} + \frac{
        \beta^2 }{ r^2 }.
\end{aligned}
\end{equation}
Solving for $\alpha, \beta$, we get
\begin{equation}\label{eq:explicitNormals}
    \begin{aligned}
        n\indices{^{r}_{\mu}} = \sqrt{\frac{ Br^2 }{
        B + r^2 (\Phi')^2 }}\left(\Phi', -1, 0, 0\right), \quad 
    n\indices{^{r\mu}} = \sqrt{\frac{ Br^2 }{
        B + r^2 (\Phi')^2 }}\left(\frac{ 1 }{ B }\Phi', -\frac{ 1 }{ r^2 }, 0,
    0\right).
\end{aligned}
\end{equation}
Now we want to impose 
\begin{equation}\label{eq:extCond}
\begin{aligned}
 \mathcal{K}^{0}= K - \delta^{IJ}n_I^a n_J^b K_{ab} = 0.
\end{aligned}
\end{equation}
Explicitly we have
\begin{equation}
\begin{aligned}
    K &= H^{\mu\nu} K_{\mu\nu} = \frac{ 1 }{ B }K_{rr} + \frac{ 1 }{ r^2
    }K_{\phi\phi} + \frac{ w^{ij} }{ r^2 \phi^2 } \times \underbrace{\phi^2
    w_{ij}  K_{\phi\phi}}_{K_{ij}} +
    \frac{d-2-q}{r^2} \underbrace{K_{\phi\phi}}_{K_{z z}}, \\
    &= \frac{ 1 }{ B }K_{rr} +  \frac{ d-1 }{ r^2 }K_{\phi\phi},
    \label{eq:Ktrgeneralq}
\end{aligned}
\end{equation}
and
\begin{equation}
\begin{aligned}
    \delta^{IJ}n^{a}_I n^{b}_J K_{ab} &= (n\indices{^{rr}})^2K_{rr} +
    (n\indices{^{r\phi}})^2
    K_{\phi\phi} + (d-2-q)\frac{ 1 }{ r^2 }K_{\phi\phi}.
\end{aligned}
\end{equation}
Thus, with $K_{rr}=BF$, condtion \eqref{eq:extCond} reads
\begin{equation}
\begin{aligned}
    \left[1 - B (n^r_r)^2\right]F + \left[\frac{ q+1 }{ r^2 } -
    (n^{\phi}_r)^2 \right]K_{\phi\phi} = 0
\end{aligned}
\end{equation}
Using the explicit formula for $n^r_r, n^{\phi}_r$ in
\eqref{eq:explicitNormals}, inserting the explicit formula for $\Phi(r)$
from \eqref{eq:Phiq}, and solving for $F(r)$, we find \eqref{eq:Fsolq}.

Note also that thanks to \eqref{eq:Ktrgeneralq}, \eqref{eq:thetapmform} is
unchanged, and so the proof that $\pm\theta_{\pm}[\partial \Sigma] \geq 0$ for
strips survive for dimension--$q+1$ surfaces anchored at $q$-spheres.

\bibliographystyle{jhep}
\bibliography{all}

\end{document}